\newcommand{\caphead}[1]{{\bf #1}}
\pgfplotsset{compat=1.18}
\def\be{\begin{equation}}
\def\ee{\end{equation}}
\def\bea{\begin{eqnarray}}
\def\eea{\end{eqnarray}}
\def\bma{\begin{mathletters}}
\def\ema{\end{mathletters}}
\def\P{{\cal P}}
\def\G{{\cal G}}
\def\p{\overline{\pi}}
\def\q0{\underline{0}}
\def\H{{\cal H}}
\def\Z{{\cal Z}}
\def\T{{\cal T}}
\def\P{{\cal P}}
\def\p{\mathbb{P}}
\def\C{{\mathbb C}}
\def\id{{\mathbb I}}
\def\M{{\cal M}}
\def\H{{\cal H}}
\def\t{\mathbb{T}}
\def\B{{\cal B}}
\def\R{\mathbb{R}}
\def\D{{\cal D}}
\def\N{\mathbb{N}}
\def\tr{\mbox{tr}}
\def\X{{\cal X}}
\def\one{\leavevmode\hbox{\small1\normalsize\kern-.33em1}}
\def\bra#1{\langle#1|} \def\ket#1{|#1\rangle}
\def\braket#1#2{\langle#1|#2\rangle}
\def\proj#1{\ket{#1}\!\bra{#1}}
\def\id{{\mathbb I}}
\newtheorem{definition}{Definition}
\newtheorem{theorem}{Theorem}
\newtheorem{remark}{Remark}
\newtheorem{lemma}[theorem]{Lemma}
\newtheorem{prop}[theorem]{Proposition}
\newtheorem{example}{Example}
\begin{document}
\title{Non-commutative optimization problems with differential constraints}
\author{Mateus Ara\'ujo}
\affiliation{Departamento de Física Teórica, Atómica y Óptica, and Laboratory for Disruptive Interdisciplinary Science (LaDIS), Universidad de Valladolid, 47011 Valladolid, Spain}
\author{Andrew J. P. Garner}
\affiliation{Institute for Quantum Optics and Quantum Information (IQOQI) Vienna\\ Austrian Academy of Sciences, Boltzmanngasse 3, Wien 1090, Austria}
\author{Miguel Navascu\'es}
\affiliation{Institute for Quantum Optics and Quantum Information (IQOQI) Vienna\\ Austrian Academy of Sciences, Boltzmanngasse 3, Wien 1090, Austria}

\begin{abstract}
Non-commutative polynomial optimization (NPO) problems seek to minimize the state average of a polynomial of some operator variables, subject to polynomial constraints, over all states and operators, as well as the Hilbert spaces where those might be defined. Many of these problems are known to admit a complete hierarchy of semidefinite programming (SDP) relaxations. In this work, we consider a variant of NPO problems where a subset of the operator variables satisfies a system of ordinary differential equations. We prove that, under mild conditions of operator boundedness, for every such problem one can construct a standard NPO problem with the same solution. This allows us to define a complete hierarchy of SDPs to tackle the original differential problem. We apply this method to bound averages of local observables in quantum spin systems subject to a Hamiltonian evolution (i.e., a quench). We find that, even in the thermodynamic limit of infinitely many sites, low levels of the hierarchy provide very good approximations for reasonably long evolution times.
\end{abstract}

\maketitle

\section{Introduction}
The goal of non-commutative polynomial optimization (NPO) is to minimize the state average of a polynomial of operator variables, with the condition that a number of polynomials, evaluated on said variables, result in positive semidefinite operators. The minimization is understood to take place both over all states and tuples of operators and the Hilbert spaces that those inhabit \cite{NPO}. Many important problems in quantum physics can be cast as an NPO problem, e.g.: bounding the maximum quantum value of a Bell inequality \cite{NPA}, estimating the ground state energy of a local Hamiltonian \cite{Nakata2} and, more recently, bounding local observables of many-body systems at zero \cite{araujo2024firstorder} or finite temperature \cite{fawzi2023certified} or in a stationary state \cite{steady_state, steady_state2}. Nowadays NPO theory is a standard tool in quantum information theory \cite{Brown_2021,brown2023deviceindependentlowerboundsconditional,Moroder_2013}.

There exist, however, important physical problems that cannot be formulated within the framework of NPO theory. Note that the time evolution of a quantum system, open or closed, is modeled by means of differential equations on non-commuting variables. Since one cannot express them as polynomial inequalities, NPO theory cannot enforce such `differential constraints'. Thus, several tasks in quantum mechanics, such as predicting the future behavior of a many-body system \cite{PRXQuantum.4.020340, rand_hamil}, or mitigating the readout error of a Noisy Intermediate Scale Quantum (NISQ) device \cite{Takagi_2022} can only be accomplished in very special scenarios (e.g.: one-dimensional many-body systems) \cite{Vidal_2004}.


To tackle these challenges, we introduce the notion of \emph{differential NPO (DNPO) problems}. Such are optimization problems over non-commuting variables where a subset of the variables, indexed by a continuous number $t$, satisfies a system of ordinary differential equations. We show that each DNPO problem can be relaxed to an auxiliary NPO problem, which can therefore be approached with the usual semidefinite programming hierarchies. Moreover, if the resulting NPO problem is Archimedean (has bounded variables, so to speak), then its solution equals that of the original DNPO. Since Archimedeanity is also a sufficient condition for the NPO SDP hierarchies to converge to the solution of the auxiliary NPO problem, it follows that the proposed hierarchy of SDP relaxations is also complete for the original DNPO problem.



To illustrate the power of our method, we apply it to solve a very natural problem in statistical physics: the determination of the local properties of quenched many-body systems. More specifically, picture a finite spin system, initially in the ground or Gibbs state of some known local Hamiltonian $H$, and let $O$ be any local operator. We switch on the local Hamiltonian $H'$ for a time $\tau$. What is the new value of $\langle O\rangle$? We show that this problem can be cast as a DNPO. Its $\kappa^{th}$ SDP relaxation outputs the quantities $u^\kappa, l^\kappa$, which satisfy $l^\kappa\leq \langle O\rangle \leq u^\kappa$, with $\lim_{\kappa\to\infty} u^\kappa = \lim_{\kappa\to\infty} l^\kappa=\langle O\rangle$, if the initial state is unique. This problem can be generalized to the thermodynamic limit of infinitely many particles, where one would demand the initial state and Hamiltonians $H,H'$ to be translation-invariant. As in the case of finite size, we provide a complete hierarchy of SDP relaxations. We test low levels of the two hierarchies and find that the resulting upper and lower bounds are very close to each other for reasonably long times.

This paper is structured as follows: in section \ref{sec:def_DNPOs}, we will define the basic DNPO problem and motivate it by formulating a recent quantum information task as such. In section \ref{sec:hier_section} we will provide an NPO relaxation for the general DNPO problem, and derive a hierarchy of semidefinite programming \cite{sdp} relaxations to tackle the latter. We will also state a sufficient criterion for this hierarchy to converge to the solution of the original DNPO problem. The performance of the hierarchy of SDPs will be tested with two numerical examples in section \ref{sec:examples}. In section \ref{sec:quench} we show how our method can be adapted to tackle the problem of characterizing the average of many-body local observables after a quench. Finally, in section \ref{sec:conclusion} we will present our conclusions.


\section{Differential non-commutative polynomial problems}
\label{sec:def_DNPOs}
Let $Y\equiv(Y_1,\ldots,Y_N)$ be a vector of non-commutative Hermitian variables and let $X(t) := (X_1(t),\ldots,X_M(t))$ be a collection of vectors of non-commutative variables parameterized by $t\in[0,1]$. Since the boundary conditions $X(0), X(1)$ will play a privileged role and do not depend on $t$, we include them among the variables $Y$. That is, $Y_k:=X_k(0)$, for $k=1,\ldots,M$, and $Y_k:=X_{k-M}(1)$, for $k=M+1,\ldots,2M$. Given the polynomials $f(t,X(t),Y)$, $\{g_i(t,X(t),Y)\}_{i=1}^{M}$, $\{p_k(t, X(t),Y)\}_{k=1}^K$, and $\{q_j(t,X(t),Y)\}_{j=1}^J$ we consider the following optimization problem $\mathbf{P}$:
\begin{align}
\mathbf{P}:\qquad
\begin{array}{rl}
p^\star:=\min \;&\psi\left(\int_{[0,1]} dt f(t,X(t),Y)\right),\\[3pt]
\mbox{such that }&p_k(t, X(t),Y)\geq 0, \quad\mbox{for }k=1,\ldots,K,\; t\in[0,1],\\[3pt]
&\int_{[0,1]}dt\psi(q_j(t,X(t),Y))\geq 0,\quad\mbox{for }j=1,\ldots,J,\\[3pt]
&\frac{d}{dt}X_i(t)=g_i(t,X(t), Y),\quad\mbox{for }i=1,\ldots,M,\; t\in[0,1]
\end{array}
\label{problem}
\end{align}
\noindent where the optimization takes place over all Hilbert spaces $\H$, all operators $Y_i, X_j(t)\in B(\H)$ and all normalized states $\psi:B(\H)\to\C$. We call problem (\ref{problem}) a \emph{differential non-commutative polynomial optimization (DNPO) problem}. If $f(t,X(t),Y)=f(Y)$, the integral over $t$ is superfluous, and the objective function is simply the average of a polynomial on the operator variables $Y$. The same holds for the polynomials $\{q_j\}_j$. If $M=0$, i.e., if there are no $t$-dependent operators $X(t)$, this problem is a non-commutative polynomial optimization (NPO) problem \cite{NPO}.

Throughout the whole article, we will always assume that Problem (\ref{problem}) has feasible points, i.e., that the polynomials $\{p_k\}_k\cup\{q_j\}\cup\{g_i\}_i$ are such that there exist a Hilbert space $\H$, operators $\{X_i(t), Y_j\}_{ij}\subset B(\H)$ and a state $\psi:B(\H)\to\C$ satisfying the problem constraints.

To motivate the study of DNPO problems, we next show that a recently proposed task in quantum information theory can be formulated as such. The authors of \cite{Jones_2026} consider the problem of characterizing the correlations generated by measuring a quantum system of known spin number $J\in \frac{1}{2}\mathbb{N}$ with a device that allows rotating the system. Namely, the observed measurement statistics is of the form 
\begin{equation}
P(a|\vec{\theta})=\tr\{M_a U(\vec{\theta})\rho U(-\vec{\theta})\},
\label{rotation_box}
\end{equation}
where $(M_a)_{a=1}^A$ is a projective measurement, $\rho\in B(\H)$ is a quantum state and $U(\vec{\theta}):=e^{-i\sum_{k=1}^3\theta_kS_k}$, for some self-adjoint operators $S_1,S_2,S_3$ satisfying the relations:
\begin{align}
&\sum_{k=1}^3S_k^2=J(J+1)\id,\nonumber\\
&[S_1, S_2]=iS_3,[S_2, S_3]=iS_1,[S_3, S_1]=iS_2.
\end{align}
Neither the state nor the projective measurement nor the Hilbert space where they are defined are fixed. The set of all such ``rotation boxes'' is dubbed $Q_J$. For finitely-many vectors of angles $\zeta=\{\vec{\theta}^j\}_j$, deciding whether a finite collection of observed measurement statistics $\{P(a|\vec{\theta}^j)\}_{j=1}^m$ admits a representation of the form (\ref{rotation_box}) is an arduous task. In fact, the authors of \cite{Jones_2026} only manage to solve this problem for $A=2$ and two-angle scenarios $\zeta=\{(0,0,0), (0,0, \theta)\}$. 

Characterizing $Q_J$ is dual to conducting linear optimizations over this set. For some finite set of angles $\zeta=\{\vec{\theta}^j\}_{j=1}^m$, consider therefore the following problem:
\begin{align}
&\min_P \sum_{a=1}^A\sum_{j=1}^mf_{a,j}P(a|\vec{\theta}^j)\nonumber\\
\mbox{such that }&P(a|\vec{\theta})\in Q_J.
\end{align}
This problem can be reformulated as:
\begin{align}
&\min \psi\left(\sum_{a=1}^A\sum_{j=1}^mf_{a,j} U_j(1)^\dagger M_a U_j(1) \right)\nonumber\\
\mbox{such that }&\sum_{k=1}^3S_k^2=J(J+1)\id,\nonumber\\
&M_aM_b-M_a\delta_{ab}=0,a=1,...,A,\sum_a M_a=\id,\nonumber\\
&[S_1, S_2]=iS_3,[S_2, S_3]=iS_1,[S_3, S_1]=iS_2,\nonumber\\
&U_j(t)U_j(t)^\dagger=U_j(t)^\dagger U_j(t)=\id,j=1,...,m,\nonumber\\
&[U_j(t),\vec{\theta}^j\cdot \vec{S}]=0,j=1,...,m,\nonumber\\
&\frac{dU_j(t)}{dt}=iU_j(t)(\vec{\theta}^j\cdot \vec{S}),U_j(0)=\id,j=1,...,m.
\label{Markus}
\end{align}
Defining
\begin{align}
 X_j(t)&:=\frac{U_j(t)+U_j(t)^\dagger}{2},\mbox{ for } j=1,...,m,\nonumber\\
 X_j(t)&:=i\frac{U_j(t)-U_j(t)^\dagger}{2},\mbox{ for } j=m+1,...,2m,\nonumber\\
 Y_{4m+a}&:=M_a,\mbox{ for } a=1,...,A,\nonumber\\
 Y_{4m+A+k}&:=S_k,\mbox{ for } k=1,2,3,
\end{align}
the above problem is revealed to be a DNPO. If we could solve it or at least provide a good enough relaxation thereof, then we could use this knowledge to certify the randomness of the measurement outcome $a$, as the authors of \cite{Jones_2026} do in two-angle scenarios.

Notice also that
\begin{align}
&1+J(J+1)+3m-\sum_{j=1}^{2m}\left(X_j(t)^2+X_j(0)^2+X_j(1)^2\right)-\sum_a M_a^2-\sum_{k}S_k^2\nonumber\\
&=\frac{1}{2}\sum_{j=1}^m\left\{\left(1-U_j(t)U_j(t)^\dagger\right)+\left(1-U_j(t)^\dagger U_j(t)\right)+\left(1-U_j(0)U_j(0)^\dagger\right)+\left(1-U_j(0)^\dagger U_j(0)\right)\right.\nonumber\\
&\left.+\left(1-U_j(1)U_j(1)^\dagger\right)+\left(1-U_j(1)^\dagger U_j(1)\right)\right\}+\left(1-\sum_a M_a\right)+\left(\sum_a \left(M_a-M_a^2\right)\right)+ \left(J(J+1)\id-\sum_{k}S_k^2\right).
\end{align}
Due to the operator constraints, for any feasible point of Problem (\ref{Markus}) the right-hand side of the equation above vanishes. This implies, by the left-hand side, that any tuple of feasible operators $(X(t),Y)$ must be bounded. In the next section we will see that, for any DNPO with non-commuting variables whose norm can be bounded in this way, we can provide a complete hierarchy of SDP relaxations.

\section{A hierarchy of SDP relaxations for DNPO problems}
\label{sec:hier_section}

Problem (\ref{problem}) cannot be attacked by brute force. Note, indeed, that the optimization in (\ref{problem}) is over all Hilbert spaces $\H$, finite- or infinite-dimensional: a full parametrization of each operator appearing in (\ref{problem}) would therefore lead us to a problem with infinitely many real variables. In fact, some undecidable tasks can be reduced to the approximate resolution of the smaller class of NPO problems \cite{Fritz_2014}. Since there exist complete sets of relaxations for some such problems, it follows that there are no general and complete \emph{variational} algorithms to solve NPO problems and thus DNPO problems.

The previous argument does not preclude, however, the existence of complete sets of \emph{relaxations} for DNPO problems: finding these is the goal of this section. To achieve it, we take inspiration from the work of Lasserre et al. \cite{lasserre_ODE}. In this paper, it is shown that classical control problems can be reduced to polynomial optimization problems, which, in turn, can be tackled through hierarchies of convex relaxations \cite{lasserre, Parrilo2003}; more concretely, semidefinite programs (SDP) \cite{sdp}. 

Correspondingly, in the following lines we show how to relax problem $\mathbf{P}$ (\cref{problem}) to an NPO problem, for which there also exists a hierarchy of semidefinite programming (SDP) relaxations \cite{sdp}. This hierarchy is complete under the assumption that all the relevant operators are bounded~\cite{NPO}. This covers very interesting problems in quantum mechanics, such as the characterization of the set of rotation boxes (see problem (\ref{Markus}) in the previous section), the extrapolation of quantum measurement data (section \ref{sec:unitary}), and the computation of averages of local observables in many-body systems subject to a quench (section \ref{sec:quench}).

Let us then derive the promised NPO relaxation. Suppose, for some Hilbert space $\H$, that the operators $\{X_1(t), \ldots, X_M(t), Y_1, \ldots, Y_N\} \subset B(\H)$ satisfy the conditions of problem $\mathbf{P}$. On the extended Hilbert space $\tilde{\H}:= \H\otimes {\cal T}$, where ${\cal T}:={\cal L}^2[0,1]$ is spanned by the generalized vectors\footnote{Those are assumed to be orthogonal, i.e., $\braket{t}{u}=\delta(t-u)$.} $\{\ket{t}:t\in[0,1]\}$, we define the operators:
\begin{align}
\tilde{T} & :=\id_{\H}\otimes\int_{[0,1]} dt\;t\,\proj{t}_{{\cal T}}, \nonumber\\
\tilde{X}_i& :=\int_{[0,1]} dt\,X_i(t)\otimes\proj{t}_{{\cal T}},\nonumber\\
\tilde{Y}_j& := Y_j\otimes \id_{{\cal T}} \qquad \mathrm{for~}j =1,\ldots, N.
\label{eq:extended_operators}
\end{align}
We further define the vectors $\tilde{X} := (\tilde{X}_1,\ldots,\tilde{X}_M)$ and $\tilde{Y} := (\tilde{Y}_1,\ldots,\tilde{Y}_N)$. 

Note that, for any pair of functions $s_1, s_2:[0,1]\to B(\H)$, it holds that
\begin{equation}
\left(\int_{[0,1]} dt s_1(t)\otimes \proj{t}\right)\times \left(\int_{[0,1]} dt' s_2(t')\otimes \proj{t'}\right)=\int_{[0,1]} dt s_1(t)s_2(t)\otimes \proj{t}.
\end{equation}
For any polynomial $h(\tilde{T},\tilde{X},\tilde{Y})$, we thus have that
\begin{equation}
h(\tilde{T},\tilde{X},\tilde{Y})=\int_{[0,1]} dt\; h(t,X(t),Y)\otimes\proj{t}.   
\label{pol_corresp}
\end{equation}
For $k=1,\ldots,K$, the condition $p_k(t, X(t), Y)\geq 0$, for $t\in[0,1]$ is hence equivalent to the condition
\begin{equation}
p_k(\tilde{T}, \tilde{X}, \tilde{Y})\geq 0.
\label{p_corresp}
\end{equation}

Now, define the state $\omega\equiv \psi_{\H}\otimes \beta_{{\cal T}}$, where $\beta_{\cal T} \in B(\T)$ is any state assigning a uniform probability distribution over $t\in[0,1]$, e.g.: $\beta_{{\cal T}}(\bullet):=\int_{[0,1]^2} dtdt'\bra{t}\bullet\ket{t'}$.
From \cref{pol_corresp} it follows that
\begin{equation}
\omega\left( h(\tilde{T}, \tilde{X}, \tilde{Y})\right) = \int_{[0,1]}dt\;\psi\left(h(t,X(t),Y)\right).
\label{aver_corresp}
\end{equation}
Thus, we can re-express the objective function of problem $\mathbf{P}$ as
\begin{equation}
\omega\left(f(\tilde{T},\tilde{X},\tilde{Y})\right)=\psi\left(\int_{[0,1]} dt f(t,X(t),Y)\right),
\label{obj_corresp}    
\end{equation}
and all constraints of the form $\int_{[0,1]}dt\psi\left(q_j(t,X(t),Y)\right)\geq 0$ as 
\begin{equation}
\omega(q_j(\tilde{T},\tilde{X},\tilde{Y}))\geq 0.
\label{q_corresp}    
\end{equation}
From the definition of $\tilde{T}$, it follows that $\mbox{spec}(\tilde{T})=[0,1]$, which implies the extra operator inequality:
\begin{equation}
\tilde{T}-\tilde{T}^2\geq 0.
\label{t_corresp}
\end{equation}

It remains to enforce, by constraints on the new variables $\tilde{T}$, $\tilde{X}$, $\tilde{Y}$ and $\omega$, that $X(t)$ is the solution of the differential equation in problem $\mathbf{P}$. 
To do so, it will be convenient introduce a new set of asbtract, non-commutative variables: $\tilde{t},\tilde{x},\tilde{y}$. Those must not be interpreted as operators acting in a specific Hilbert space, but as the generators of an abstract $*$-algebra $\tilde{\p}$, or set of non-commutative polynomials.

Within the set of polynomials $\tilde{\p}$ we next define the `differentiation' linear map $\D:\tilde{\p}\to\tilde{\p}$ by induction on the degree of the monomial to which it is applied:
\begin{definition}
\label{def:differentiation}
The \textbf{differentiation} linear map $\D:\tilde{\p}\to\tilde{\p}$ satisfies
\begin{align}
\D(\tilde{t}) & = 1, \nonumber\\
\D(1) &=\D(\tilde{y}_k)=0 \quad\forall k,\nonumber\\
\D(\tilde{x}_i)&= g_i(\tilde{t},\tilde{x},\tilde{y}) \quad\forall i,\nonumber\\
\D(h h')& =\D(h)h'+h\D(h') \quad \forall h,h'\in \tilde{\P}.
\label{def_diff_map}
\end{align}
\end{definition}
Note that the differentiation map depends on the tuple of polynomials $g_1,...,g_M$ defining the differential equation in (\ref{problem}).

\begin{example}
Let $h_1=\tilde{t}^2\tilde{y}_8\tilde{x}_2\tilde{y}_9\tilde{x}_1+\tilde{x}_2(1)$. 
Then,
\begin{align}
\D h_1= 2\tilde{t}\tilde{y}_8 \tilde{x}_2 \tilde{y}_9 \tilde{x}_1 + \tilde{t}^2 \tilde{y}_8 g_2(\tilde{t},\tilde{x},\tilde{y}) \tilde{y}_9 \tilde{x}_1 + \tilde{t}^2 \tilde{y}_8 \tilde{x}_2 \tilde{y}_9 g_1(\tilde{t},\tilde{x},\tilde{y}).
\end{align}
\end{example}

If $\frac{d}{dt}X_i(t)=g_i(t,X(t), Y)$ for all $i$, then, for all polynomials $h$, it can be verified by induction on eq. (\ref{def_diff_map}) that:
\begin{align}
\label{eq:Dh}
(\D h)(\tilde{T}, \tilde{X}, \tilde{Y})=\int_{[0,1]} dt\,\frac{d}{dt}h(t,X(t),Y)\otimes\proj{t}.
\end{align}
By eq. (\ref{aver_corresp}), if we average the expression above on state $\omega$, we arrive at
\begin{align}
\omega\left(\D(h)(T,\tilde{X},\tilde{Y})\right)
& = \psi\left(\int_{[0,1]}dt\frac{d}{dt}h(t,X(t),Y)\right)=\psi\left(h(1,X(1),Y)\right)-\psi\left(h(0,X(0),Y)\right) \nonumber\\
& = \omega\left(h(1,\tilde{X}(1),\tilde{Y})\right) -\; \omega\left(h(0,\tilde{X}(0),\tilde{Y})\right),
\label{integral_conds}
\end{align}
where in the second equality we have invoked the fundamental theorem of calculus and the last equality follows from eq. (\ref{aver_corresp}) and the fact that the last two polynomials do not depend on $\tilde{T}$ or $\tilde{X}$.

Eqs. (\ref{p_corresp}), (\ref{obj_corresp}), (\ref{q_corresp}), (\ref{t_corresp}), and (\ref{integral_conds}) suggest the following relaxation $\mathbf{Q}$ of problem $\mathbf{P}$ (\cref{problem}):
\begin{align}
\mathbf{Q}:\qquad
\begin{array}{rl}
q^\star:=\min & \; \omega\left(f(\tilde{T},\tilde{X},\tilde{Y})\right),\\[2.5pt]
\mbox{such that }&[\tilde{T},\tilde{X}_i]=[\tilde{T},\tilde{Y}_j]=0\quad\forall i,j,\\[2.5pt]
&\tilde{T}-\tilde{T}^2 \geq 0, \\[2.5pt]
&p_k(\tilde{T}, \tilde{X}, \tilde{Y})\geq 0 \quad\mbox{for }k=1,\ldots,K,\\[2.5pt]
&\omega\left(q_j(\tilde{T},\tilde{X},\tilde{Y})\right)\geq 0 \quad\mbox{for }j=1,\ldots,J,\\
&\omega\left(\D(h)(\tilde{T},\tilde{X},\tilde{Y})\right) 
= \omega\left(h(\id,\tilde{X}(1),\tilde{Y})\right) -\;\omega\left(h(0,\tilde{X}(0),\tilde{Y})\right) \; \forall\, h\in \tilde{\p},
\end{array}
\label{problem_relax}
\end{align}
\noindent where the minimization is over the operators $(\tilde{T}, \tilde{X},\tilde{Y})$, the Hilbert spaces on which they act, and the normalized states $\omega$ in those spaces.

This is a non-commutative polynomial optimization (NPO) problem, as defined in \citet{NPO}, but with a countable (rather than finite) number of state constraints. Following \cite{NPO}, we next sketch how to derive a hierarchy of semidefinite programming (SDP) relaxations of problem (\ref{problem_relax}).

 The starting point to arrive at an SDP relaxation of (\ref{problem_relax}) is replacing the optimization over $\omega, \tilde{T}$, and $\tilde{Z}:=(\tilde{X},\tilde{Y})$ in Problem (\ref{problem_relax}) by an optimization over linear functionals $\tilde{\omega}:\tilde{\p}\to\C$. Intuitively, we wish $\tilde{\omega}$ to satisfy the relation
\begin{equation}
\tilde{\omega}(p(\tilde{t},\tilde{z})) = \omega(p(\tilde{T},\tilde{Z}))
\label{omega2omega}
\end{equation}
for all polynomials $p$.

How does relation (\ref{omega2omega}) constrain the functional $\tilde{\omega}$? For once, $\omega$ is normalized, so the relation above implies $\tilde{\omega}(1)=1$. Furthermore, $\omega$ is a state. Hence,
\begin{equation}
\tilde{\omega}(\mu\mu^\dagger)\geq0 \quad \forall \mu\in\tilde{\p}.
\label{omega_posi_def}
\end{equation}
Operator positivity constraints of the form $p(\tilde{T},\tilde{Z})\geq 0$ are modeled by demanding that
\begin{equation}
\tilde{\omega}(\mu p\mu^\dagger)\geq0  \quad \forall \mu\in\tilde{\p}.
\label{omega_ineq_def}
\end{equation}
Equality constraints of the form $h(\tilde{T},\tilde{Z})=0$ are similarly enforced by demanding $\tilde{\omega}$ to satisfy 
\begin{equation}
\tilde{\omega}(\mu h\nu)=0  \quad \forall \mu,\nu\in\tilde{\p}.
\label{omega_eq_def}
\end{equation}

Putting all together, we arrive at the problem:
\begin{align}
\begin{array}{rl}
\min & \; \tilde{\omega}\left(f(\tilde{t},\tilde{x},\tilde{y})\right),\\[2.5pt]
\mbox{such that }&\tilde{\omega}(1)=1,\tilde{\omega}(\mu\mu^\dagger)\geq 0,\forall \mu\in\tilde{\p},\\
&\tilde{\omega}(\mu [\tilde{t},\tilde{x}_i]\nu)=\tilde{\omega}(\mu [\tilde{t},\tilde{y}_j]\nu)=0\quad\forall i,j,\forall\mu,\nu\in\tilde{\p},\\[2.5pt]
&\tilde{\omega}(\mu(\tilde{t}-\tilde{t}^2)\mu^\dagger) \geq 0,\forall\mu\in\tilde{\p}, \\[2.5pt]
&\tilde{\omega}(\mu p_k(\tilde{t}, \tilde{x}, \tilde{y})\mu^\dagger)\geq 0 \quad\mbox{for }k=1,\ldots,K,\\[2.5pt]
&\tilde{\omega}\left(q_j(\tilde{t},\tilde{x},\tilde{y})\right)\geq 0 \quad\mbox{for }j=1,\ldots,J,\\
&\tilde{\omega}\left(\D(h)(\tilde{t},\tilde{x},\tilde{y})\right) 
= \tilde{\omega}\left(h(1,\tilde{x}(1),\tilde{y})\right) -\;\tilde{\omega}\left(h(0,\tilde{x}(0),\tilde{y})\right) \; \forall\, h\in \tilde{\mathbb{\p}}.
\end{array}
\label{problem_relax_functional}
\end{align}
In principle, this problem is just a relaxation of (\ref{problem_relax}). One can relax it further by placing limits on the degrees of the polynomials appearing above. More explicitly, for $k\in\N$, a sound relaxation of (\ref{problem_relax_functional}) is given by:
\begin{subequations}
\begin{align}
q^k:=\min & \; \tilde{\omega}\left(f(\tilde{t},\tilde{x},\tilde{y})\right),\\
\mbox{such that }&\tilde{\omega}(1)=1,\tilde{\omega}(\mu\mu^\dagger)\geq 0,\forall \mu\in\tilde{\p},\mbox{deg}(\mu)\leq k,\label{posi_SDP}\\
&\tilde{\omega}(\mu [\tilde{t},\tilde{x}_i]\nu)=\tilde{\omega}(\mu [\tilde{t},\tilde{y}_j]\nu)=0\quad\forall i,j,\forall\mu,\nu\in\tilde{\p},\deg(\mu)+\deg(\nu)\leq 2k-2,\label{vanish_comm}\\
&\tilde{\omega}(\mu(\tilde{t}-\tilde{t}^2)\mu^\dagger) \geq 0,\forall\mu\in\tilde{\p}, \deg(\mu)\leq k-1,\\
&\tilde{\omega}(\mu p_l(\tilde{t}, \tilde{x}, \tilde{y})\mu^\dagger)\geq 0 \quad\mbox{for }l=1,\ldots,K,\forall\mu\in\tilde{\p},\deg(\mu)\leq k-\left\lceil\frac{\deg(p_k)}{2}\right\rceil, \\[2.5pt]
&\tilde{\omega}\left(q_j(\tilde{t},\tilde{x},\tilde{y})\right)\geq 0 \quad\mbox{for }j=1,\ldots,J,\\
&\tilde{\omega}\left(\D(h)(\tilde{t},\tilde{x},\tilde{y})\right) 
= \tilde{\omega}\left(h(1,\tilde{x}(1),\tilde{y})\right) -\;\tilde{\omega}\left(h(0,\tilde{x}(0),\tilde{y})\right) \; \forall\, h\in \tilde{\p},\deg(h),\deg(\D(h))\leq 2k,
\label{SDP_diff}
\end{align}
\label{SDP_level_k}
\end{subequations}
where the optimization is over linear functionals $\tilde{\omega}$ defined on the set of polynomials in $\tilde{\p}$ of degree smaller than or equal to $2k$.

The above is a semidefinite program. To see why, first note that, due to the linearity condition, $\tilde{\omega}$ is fully specified by its value on the (finite) set of monomials of $\tilde{t},\tilde{z}$ of degree smaller than or equal to $2k$. That is, the number of optimization variables is finite. Second, again by linearity, to enforce conditions of the form (\ref{vanish_comm}) or (\ref{SDP_diff}), it is enough to consider $\mu,\nu,h$ to be monomials. Similarly, any positivity condition of the form 
\begin{equation}
\tilde{\omega}(\mu p\mu^\dagger)\geq0, \mbox{ for all } \mu\in \tilde{\p},\deg(\mu)\leq r,
\end{equation}
is equivalent to demanding the positive semidefiniteness of the matrix
\begin{equation}
M_{jk}(\omega,p,M):=\tilde{\omega}(o_j po_k^\dagger),
\end{equation}
where $\{o_j\}_j$ is the set of monomials of degree $r$ or smaller. The last matrix is called the \emph{localizing matrix} of the polynomial $p$. If $p=1$, it is called $\tilde{\omega}$'s moment matrix \cite{NPO}.

The sequence of SDP-computable values $(q^m)_m$ satisfies $q^1\leq q^2\leq\ldots\leq q^\star$. As shown in \cite{NPO}\footnote{In \cite{NPO}, only problems with a finite number of state constraints are discussed. However, the proof of primal convergence in \cite{NPO}[Theorem $1$] straightforwardly extends to scenarios with a countable number of state constraints.}, a sufficient condition for this sequence to converge to the solution $q^\star$ of problem $\mathbf{Q}$ is that the latter satisfies the Archimedean condition.

\begin{definition}[Archimedean condition / Sum of Squares]
\label{def:Archimedean}
Problem $\mathbf{Q}$ is said to satisfy the \textbf{Archimedean condition} if the polynomials 
\begin{equation}
\{e_k\}_k\ := \{p_j(\tilde{z})\}_j 
\cup \{\tilde{t}-\tilde{t}^2\}
\cup \{\pm i[\tilde{t},\tilde{x}_j]\}_j
\cup \{\pm i[\tilde{t},\tilde{y}_j]\}_j
\end{equation}
\noindent are such that
\begin{align}
&\lambda^2\id-\tilde{t}^2-\sum_{i} \tilde{x}_i^2-\sum_{j} \tilde{y}_j^2=\nonumber\\
&\sum_j s_j(\tilde{t},\tilde{z})^\dagger s_j(\tilde{t},\tilde{z})+\sum_{j,k} s_{jk}(\tilde{t},\tilde{z})^\dagger e_k(\tilde{t},\tilde{z})s_{jk}(\tilde{t},\tilde{z}),
\label{eq:Archimedean}
\end{align}
\noindent for some $\lambda\in \R$, and some polynomials $\{s_j\}_j$ and $\{s_{jk}\}_{j,k}$.

The right hand side of \cref{eq:Archimedean} is called a \textbf{sum of squares} (SOS) decomposition.
\end{definition}

Any polynomial $H(\tilde{t}, \tilde{z})$ admitting an SOS decomposition, evaluated on operators $\tilde{T},\tilde{Z}$ satisfying the constraints of problem $\mathbf{Q}$, results in a positive semidefinite operator. 
The Archimedean condition is essentially a condition on the operator positivity constraints that enforces feasible operators $\tilde{T}$, $\tilde{X}$ and $\tilde{Y}$ to be bounded. 

\begin{remark}
Any DNPO for which any set of feasible operators has norm bounded by $\gamma\in\R^+$ can be reformulated as an Archimedean DNPO, i.e., a DNPO leading to an Archimedean NPO relaxation. Indeed, if the constraints of the problem are such that, for any feasible point $(\H,\psi,X(t),Y)$, the relations $\|X_j(0)\|, \|X_j(1)\|, \|X_j(t)\|, \|Y_k\|\leq \gamma,\forall j,k$ hold, then we can add to the DNPO the redundant constraints:
\begin{align}
&\gamma^2-X_j(t)^2\geq 0,\mbox{ for } j=1,...,M,\nonumber\\
&\gamma^2-Y_k^2\geq 0,\mbox{ for } k=1,...,N.
\end{align}
The associated NPO will then be Archimedean, since
\begin{equation}
(M+N)\gamma^2+1-\tilde{t}^2-\sum_{i=1}^M\tilde{x}_i^2-\sum_{k=1}^N\tilde{y}^2_k=\sqrt{2}(\tilde{t}-\tilde{t}^2)\sqrt{2}+(1-\tilde{t})^2+\sum_{j=1}^{M+N}\left(\gamma^2-\tilde{z}_j^2\right).
\end{equation}

\end{remark}

Under the Archimedean condition, the hierarchy of SDP values $(q^m)_m$ thus converges to $q^\star$, the solution of Problem (\ref{problem_relax}). As it turns out, the Archimedean condition also implies that $q^\star=p^\star$. Hence, the proposed hierarchy of SDP relaxations (\ref{SDP_level_k}) is complete.

\begin{theorem}
\label{fund_theo}
Let problem $\mathbf{P}$ (\cref{problem}) be such that its NPO relaxation $\mathbf{Q}$ (\cref{problem_relax}) satisfies the Archimedean condition. Then, $p^\star=q^\star$. That is, the solutions of problems $\mathbf{P}$ and $\mathbf{Q}$ coincide.
\end{theorem}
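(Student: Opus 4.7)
The plan is to prove $p^\star = q^\star$ by establishing both inequalities. The easy direction $q^\star \leq p^\star$ is already implicit in the derivation preceding the theorem: for any feasible $(\psi, X(t), Y)$ of $\mathbf{P}$, the quadruple $(\omega, \tilde{T}, \tilde{X}, \tilde{Y})$ built via \cref{eq:extended_operators} together with $\omega = \psi \otimes \beta_{\cal T}$ is feasible for $\mathbf{Q}$ and has equal objective, by the identities \eqref{p_corresp}--\eqref{integral_conds} and \eqref{obj_corresp}. Hence $q^\star \leq p^\star$, irrespective of Archimedeanity.

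The nontrivial direction $p^\star \leq q^\star$ is where the Archimedean hypothesis enters crucially. Given any feasible point $(\omega, \tilde{T}, \tilde{X}, \tilde{Y})$ of $\mathbf{Q}$, I would first pass to the GNS representation so that all operators act boundedly on some Hilbert space $\H$, with a uniform norm bound $\lambda$ inherited from the sum-of-squares decomposition \eqref{eq:Archimedean}. Since $\tilde{T}$ commutes with every $\tilde{X}_i$ and $\tilde{Y}_j$ and satisfies $\tilde{T} - \tilde{T}^2 \geq 0$ (so $\mathrm{spec}(\tilde{T}) \subseteq [0,1]$), the spectral theorem yields a direct-integral disintegration $\H = \int^{\oplus}_{[0,1]} \H_t\, d\mu(t)$, where $\mu$ is the spectral distribution of $\tilde{T}$ under $\omega$. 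In this decomposition $\tilde{T}$ acts as multiplication by $t$, while $\tilde{X}_i$ and $\tilde{Y}_j$ become decomposable operators with measurable fibers $X_i(t), Y_j(t)$, and the state $\omega$ correspondingly splits as $\omega = \int \omega_t\, d\mu(t)$.

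The pointwise operator inequalities $p_k(\tilde{T},\tilde{X},\tilde{Y}) \geq 0$ then descend fiberwise to $p_k(t, X(t), Y(t)) \geq 0$ for $\mu$-a.e.\ $t$, and setting $\psi := \int \omega_t\, d\mu(t)$ produces a matching $\mathbf{P}$ state. The crux of the argument is to extract a genuine ODE from the integral identity \eqref{integral_conds}. Testing this identity with polynomials of the form $h(\tilde{t},\tilde{x},\tilde{y}) = \phi(\tilde{t}) r(\tilde{x}, \tilde{y})$ for every univariate polynomial $\phi$ and every polynomial $r$, applying an integration-by-parts argument in the disintegrated form, and invoking Weierstrass density of polynomials in $C[0,1]$ should yield simultaneously (i) the boundary matching of $X_i(0), X_i(1)$ with the $\tilde{Y}$-variables that nominally represent them, and (ii) the weak pointwise identity $\frac{d}{dt}\omega_t(r(X(t), Y(t))) = \omega_t(\D(r)(t, X(t), Y(t)))$ for every polynomial $r$. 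Uniform boundedness of all operators (once again from Archimedeanity) then bootstraps this weak ODE into the strong operator-norm relation $\frac{d}{dt} X_i(t) = g_i(t, X(t), Y(t))$, and the objective values match by the fiber version of \eqref{obj_corresp}, giving $p^\star \leq q^\star$.

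The principal obstacle I anticipate lies in the last stage: the disintegration naturally produces operators $X_i(t)$ and $Y_j(t)$ living on distinct fibers $\H_t$, whereas a feasible $\mathbf{P}$ triple requires a single ambient Hilbert space on which $Y$ is strictly time-independent. Reconciling the two demands either a measurable trivialization identifying all fibers with a common Hilbert space, or a demonstration that the combination of differential, commutativity, and Archimedean constraints forces the $Y_j(t)$ to be fiber-constant (morally, that $\omega$ essentially has the structure $\psi \otimes \beta_{\cal T}$ of the forward construction). In either case, Archimedeanity is essential to make the functional-analytic machinery quantitative; once the family $X_i(t)$ is extracted as a strongly differentiable object on a common $\H$, the remaining verifications reduce to routine substitutions of the constraint identities.
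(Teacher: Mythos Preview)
Your overall strategy---GNS, then disintegrate along the spectrum of $\tilde{T}$, then read off $X_i(t)$ fiberwise---is genuinely different from what the paper does, and the obstacle you flag at the end is exactly the point where your route stalls. The paper avoids this obstacle not by repairing the direct-integral picture but by abandoning it entirely in favor of a \emph{power-series construction}.

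Concretely, the paper fixes the GNS representation $(\tilde{\H}_1,\omega_1,\tilde{\pi}_1)$ associated with the restriction of $\omega$ to polynomials in $\tilde{y}$ alone, and then \emph{builds} $X(t)$ on that single Hilbert space via the formal Taylor series
\[
s(t,\bar{X},\bar{Y};\tau)\;=\;\bar{X}\;+\;\sum_{k\geq 1}\frac{(t-\tau)^k}{k!}\,\D^{k-1}g(\tau,\bar{X},\bar{Y}),
\]
evaluated at $\bar{X}=\tilde{\pi}_1(\tilde{x}(0))$, $\bar{Y}=\tilde{\pi}_1(\tilde{y})$. Convergence of this operator series is the heart of the proof: it is controlled by a \emph{classical} majorizing ODE (the ``associated ODE'' obtained by replacing every coefficient of $g$ by its modulus), whose analyticity on small time intervals is guaranteed by Cauchy--Kovalevskaya. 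A partition $0=t_1<\cdots<t_{\cal N}=1$ is chosen so the majorant is analytic on each subinterval; the solution is then propagated piece by piece, with smoothed delta functions $\delta_n(\tilde{t};t_k)$ used to produce auxiliary localized states $\omega_k$ that stitch the pieces together. Because everything lives on the fixed space $\tilde{\H}_1$ from the start, the problem of reconciling time-dependent fibers $\H_t$ and time-dependent $Y_j(t)$ never arises.

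Two of your intermediate steps are also more fragile than they look. First, the ``weak pointwise identity'' $\frac{d}{dt}\omega_t(r(X(t),Y(t)))=\omega_t(\D r)$ does not follow from testing \eqref{integral_conds} against polynomials $\phi(\tilde{t})r(\tilde{z})$: what you obtain is an integrated identity against $d\mu$, and extracting an a.e.\ derivative requires knowing a priori that $t\mapsto\omega_t(r)$ is absolutely continuous, which is precisely part of what must be proved. Second, even granting such a weak identity, ``uniform boundedness bootstraps weak to strong'' is not a general principle: a family of bounded operators whose expectation values in a single state satisfy an ODE need not satisfy that ODE as operators. The paper's Taylor expansion argument, by contrast, produces the operator-level identity directly from \eqref{integral_conds_proj} plus convergence of the series, with the residual term shown to vanish via the same majorant. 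If you want to salvage your approach, the missing ingredient is precisely a mechanism---like the Cauchy--Kovalevskaya majorant---that forces the disintegrated operators to be the analytic continuation of a single operator-valued function on a fixed space.
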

\noindent The proof of this statement is presented in Appendix \ref{app:proof_theo}, and essentially follows by showing that, if the conditions of the theorem are satisfied, then any feasible point $(\tilde{\H},\tilde{X},\tilde{Y},\omega)$ of problem $\mathbf{Q}$ can be turned into a feasible point $(\H,X(t),Y,\psi)$ of $\mathbf{P}$ with the same objective value.

We conclude this section with two remarks: 
\begin{enumerate}
    \item Due to the constraints $[\tilde{t},\tilde{x}_i]=[\tilde{t},\tilde{y}_j]=0$, for any set of operators $\tilde{T},\tilde{Z}$ satisfying the constraints of problem \eqref{problem_relax}, it will hold that
\begin{equation}
(\tilde{T}-\tilde{T}^2)P_k(\tilde{T},\tilde{X}, \tilde{Y})\geq0 \quad \mathrm{for~} k=1,\ldots K.
\label{constraint_mateus}
\end{equation}
Including the corresponding SDP constraints in the hierarchy (\ref{SDP_level_k}) will not change its limiting value $q^\star$ (under the Archimedean condition). However, it might boost its speed of convergence.

\item The basic DNPO problem (\ref{problem}) can be extended to incorporate state constraints of the form:
\begin{align}
\psi\left(Q_j(t, X(t), Y)\right)\geq 0\quad\mbox{for }j=1,\ldots,J,\, t\in[0,1],
\label{eq:time_const}
\end{align}
Another interesting extension of problem $\mathbf{P}$ is a scenario where the vector variable $Y$ includes, not just the boundary values $X(0)$, $X(1)$, but $(X(t_1), \ldots,X(t_N))$, for $\{t_k\}_k\subset [0,1]$, with $t_1=0, t_N=1$. In Appendix \ref{app:extensions} we explain how to handle such extended DNPO problems.

\end{enumerate}

\subsection{Markovian DNPO problems}
\label{sec:Markovian}
Let $Y=(X(0),X(1),Y_{\text{rest}})$, and consider the following differential problem:
\begin{align}
\mathbf{P}:\qquad
\begin{array}{rl}
p^\star:=\min & \; \int_{[0,1]}dt\psi\left(f(t,X(t),Y_{\text{rest}})\right)+\psi(f^0(X(0),Y_{\text{rest}}))+\psi(f^1(X(1),Y_{\text{rest}})),\\[2.5pt]
\mbox{such that }&p^{0}_k(X(0), Y_{\text{rest}})\geq 0 \quad\mbox{for }k=1,\ldots,K^0,\\[2.5pt]
&p^{1}_k(X(1), Y_{\text{rest}})\geq 0 \quad\mbox{for }k=1,\ldots,K^1,\\[2.5pt]
&p^{[0,1]}_k(t, X(t), Y)\geq 0 \quad\mbox{for }k=1,\ldots,K^{[0,1]},\\[2.5pt]
&\int_{[0,1]}dt\psi\left(q^{[0,1]}_j(t,X(t),Y_{\text{rest}})\right)+\psi(q^0_{j}(X(0),Y_{\text{rest}}))+\psi(q^1_{j}(X(1),Y_{\text{rest}}))\geq 0,\\
&j=1,\ldots,J,\\
&\frac{dX(t)}{dt}=g(t,X(t),Y_{\text{rest}}).
\end{array}
\label{problem2}
\end{align}
Note that, save for the boundary conditions in the differential equation and the state average constraints, there are no relations between the sets of variables $\{t\}\cup\{X_i(t)\}_i$,$\{X_i(0)\}_i$ and $\{X_i(1)\}_i$. In particular, the constraints of the problem do not feature mixed products, such as $X_1(t)X_2(0)$. 

DNPO problems of this sort naturally appear in physical scenarios where one wishes to monitor the time evolution of a quantum system, see section \ref{sec:quench}. We call them \emph{Markovian DNPOs}.

Eq. (\ref{problem_relax_functional}) suggests us the following relaxation of problem (\ref{problem2}):
\begin{align}
\mathbf{Q}:\qquad
\begin{array}{rl}
q^\star:=\min & \; \tilde{\omega}\left(f(\tilde{t},\tilde{x},\tilde{y})\right)+\tilde{\omega}(f^0(\tilde{x}(0),\tilde{y}_{\text{rest}}))+\tilde{\omega}(f^1(\tilde{x}(1),\tilde{y}_{\text{rest}})),\\[2.5pt]
\mbox{such that }&\tilde{\omega}(1)=1,\tilde{\omega}(\mu \mu^\dagger)\geq 0,\forall\mu\in\tilde{\p},\\
&\tilde{\omega}(\mu p^{[0,1]}_k(\tilde{t}, \tilde{x}, \tilde{y})\mu^\dagger)\geq 0 \quad\mbox{for }k=1,\ldots,K^{[0,1]},\mu\in\tilde{\p},\\[2.5pt]
&\tilde{\omega}(\mu p^{0}_k(\tilde{x}(0), \tilde{y}_{\text{rest}})\mu^\dagger)\geq 0 \quad\mbox{for }k=1,\ldots,K^0,\mu\in\tilde{\p},\\[2.5pt]
&\tilde{\omega}(\mu p^{1}_k(\tilde{x}(1), \tilde{y}_{\text{rest}})\mu^\dagger)\geq 0 \quad\mbox{for }k=1,\ldots,K^1,\mu\in\tilde{\p},\\[2.5pt]
&\tilde{\omega}(\mu[\tilde{t},\tilde{x}_i]\nu)=\tilde{\omega}(\mu[\tilde{t},(\tilde{y}_{\text{rest}})_j]\nu)=0\quad\forall i,j,\forall\mu,\nu\in\tilde{\p},\\[2.5pt]
&\tilde{\omega}(\mu(\tilde{t}-\tilde{t}^2)\mu^\dagger)\geq 0,\forall\mu\in\tilde{\p}, \\[2.5pt]
&\tilde{\omega}\left(q^{[0,1]}_j(\tilde{t},\tilde{x},\tilde{y}_{\text{rest}})\right)+\tilde{\omega}(q^0_{j}(\tilde{x}(0),\tilde{y}_{\text{rest}}))+\tilde{\omega}(q^1_{j}(\tilde{x}(1),\tilde{y}_{\text{rest}}))\geq 0,j=1,\ldots,J,\\
&\tilde{\omega}\left(\D(h)(\tilde{t},\tilde{x},\tilde{y}_{\text{rest}})\right) 
= \tilde{\omega}\left(h(\id,\tilde{X}(1),\tilde{y}_{\text{rest}})\right) -\;\tilde{\omega}\left(h(0,\tilde{X}(0),\tilde{Y}_{\text{rest}})\right), \; \forall\, h\in \M.
\end{array}
\label{problem_relax2}
\end{align}
This formulation is, however, unnecessarily complicated. Why should $\tilde{\omega}$ be able to evaluate products of variables $\tilde{x},\tilde{x}(0),\tilde{x}(1)$, when those do not appear neither in the problem constraints, nor in the objective function? 

We next relax the problem above into an NPO that optimizes over three different linear functionals, namely, $\tilde{\omega}_{[0,1]}$, $\tilde{b}_0$ and $\tilde{b}_1$. Functional $\tilde{\omega}_{[0,1]}$ will only act on the set $\tilde{\p}_{[0,1]}$ of polynomials of $\tilde{t},\tilde{x}, \tilde{y}_{\text{rest}}$, while $\tilde{b}_0$ ($\tilde{b}_1$) will act on the set $\tilde{\p}_0$ ($\tilde{\p}_1$) of polynomial expressions of $\tilde{x}(0), \tilde{y}_{\text{rest}}$ ($\tilde{x}(1), \tilde{y}_{\text{rest}}$). 

With this notation, we propose the following problem:
\begin{align}
\mathbf{Q}':\qquad
\begin{array}{rl}
q^\star:=\min & \; \tilde{\omega}\left(f(\tilde{t},\tilde{x},\tilde{y})\right)+\tilde{b}_0(f^0(\tilde{x}(0),\tilde{y}_{\text{rest}}))+\tilde{b}_1\left(f^1(\tilde{x}(1),\tilde{y}_{\text{rest}})\right),\\[2.5pt]
\mbox{such that }&\tilde{\omega}_{[0,1]}(1)=1,\tilde{\omega}_{[0,1]}(\mu \mu^\dagger)\geq 0,\forall\mu\in\tilde{\p}_{[0,1]},\\
&\tilde{b}_0(1)=1,\tilde{b}_0(\mu \mu^\dagger)\geq 0,\forall\mu\in\tilde{\p}_0,\\
&\tilde{b}_1(1)=1,\tilde{b}_1(\mu \mu^\dagger)\geq 0,\forall\mu\in\tilde{\p}_1,\\
&\tilde{\omega}_{[0,1]}(\mu(\tilde{y}_{\text{rest}}))=\tilde{b}_0(\mu(\tilde{y}_{\text{rest}}))=\tilde{b}_1(\mu(\tilde{y}_{\text{rest}})),\forall\mu\in\tilde{\p}_{\text{rest}},\\
&\tilde{\omega}_{[0,1]}(\mu p^{[0,1]}_k(\tilde{t}, \tilde{x}, \tilde{y})\mu^\dagger)\geq 0 \quad\mbox{for }k=1,\ldots,K^{[0,1]},\mu\in\tilde{\p}_{[0,1]},\\[2.5pt]
&\tilde{b}_0(\mu p^{0}_k(\tilde{x}(0), \tilde{y}_{\text{rest}})\mu^\dagger)\geq 0 \quad\mbox{for }k=1,\ldots,K^0,\mu\in\tilde{\p}_0,\\[2.5pt]
&\tilde{b}_1(\mu p^{1}_k(\tilde{x}(1), \tilde{y}_{\text{rest}})\mu^\dagger)\geq 0 \quad\mbox{for }k=1,\ldots,K^1,\mu\in\tilde{\p}_1,\\[2.5pt]
&\tilde{\omega}_{[0,1]}(\mu[\tilde{t},\tilde{x}_i]\nu)=\tilde{\omega}_{[0,1]}(\mu[\tilde{t},(\tilde{y}_{\text{rest}})_j]\nu)=0\quad\forall i,j,\forall\mu,\nu\in\tilde{\p}_{[0,1]},\\[2.5pt]
&\tilde{\omega}_{[0,1]}(\mu(\tilde{t}-\tilde{t}^2)\mu^\dagger)\geq 0,\forall\mu\in\tilde{\p}_{[0,1]}, \\[2.5pt]
&\tilde{\omega}_{[0,1]}\left(q^{[0,1]}_j(\tilde{t},\tilde{x},\tilde{y}_{\text{rest}})\right)+\tilde{b}_0(q^0_{j}(\tilde{x}(0),\tilde{y}_{\text{rest}}))+\tilde{b}_1(q^1_{j}(\tilde{x}(1),\tilde{y}_{\text{rest}}))\geq 0,j=1,\ldots,J,\\
&\tilde{\omega}_{[0,1]}\left(\D(h)(\tilde{t},\tilde{x},\tilde{y}_{\text{rest}})\right) 
= \tilde{b}_1\left(h(\id,\tilde{x}(1),\tilde{y}_{\text{rest}})\right)-\tilde{b}_0\left(h(0,\tilde{x}(0),\tilde{y}_{\text{rest}})\right), \; \forall\, h\in \tilde{\mathbb{M}}_{[0,1]},
\end{array}
\label{problem_relax_relaxed}
\end{align}
where the minimum runs over all functionals $\tilde{\omega}_{[0,1]}$, $\tilde{b}_0$, $\tilde{b}_1$; $\tilde{\mathbb{P}}_{\text{rest}}$ denotes the set of polynomials on the variables $\tilde{y}_{\text{rest}}$; and $\tilde{\mathbb{M}}_{[0,1]}$ is the set of monomials on the variables\footnote{Abusing the notation, on the right-hand side of the last equation we apply them also to $\tilde{x}(1)$ and $\tilde{x}(0)$.} $\tilde{t},\tilde{x},\tilde{y}_{\text{rest}}$.

Like the sparse SOS hierarchies of \cite{Klep_2021} and the sequential hierarchy of \cite{klep2025quantitativequantumsoundnessbipartite}, relaxation (\ref{problem_relax_relaxed}) of Problem (\ref{problem_relax2}) exploits the sparsity in both objective and constraints to reduce the size of the moment and localizing matrices involved. Indeed, the implementation of the $k^{th}$ SDP relaxation of Problem (\ref{problem_relax_relaxed}) requires considerably less memory resources than its analog for problem (\ref{problem_relax2}). Very conveniently, although Problem (\ref{problem_relax_relaxed}) looks like a relaxed version of Problem (\ref{problem_relax2}), under mild boundedness conditions, its solution equals that of Problem (\ref{problem}). 

\begin{theorem}
\label{theorem:Markovian}
Denote by $\{\tilde{\chi}^{\alpha}_k\}_k$ the degree-$1$ monomials (namely, the variables) of $\tilde{\p}_{\alpha}$, with $\alpha\in\{[0,1],0,1\}$. Similarly, denote by $\{e_j^\alpha\geq 0\}_j$ all polynomial operator constraints associated to $\tilde{\p}^\alpha$. As long as the Archimedean condition holds for $\tilde{\p}_{[0,1]}$, $\tilde{\p}_{0}$, $\tilde{\p}_{1}$ independently, i.e., as long as there exists $\lambda\in \R^+$ such that
\begin{align}
&\lambda-\sum_{j} (\tilde{\chi}^\alpha_j)^2=\sum_j s_j^\dagger s_j+\sum_{j,k} s_{jk}^\dagger e^{\alpha}_ks_{jk},\mbox{ for some }\{s_j\}_j\cup\{s_{jk}\}_{jk}\subset \p_\alpha,
\label{indep_archimedean}
\end{align}
for $\alpha\in\{[0,1],0,1\}$, then Problems (\ref{problem_relax2}) and (\ref{problem_relax_relaxed}) are equivalent. Moreover, the associated hierarchy of SDP relaxations of (\ref{problem_relax_relaxed}), obtained by restricting the degree of the polynomials where the functionals $\tilde{\omega}_{[0,1]},\tilde{b}_0,\tilde{b}_1$ are defined, converges to the solution of the problem.
\end{theorem}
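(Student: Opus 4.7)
The plan is to prove (i) that the optimal values of Problems (\ref{problem_relax2}) and (\ref{problem_relax_relaxed}) coincide, and (ii) that the SDP hierarchy of (\ref{problem_relax_relaxed}) converges to this common value. One inequality for (i) is immediate: given any feasible $\tilde{\omega}$ of (\ref{problem_relax2}), its three restrictions $\tilde{\omega}_{[0,1]} := \tilde{\omega}|_{\tilde{\p}_{[0,1]}}$, $\tilde{b}_0 := \tilde{\omega}|_{\tilde{\p}_0}$, $\tilde{b}_1 := \tilde{\omega}|_{\tilde{\p}_1}$ form a feasible triple for (\ref{problem_relax_relaxed}) with the same objective value, since every constraint of (\ref{problem_relax_relaxed}) is a restriction of a constraint of (\ref{problem_relax2}) and the agreement condition on $\tilde{\p}_{rest}$ is automatic.

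The reverse inequality requires building, out of a feasible triple $(\tilde{\omega}_{[0,1]}, \tilde{b}_0, \tilde{b}_1)$, a single feasible functional $\tilde{\omega}$ of (\ref{problem_relax2}) with the same objective value. I would proceed in three stages. First, by the Archimedean hypothesis (\ref{indep_archimedean}) together with the standard NPO argument (cf.\ Theorem 1 of \cite{NPO} and the proof of Theorem \ref{fund_theo}), each $\tilde{\omega}_\alpha$ is realized by a normalized state $\omega_\alpha$ on a C*-algebra $\mathcal{A}_\alpha$ of bounded operators acting on some Hilbert space $\H_\alpha$, with the abstract variables represented as concrete bounded operators. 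Second, the agreement constraints imply, via the GNS construction applied to the common restriction to $\tilde{\p}_{rest}$, that the $\tilde{Y}_{rest}$-subalgebras of the three representations are isomorphic through a state-preserving $*$-isomorphism; one then forms an amalgamated free product of $\mathcal{A}_{[0,1]}, \mathcal{A}_0, \mathcal{A}_1$ over this common subalgebra, together with the associated amalgamated state, yielding a single Hilbert space $\tilde{\H}$, a single normalized state $\tilde{\omega}$, and concrete bounded representations of all the variables $\tilde{T}, \tilde{X}, \tilde{X}(0), \tilde{X}(1), \tilde{Y}_{rest}$, arranged so that the marginal of $\tilde{\omega}$ on each $\mathcal{A}_\alpha$ coincides with $\omega_\alpha$. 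Third, one verifies feasibility: the objective, the aggregated state constraint, and the differential constraint all decompose as sums of terms each involving variables from a single subalgebra and are therefore preserved by the marginal property; the operator inequalities $p^\alpha_k \geq 0$ are supported in a single $\mathcal{A}_\alpha$ and remain positive after embedding, so that $\tilde{\omega}(\mu p^\alpha_k \mu^\dagger) \geq 0$ holds for \emph{all} $\mu$ in the larger algebra, not only for $\mu \in \tilde{\p}_\alpha$; the commutation and $\tilde{t}-\tilde{t}^2 \geq 0$ constraints are preserved analogously.

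For (ii), once (i) is in place, convergence of the SDP hierarchy of (\ref{problem_relax_relaxed}) follows by applying the NPO convergence theorem of \cite{NPO} separately within each Archimedean subalgebra: any polynomial strictly positive subject to the constraints of $\tilde{\p}_\alpha$ admits a bounded-degree SOS certificate by (\ref{indep_archimedean}), while the constraints coupling the three functionals (agreement on $\tilde{\p}_{rest}$ and the differential equation) are linear equalities that can be enforced exactly at any finite level of the hierarchy.

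The main obstacle is Stage 2: rigorously justifying the amalgamated free product construction in the non-tracial, non-commutative C*-algebraic setting. Positivity of the amalgamated state is not automatic; it requires state-preserving conditional expectations from each $\mathcal{A}_\alpha$ onto the common $\tilde{Y}_{rest}$-subalgebra, which in the Archimedean setting can be extracted from the GNS vector associated with the restriction to $\tilde{\p}_{rest}$. Once this construction is rigorously justified through standard results on amalgamated free products of C*-algebras, the remaining verifications are purely algebraic.
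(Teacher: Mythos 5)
There is a genuine gap, and it sits exactly where you flag the "main obstacle": Stage 2. An amalgamated free product of states over a nontrivial common subalgebra requires state-preserving conditional expectations $E_\alpha:\mathcal{A}_\alpha\to\mathcal{B}$ onto the $\tilde{y}_{rest}$-subalgebra with $\omega_\alpha=\omega_\alpha|_{\mathcal{B}}\circ E_\alpha$ (this is the hypothesis of Boca's theorem and its relatives); such expectations do \emph{not} exist for a generic state on a generic inclusion of C*-algebras (already for $M_2(\C)$ with a maximal abelian subalgebra and a state with off-diagonal components), and they cannot simply be "extracted from the GNS vector associated with the restriction to $\tilde{\p}_{rest}$." The ordinary free product of states (amalgamation over $\C$) always exists, but that is not what you need here, since the three functionals must agree on a nontrivial subalgebra. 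So the construction of a single feasible $\tilde{\omega}$ for (\ref{problem_relax2}) by gluing is not justified, and without it the reverse inequality in part (i) is open.

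The paper's proof avoids this gluing problem entirely by exploiting the differential constraint as a \emph{dynamical} statement rather than as a mere linear coupling between the three functionals. Under the independent Archimedean condition, one applies GNS to each functional, partitions $[0,1]$ via Cauchy--Kovalevskaya (Lemma \ref{time_part_lemma}), and shows that the trajectory is given by the convergent operator Taylor series $s(t,X,Y;\tau)$ of Lemma \ref{convergence_lemma}, so that $X(t)$ — and in particular $X(1)$ — lies in the C*-algebra generated by $X(0)$ and $Y_{rest}$ (Lemma \ref{lemma:C_algebra_eq}). Consequently all three functionals are realized as marginals of the single state $\psi=\omega_1$ on the GNS space of $\tilde{b}_0$ (Proposition \ref{prop_relations_Markovian}), and one actually obtains a feasible point of the original problem (\ref{problem2}), which is strictly stronger than feasibility for (\ref{problem_relax2}) and immediately yields the equivalence. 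Your part (ii) is essentially the paper's argument (boundedness from (\ref{indep_archimedean}) plus Banach--Alaoglu), but note that the differential and agreement equalities are enforced only up to the degree cutoff at each finite level, not "exactly"; completeness still relies on part (i) holding in the limit.
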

The reader can find a proof in Appendix \ref{app:Markovian}.

\section{Numerical tests}
\label{sec:examples}
We next test the performance of SDP hierarchy (\ref{SDP_level_k}) in a couple of concrete DNPO problems. For most of our numerical calculations, here and in the next section, we used the MATLAB optimization package YALMIP \cite{yalmip} in combination with the SDP solver Mosek \cite{mosek}.

\subsection{A simple non-commuting example}
Consider, for starters, the following (merely academic) problem:
\begin{equation}
\mathbf{P_2}:\quad
    \begin{array}{rl}
    \min_{\psi, Y_0, Y_1, \H} \quad \psi \left(\exp(Y_0Y_1+Y_1Y_0)\right) \\[2.5pt]
    \text{such that} \quad  {Y_0}^2 = Y_0, \quad {Y_1}^2=Y_1.
    \end{array}
    \label{P_2}
\end{equation}
If $Y_0$ and $Y_1$ were demanded to commute, the optimal value would be $1$. However, in the general noncommutative case, the optimal solution is in fact $e^{-1/4} \approx 0.7788 0078$\footnote{This can be seen by invoking Jordan's lemma \cite[Chapter VII]{bhatia2013matrix}, which allows us to express, modulo a unitary transformation, $Y_i=\bigoplus_j Y_i^j$, where $\{Y_i^j\}_j$ are $2\times 2$ matrices. By convexity, the solution of the above problem is therefore the result of an optimization over pairs of $2\times 2$ projector operators. Modulo unitary transformations, those can be parametrized by a single real number $\theta\in [0,2\pi)$, i.e., we can take 
\begin{equation}
Y_0=\left(\begin{array}{cc}1&0\\0&0\end{array}\right), Y_1=\left(\begin{array}{cc}\frac{1}{2}(1+\cos(\theta))&\frac{1}{2}\sin(\theta)\\\frac{1}{2}\sin(\theta)&\frac{1}{2}(1-\cos(\theta))\end{array}\right).
\end{equation}
A line search retrieves the stated result.}.

Suppose, though, that we ignored the solution of the problem. Introducing a family of operators $X(t)$ for $t\in[0,1]$, we can translate problem $\mathbf{P_2}$ into the equivalent DNPO $\mathbf{P_2}'$:
\begin{align}
\mathbf{P_2}':\quad
    \begin{array}{rl}
    \min_{\psi, X, Y_0, Y_1, \H}& \psi\left( X(1) \right) \\
    \text{such that} &  \frac{d}{dt}X(t) = (Y_0Y_1+Y_1Y_0) X(t), \\
    & Y_0^2 = Y_0, \quad Y_1^2=Y_1 \\
    & X(0) = \id \\
    \end{array}
    \label{P2_deformed}
\end{align}
Indeed, note that the only solution of the differential equation above is $X(t)=e^{(Y_0Y_1+Y_1Y_0)t}$; consequently, the objective functions of problems $\mathbf{P_2},\mathbf{P_2}'$ coincide.

Through the NPO reduction of section \ref{sec:hier_section}, we relax Problem $\mathbf{P_2}'$ to an NPO problem $\mathbf{Q_2}$, with operator variables $\tilde{T},\tilde{X},\tilde{X(1)},\tilde{Y}_1,\tilde{Y}_2$. To make the variables bounded, which is necessary for convergence, we add the additional constraints $\tilde{X} \ge 0$, $\tilde{X}(0) \ge 0$, $e^2\id-\tilde{X} \ge 0$, and $e^2\id-\tilde{X}(1) \ge 0$. ($Y_0$ and $Y_1$ are already bounded by virtue of being projectors, so there's no need to add analogous constraints for them). For $k=2$, we obtain the lower bound $0.4536 3658$; for $k=3$ we obtain $0.7788 0078$, which equals the problem solution within numerical precision\footnote{This might be an artifact of using $8$-digit precision: it is unlikely that the values of solution and relaxation exactly coincide.}.

\subsection{Interpolation and extrapolation under Hamiltonian evolution}
\label{sec:unitary}
We next consider an actual problem from quantum information theory: suppose that we initialize a quantum system, described through the Hilbert space $\H$, on some state $\psi$. We let the system evolve and, at time $\tau_k$, we measure the dichotomic\footnote{A dichotomic observable $A$ is a self-adjoint operator such that $A^2=\id$.} observable $A$. Repeating this experiment several times, we can estimate the average value $a(\tau_k)$ of observable $A$ at time $\tau_k$. We assume that the quantum system is closed, and therefore its dynamics are governed by a time-independent Hamiltonian $H\in B(\H)$, which we postulate to be bounded: more specifically, we assume that $E_{\max}\geq H\geq 0$, for some maximum energy value $E_{\max}>0$.

Basic quantum mechanics then tells us that
\begin{align}
a(\tau_k) := \psi \left( e^{i H \tau_k} A e^{-iH \tau_k} \right) = \psi \left( {U_k}^\dag A U_k \right),
\end{align}
where we have written $U_k := e^{-iH \tau_k}$.

For times $\{\tau_k\}_{k=1}^{N-1}$, we estimate the values $\{a(\tau_k)\}_k$. Given this information, we ask which range of values $a(\tau)$ could take for some other time $\tau\not\in\{\tau_k\}_k$. If $\tau$ is between the minimum and maximum values of $\{\tau_k\}_{k=1}^{N-1}$ this task amounts to an {\em interpolation} -- otherwise we speak of an {\em extrapolation}.

This task can be formulated as a pair of non-polynomial optimization problems $\mathbf{P_{U}}$:
\begin{align}
\mathbf{P_{U}}:\quad
\begin{array}{rl}
a^{\pm}(\tau) := \pm\max_{A, H, \mathcal{H}, \psi}  \quad &\pm\psi\left( e^{i H \tau} A e^{-iH \tau} \right) \\[2.5pt]
\text{such that} \quad & A = A^\dag, \; A^2 = \id, \; H = H^\dag, \\[2.5pt]
&0 \leq H \leq E_{\mathrm{max}} \id, \\[2.5pt]
&\psi\left( e^{i H \tau_k} A e^{-iH \tau_k}\right) = a(\tau_k) \quad \text{for~} k=1,\ldots, N\!-\!1.
\end{array}
\end{align}
If the energy assumption on the quantum system is sound, it then holds that $a(\tau)\in[a^{-}(\tau),a^{+}(\tau)]$.

By introducing the family of operators $\{U_k(t)\}_{k=1}^N$ for $t\in[0,1]$, we can reformulate the problem above as a DNPO problem $\mathbf{P_U}'$:
\begin{align}
\mathbf{P_{U}}':\quad
\begin{array}{rl}
     a^{\pm}(\tau) = \pm\max_{A, H, \{U_k\}_{k=1}^N, \pm\psi, \mathcal{H}}  & \psi\left( U_N(1)^\dag\,A\; U_N(1) \right) \\[2.5pt]
     \text{such that} & A = A^\dag, \; A^2 = \id, \; H = H^\dag, \\[2.5pt]
     &0 \leq H \leq E_{\mathrm{max}} \id, \\[2.5pt]
     &\psi\left( U_k(1)^\dag\,A\;U_k(1) \right) = a(\tau_k) \quad \text{for~} k=1,\ldots, N\!-\!1, \\[2.5pt]
     & \frac{d}{dt} U_k(t) = -i \tau_k H U_k(t) \quad \text{for~} k=1,\ldots, N, \\[2.5pt]
     &U_k(0) = \id \quad \text{for~} k=1,\ldots, N,
\end{array}
\label{extr_1}
\end{align}
where we set $\tau_N=\tau$. Indeed, as in the prior examples we see that $U_k(t) := e^{-iH\tau_k t}$ for each $k$ are the only set of solutions to the differential equations in $U_k$ satisfying the boundary conditions $U_k(0) = \id$. Since $H$ is Hermitian, these are obviously unitary operators (motivating our use of `$U$'). The associated NPO problem (\ref{problem_relax}), with variables $\tilde{U}_1,\ldots,\tilde{U}_N,\tilde{U}_1(1),\ldots,\tilde{U}_N(1), \tilde{H}$, can therefore be taken Archimedean, if we include the constraints
\begin{equation}
\tilde{U}_k\tilde{U}_k^\dagger=    \tilde{U}_k^\dagger\tilde{U}_k=\tilde{U}_k(1)\tilde{U}_k(1)^\dagger=    \tilde{U}_k(1)^\dagger\tilde{U}_k(1)=\id.
\end{equation}

The performance of the method is demonstrated in Fig.~\ref{fig:unitary_example} for interpolation and extrapolation with three sample points\footnote{The code for producing and solving this example this may be found at \url{https://github.com/ajpgarner/unitary_extrapolation}.} (i.e., $N=4$). Remarkably, the relaxation becomes exact at the sample points.

\begin{figure}[tbh]
\begin{centering}
\includegraphics[width=0.75\textwidth]{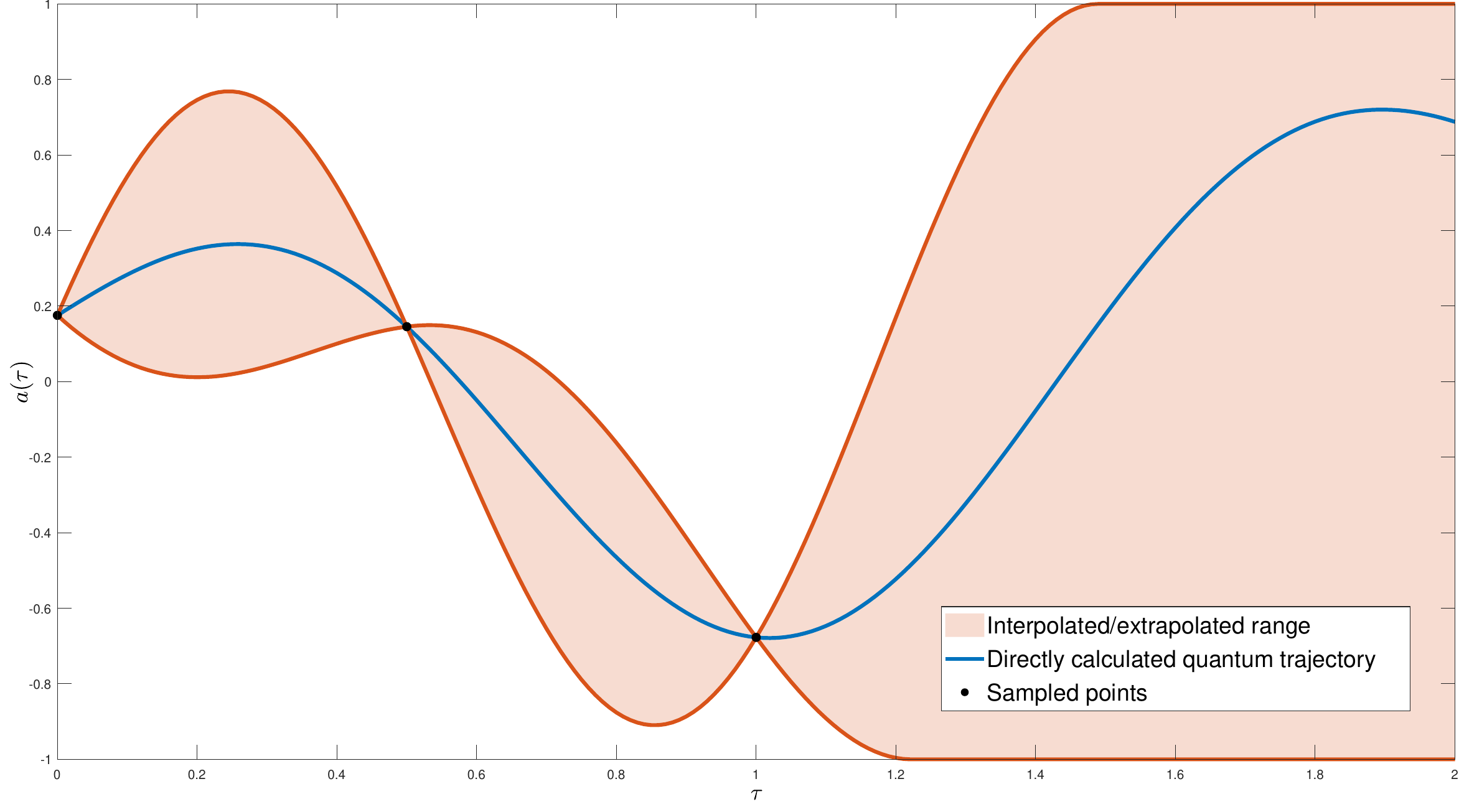}
\caption{\label{fig:unitary_example}
\caphead{Unitary interpolation and extrapolation.}
For $E_{\rm max} = 5$,
 the red region indicates the minimum and maximum values of $a(\tau)$ for different values of $\tau$, given the values of $a(\tau_1)$, $a(\tau_2)$, $a(\tau_3)$ at respective times $\tau_1=0$, $\tau_2=0.5$ and $\tau_3=1$, calculated using level $2$ moment matrices and localizing matrices. The blue line indicates a feasible quantum trajectory fitting the measurement data.
}
\end{centering}
\end{figure}

Alternatively, one can model the quantum extrapolation of time series as a DNPO problem with more than two boundary variables, see Appendix \ref{sec:multiple_boundaries}. Let us assume, for simplicity of notation, that the $\tau$'s are ordered, i.e., $0=\tau_1<\tau_2<\ldots<\tau_N$. Define $\bar{\tau}_k:=\frac{\tau_k}{\tau_{N}}$, and let $\tilde{k}\in\{1,...,N\}$ be such that $\tau_{\tilde{k}}=\tau$. Then we can reformulate the problem as:
\begin{align}
\begin{array}{rl}
     a^{\pm}(\tau) = \pm\max_{A, H, U, \psi, \mathcal{H}}  & \pm\psi\left( U\left(\bar{\tau}_{\tilde{k}}\right)^\dag\,A\; U\left(\bar{\tau}_{\tilde{k}}\right) \right) \\[2.5pt]
     \text{such that} & A = A^\dag, \; A^2 = \id, \; H = H^\dag, \\[2.5pt]
     &0 \leq H \leq E_{\mathrm{max}} \id, \\[2.5pt]
     &\psi\left(U\left(\bar{\tau}_k\right)^\dag\,A\;U\left(\bar{\tau}_k\right)\right) = a(\tau_k), \quad \text{for~} k\in\{1,\ldots, N\}\setminus\{\tilde{k}\}, \\[2.5pt]
     & \frac{d}{dt} U(t) = -i \tau_{N} H U(t),\\[2.5pt]
     &U(0) = \id.
\end{array}
\label{extr_2}
\end{align}

Compared to (\ref{extr_1}), the alternative formulation (\ref{extr_2}) of the extrapolation problem allows us, given the same computational resources, to take into account a higher number of experimental points. Fig. \ref{fig:7point_extrapolation} shows how the alternative method performs for $7$ sample points (namely, $N=8$).

\begin{figure}[tbh]
\begin{centering}
\includegraphics[width=0.75\textwidth]{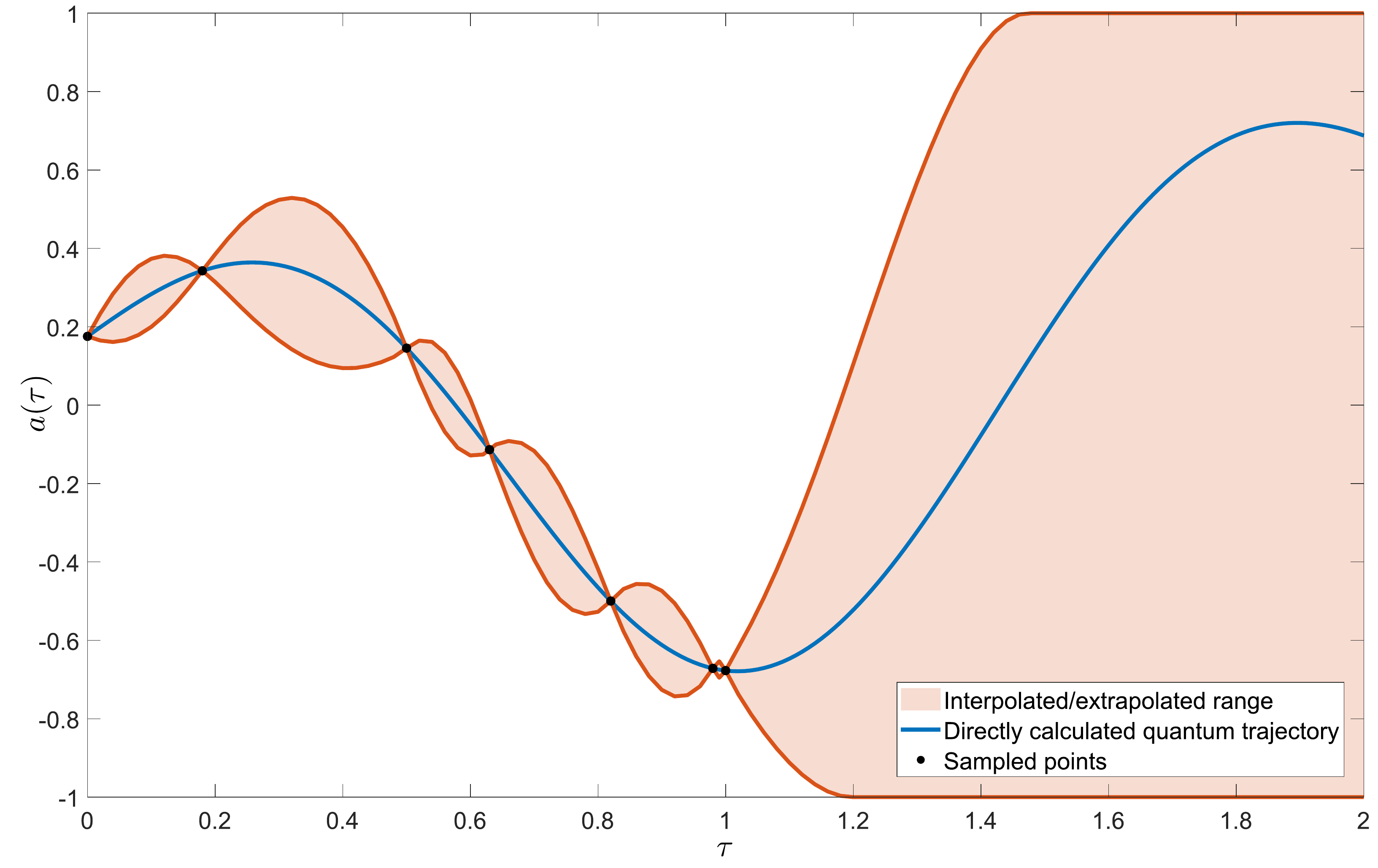}
\caption{%
\label{fig:7point_extrapolation}
\caphead{Unitary interpolation and extrapolation.}
For $E_{\rm max} = 5$, the red region indicates the minimum and maximum values of $a(\tau)$, for different values of $\tau$, given the data $\{a(\tau_k)\}_{k\not=\tilde{k}}$ observed at seven sample times, calculated using level $2$ moment matrices, level $2$ localizing matrices for $\tilde{H}$ and $\tilde{T}$ and level $1$ localizing matrices for the constraints in $\tilde{\Pi}_k$. The blue line indicates a feasible quantum trajectory that passes through all the sample points, indicated by black dots.}
\end{centering}
\end{figure}

\section{Application: Modeling quenches in 2-Reduced Density Matrix Theory}
\label{sec:quench}
\subsection{The quantum many-body scenario and 2-Reduced  Density Matrix Theory}

Consider a system of $n$ spins, let them be qubits for ease of exposition. Let $\sigma_c^{(j)}$, with $c\in\{1,2,3\}$ and $j\in\{1,\ldots,n\}$ denote the $c^{th}$ Pauli matrix acting on qubit $j\in\{1,\ldots,n\}$. Normally, it is assumed that the Hamiltonian $H$ that the system is subject to consists of sums of two-particle interactions, i.e.,
\begin{equation}
H=\sum_{j<k,a,b}h^{ab}_{jk}\sigma^{(j)}_a\sigma_b^{(k)}+\sum_{j,a} h_j^a\sigma^{(j)}_a,
\label{hamil_glass}
\end{equation}
where $\{h_{jk}^{ab}\}\subset \R$.
When interactions between every pair of particles are allowed, we speak of a \emph{spin glass}. In some physical scenarios, though, the spins lie on the sites of a regular lattice. In that case, only interactions between near neighbors are significant. For instance, in a 1D material with nearest-neighbor interactions, a general Hamiltonian would look like this:
\begin{equation}
H=\sum_{j,a,b}h^{a,b}_{j}\sigma^{(j)}_a\sigma_b^{(j+1)}+\sum_{j,a} h_j^a\sigma^{(j)}_a.
\label{hamil_1D}
\end{equation}
In quantum spin systems with a local (or few-particle) Hamiltonian, one seeks to answer questions like ``what is the ground state energy?" Or, ``what is the magnetization of the sample at temperature $T$?". 

The first question can be cast as an NPO problem. It reduces to
\begin{align}
&\min \psi(H)\nonumber\\
\mbox{such that } &\{\sigma^{(j)}_{a},\sigma^{(j)}_b\}=2\delta_{ab}\id,\forall a,b,j,\nonumber\\
&[\sigma^{(j)}_{a},\sigma^{(k)}_b]=2i\delta_{jk}\sum_c\varepsilon^{abc}\sigma^{(j)}_{c},\forall a,b,j,k,
\label{RDM_NPO}
\end{align}
where $\varepsilon^{ijk}$ is the Levi-Civita symbol. Indeed, the only irreducible representation of the above operator constraints is unitarily equivalent to the set of Pauli matrices acting on $n$ qubits.

Since the above is an NPO, one can use the method in \cite{NPO} to derive a complete hierarchy of SDP relaxations. This hierarchy, which can be shown to converge at level $k=n$ to the exact solution of the problem, is the basis of 2-Reduced Density Matrix Theory (2-RDMT) \cite{Nakata2,MazziottiRDMbook,plenio,BarthelAndHubener,bootstrapManyBody,bootstrapLawrence}.

Due to the commutation and anticommutation relations of the Pauli operators, polynomials of the non-commuting variables $\{\tilde{\sigma}^{(k)}_a\}_{a,k}$ can always be brought to a unique \emph{normal form}. More specifically, they can always be expressed as linear combinations of monomials of the form:
\begin{equation}
\mathbb{M}:=\left\{\prod_{j=1}^s\tilde{\sigma}^{(k_j)}_{a_j}:k_j\not=k_l, \mbox{ for } j\not=l\right\}.
\end{equation}
Note that all these monomials are self-adjoint. From now on, for any Pauli polynomial $p$, we denote by $[p]$ its normal form.

In principle, we could relax problem (\ref{RDM_NPO}) by optimizing over functionals $\tilde{\omega}$ defined on the span of $\{p\in \mathbb{M}:\mbox{deg}(p)\leq 2k\}$. For spin glasses, that would make perfect sense. For 1D, 2D or 3D materials, it is computationally preferable to tailor the considered polynomials to fit the neighboring relations between the $n$ qubits \cite{BarthelAndHubener}.

So, given a Hamiltonian $H$ like in (\ref{hamil_glass}), let $\G(H)$ be a graph with vertices $\{1,\ldots,n\}$, and edges 
\begin{equation}
\{(j,k):h^{ab}_{jk}\not=0, \mbox{ for some } a,b\}.   
\end{equation}
This graph represents the neighboring relations among the $n$ qubits, e.g.: for a spin glass, the edges of $\G(H)$ would be $\{(j,k):k,j=1,\ldots,n\}$; for a 1D material, $\{(j,j+1):j=1,\ldots,n-1\}$.

Next, define
\begin{equation}
\mathbb{M}(\G,k):=\left\{\prod_{j=1}^{k'}\tilde{\sigma}^{(s_j)}_{a_j}:s_j\not=s_l, \mbox{ for } j\not=l, \mbox{dist}_{\G}(s_j,s_l)< k, k'\leq k\right\}.
\end{equation}
With this notation, the $k^{th}$ SDP relaxation of problem (\ref{RDM_NPO}) is usually presented as:
\begin{align}
&\min \tilde{\omega}(H)\nonumber\\
\mbox{such that } &\tilde{\omega}(1)=1,\nonumber\\
&M\geq 0,\nonumber\\
&M_{o,o'}:=\tilde{\omega}([oo']),o,o'\in\mathbb{M}(\G(H),k),
\label{RDM_SDP_glass}
\end{align}
where the optimization takes place among linear functionals $\tilde{\omega}:\mathbb{M}(\G(H),k)^2\to\C$.

For several decades, the only rigorous quantities that 2-RDMT could provide were lower bounds on the ground state energy. In \cite{Wang_2024}, it is noted that, given an upper bound on the ground state energy (obtained, e.g., through variational methods), one can define an NPO to bound the value of the average of an arbitrary local operator within the set of ground states. More recently, it was discovered that, adding extra $H$-dependent convex constraints to the linear functional $\omega$, one can restrict the optimization to ground states of $H$ \cite{araujo2024firstorder,fawzi2023certified}, Gibbs states at any given temperature \cite{fawzi2023certified} and steady states of a local Lindbladian \cite{steady_state,steady_state2}. For any set of monomials ${\cal M}$, all such constraints have the form 
\begin{equation}
(\tilde\omega(o))_{o\in {\cal M}}\in G({\cal M}),
\end{equation}
which we abbreviate to $\tilde{\omega}({\cal M})\in G({\cal M})$. As the set ${\cal M}$ grows, the constraints become tighter, and $G({\cal M})$ exactly characterizes the desired set of states. In that regime, we simply write $\tilde{\omega}\in G$.

\subsection{Quenches in finite systems}

In the following lines, we detail how our results in Section \ref{sec:hier_section} imply that we can add a new functionality to the 2-RDMT toolbox, namely, a technique to model \emph{quenches}, or time-independent Hamiltonian evolutions. Suppose that the state of our quantum spin system initially belongs to some set of states $G$ (say, the ground state/Gibbs state of some local Hamiltonian $\texttt{H}$ or a steady state). We switch on Hamiltonian $H$ for time $\tau$, and consider the problem of lower bounding the value of the local operator $f(\sigma)$ on the new state. The problem to solve is therefore:
\begin{align}
&\min \psi(e^{iH(\sigma)\tau}f(\sigma)e^{-iH(\sigma)\tau})\nonumber\\
\mbox{such that }&\{\sigma^{(j)}_{a},\sigma^{(j)}_b\}=2\delta_{ab}\id,\forall a,b,j,\nonumber\\
&[\sigma^{(j)}_{a},\sigma^{(k)}_b]=2i\delta_{jk}\sum_c\varepsilon^{abc}\sigma^{(j)}_{c},\forall a,b,j,k,\nonumber\\
&\psi\in G.
\end{align}

This problem can be tackled by modeling the operator $e^{iH(\sigma)\tau}$, as we did in Section \ref{sec:unitary}). There is, however, a better way to deal with Hamiltonian evolution, which can be generalized to the thermodynamic limit of infinitely many spins.

First, we move to the Heisenberg picture. Namely, we introduce the time-dependent operator variables
\begin{gather}
\sigma^{(j)}_a(t):=e^{iHt\tau}\sigma^{(j)}_ae^{-iHt\tau}, \\
H(t):=e^{iHt\tau} H e^{-iHt\tau}.
\end{gather}
Then one finds that $H(t)=e^{iHt\tau}He^{-iHt\tau}=H$, with
\begin{align}
H(t)=\sum_{j<k}\sum_{a,b}h_{jk}^{a,b}\sigma_a^{(j)}(t)\sigma_b^{(k)}(t)+\sum_{j,a} h_j^a\sigma^{(j)}_a(t).
\end{align}
Thus, the variables $\sigma^{(j)}_a(t)$ follow the differential equation:
\begin{align}
&\frac{d\sigma^{(l)}_c(t)}{dt}=i\tau[H,\sigma^{(l)}_c(t)]=i\tau[H(t),\sigma^{(l)}_c(t)]=\nonumber\\
&-2\tau\left(\sum_{k>l}\sum_{a,b,d}h_{lk}^{a,b}\varepsilon^{acd}\sigma_d^{(l)}(t)\sigma_b^{(k)}(t)+\sum_{j<l}\sum_{a,b,d}h_{jl}^{a,b}\varepsilon^{bcd}\sigma_a^{(j)}(t)\sigma_d^{(l)}(t)+\sum_{a,d}h_l^a\varepsilon^{acd}\sigma_d^{(l)}(t)\right).
\end{align}
We arrive at the following problem:
\begin{align}
\min\; & \psi\left(f(\sigma)\right),\nonumber\\
\mbox{such that }& \{\sigma^{(j)}_{a}(t),\sigma^{(j)}_b(t)\}=2\delta_{ab}\id,\nonumber\\
&[\sigma^{(j)}_{a}(t),\sigma^{(k)}_b(t)]=2i\delta_{jk}\sum_c\varepsilon^{abc}\sigma^{(j)}_{c}(t),\\
&\frac{d\sigma^{(l)}_c(t)}{dt}=-2\tau\left(\sum_{k>l}\sum_{a,b,d}h_{lk}^{a,b}\varepsilon^{acd}\sigma_d^{(l)}(t)\sigma_b^{(k)}(t)+\sum_{j<l}\sum_{a,b,d}h_{jl}^{a,b}\varepsilon^{bcd}\sigma_a^{(j)}(t)\sigma_d^{(l)}(t)+\sum_{a,d}h_l^a\varepsilon^{acd}\sigma_d^{(l)}(t)\right),\nonumber\\
&\psi\in G.
\label{DNPO_spins}
\end{align}

This is a fairly general Hamiltonian simulation problem, which is widely believed to be intractable for classical computers. However, a relaxation of this problem would provide us with a lower bound $\mu_{-}$ on $\langle f(\sigma) \rangle_{\psi'}$, where $\psi'$ denotes the time-evolved state. By changing the sign of the objective function, one arrives at a problem whose relaxation would provide an upper bound $\mu_{+}$. Provided that $\mu_{+}-\mu_{-}\ll 1$, the problem can be considered solved for all practical purposes. 

Problem (\ref{DNPO_spins}) is a Markovian, independently Archimedean DNPO with the extra state constraints $\psi\in G$. The proof of Theorem \ref{theorem:Markovian} trivially extends to cover this case, so we can use the results of Section \ref{sec:Markovian} to reduce Problem (\ref{DNPO_spins}) to an optimization over three linear functionals, namely, $\tilde{\omega}_{[0,1]}$, $\tilde{b}_0$ and $\tilde{b}_1$. Functional $\tilde{\omega}_{[0,1]}$ will only act on the set $\tilde{\p}_{[0,1]}$ of polynomials of $\tilde{t},\tilde{\sigma}$, while $\tilde{b}_0$ ($\tilde{b}_1$) will act on the set $\tilde{\p}_0$ ($\tilde{\p}_1$) of polynomial expressions of $\tilde{\sigma}(0)$ ($\tilde{\sigma}(1)$). 

To simplify the presentation of the SDP relaxations, we note that, due to the Pauli relations and the commutation with $\tilde{t}$, the sets of polynomials $\p_{[0,1]},\p_0,\p_1$ respectively admit a normal form. Namely, for $k_\sigma,k_t$ large enough, any polynomial can be expressed as linear combinations of the self-adjoint monomials:
\begin{align}
&\mathbb{M}_{[0,1]}(\G,k_\sigma,k_t):=\left\{\tilde{t}^r\prod_{j=1}^s\tilde{\sigma}^{(k_j)}_{a_j}:k_j\not=k_l, \mbox{ for } j\not=l,\mbox{dist}_{\G}(k_j,k_l)< k_\sigma,s\leq k_\sigma,r\leq k_t\right\},\nonumber\\
&\mathbb{M}_{0}(\G,k_\sigma):=\left\{\prod_{j=1}^s\tilde{\sigma}^{(k_j)}_{a_j}(0):k_j\not=k_l, \mbox{ for } j\not=l,\mbox{dist}_{\G}(k_j,k_l)< k_\sigma,s\leq k_\sigma\right\},\nonumber\\
&\mathbb{M}_{1}(\G,k_\sigma):=\left\{\prod_{j=1}^s\tilde{\sigma}^{(k_j)}_{a_j}(1):k_j\not=k_l, \mbox{ for } j\not=l,\mbox{dist}_{\G}(k_j,k_l)< k_\sigma,s\leq k_\sigma\right\}.
\end{align}
Given a polynomial $p$, we will be dealing with moment and localizing matrices of the form
\begin{equation}
\Lambda_\alpha(p, \vec{k}):=\{(\Lambda_\alpha)_{o,o'}:=\omega_\alpha([opo']),o,o'\in \mathbb{M}_\alpha({\cal G}(H),\vec{k}\},
\end{equation}
for $\alpha\in\{[0,1],0,1\}$.

Let $\kappa:=(k_\sigma,k_t)$. With this notation, the $\kappa^{th}$ SDP relaxation of Problem (\ref{DNPO_spins}) is:
\begin{align}
q^{\kappa}:=\min & \; \tilde{b}_1\left(f(\tilde{\sigma}(1))\right),\nonumber\\
\mbox{such that }&\tilde{\omega}_{[0,1]}(1)=1,\Lambda_{[0,1]}(1,k_\sigma,k_t)\geq 0,\nonumber\\
&\Lambda_{[0,1]}(\tilde{t}-\tilde{t}^2,k_\sigma,k_t-1)\geq 0,\nonumber\\
&\tilde{b}_0(1)=1,\Lambda_0(1,k_\sigma)\geq 0,\nonumber\\
&\tilde{b}_0\in G\left(\left[\mathbb{M}_0(\G(H),k_\sigma)^2\right]\right),\nonumber\\
&\tilde{b}_1(1)=1,\Lambda_1(1,k_\sigma)\geq 0,\nonumber\\
&\tilde{\omega}_{[0,1]}\left(\left[j\tilde{t}^{j-1}o(\tilde{\sigma})+i\tau\tilde{t}^j[H(\tilde{\sigma}),o(\tilde{\sigma})]\right]\right) 
= \tilde{b}_1\left(o(\tilde{\sigma}(1))\right)-\delta_{j,0}\tilde{b}_0\left(o(\tilde{\sigma}(0))\right), \nonumber\\ 
&\forall o\in [\mathbb{M}_0(\G(H),k_\sigma)^2],\left[j\tilde{t}^{j-1}o(\tilde{\sigma})+i\tau\tilde{t}^j[H(\tilde{\sigma}),o(\tilde{\sigma})]\right]\in\mbox{span}\left(\left[\mathbb{M}_{[0,1]}(\G(H),k_\sigma,k_t)^2\right]\right).
\label{Markovian_NPO_rel}
\end{align}

Note that, if we restrict the optimization to functionals $\{\tilde{\omega}_j\}_{j=0,1}$, respectively defined in $\{\mathbb{M}_j(\G(H),k_\sigma,k_t)^2\}_{j=0,1}$, and functionals $\tilde{\omega}_{[0,1]}$, defined in $\mathbb{M}_{[0,1]}(\G(H),k_\sigma,k_t)^2$, then the size of the greatest positive semidefinite (PSD) constraint is $1+k_t$ times greater than the only PSD constraint needed to run 2-RDMT relaxations of order $k_\sigma$. 

To illustrate the performance of the hierarchy of SDPs (\ref{Markovian_NPO_rel}), we consider a quenching scenario where the initial many-body state is an element of the computational basis $\ket{a_1,\ldots,a_n}$, with $\vec{a}\in\{0,1\}^n$. The corresponding set of constraints on local momenta is in this case very simple, namely:
\begin{equation}
G_{\ket{\vec{a}}}=\{\langle\tilde{\sigma}_3^{(j)}\rangle=(-1)^{a_j},\forall j\}.    
\end{equation}
As a local observable, we choose $f(\sigma)=\sigma_3^{(n)}$, the spin of the last particle of the chain in the direction $\hat{z}$. We assume that the Hamiltonian $H$ is the Heisenberg Hamiltonian:
\begin{equation}
H = \frac{1}{4}\sum_{i=0}^{n-2} \sum_{a=1,2,3} \sigma_a^{(i)}\sigma_a^{(i+1)}.
\end{equation}
Taking $n=18$ and the initial state $\ket{1,0}^{\otimes 9}$, we set $k_\sigma=k_t=2$ in (\ref{Markovian_NPO_rel}) and compute upper and lower bounds for $\langle \sigma_3^{(18)}\rangle$, for $\tau\in [0,1]$, see Table \ref{tb:finite_size} \footnote{Our MATLAB codes for optimizations over 1D quantum systems can be found in \url{https://github.com/Navascues1980/1DEvol/tree/main}.}. We find that the greatest difference between upper and lower bounds is $0.0012$, achieved at $\tau=1$. This example illustrates that even low levels of hierarchy (\ref{Markovian_NPO_rel}) can provide good bounds for reasonably long evolution times.


        \begin{table}[ht]
        \begin{tabular}{c|ccc}
                $\tau$ & Lower bound &  Upper bound \\ \hline
  $0.1000$ & $0.9950$ & $0.9950$ \\
$0.2000$ & $0.9801$ & $0.9802$ \\
$0.3000$ & $0.9557$ & $0.9558$ \\
$0.4000$ & $0.9223$ & $0.9224$ \\
$0.5000$ & $0.8807$ & $0.8808$ \\
$0.6000$ & $0.8317$ & $0.8318$ \\
$0.7000$ & $0.7763$ & $0.7765$ \\
$0.8000$ & $0.7159$ & $0.7163$ \\
$0.9000$ & $0.6514$ & $0.6520$ \\
$1.0000$ & $0.5842$ & $0.5854$
        \end{tabular}
        \caption{\caphead{Local Hamiltonian evolution}. Lower and upper bounds for $\langle \sigma_3^{(18)}\rangle$ for several values of $\tau$.}
        \label{tb:finite_size}
        \end{table} 

When we tried to run system sizes greater than $n=18$ in our 256GB RAM desktop tower, the second-order SDP solver MOSEK \cite{mosek} reported an `out of memory' message. Therefore, we switched to the first-order solver SCS \cite{SCS1,SCS2,SCS3}, which allowed us to reach $n=25$ qubits in a desktop tower with 512 GB RAM. Taking as a starting state $\ket{1,0}^{\otimes 12}\otimes|1\rangle$, we also sought to compute the average $\langle\sigma^{(2)}_3\rangle$. The results for $k_\sigma=k_t=2$ are shown in the table below. The difference between upper and lower bounds is comparable to that of Table \ref{tb:finite_size}, which makes us conjecture that the approximation of the $k_\sigma=k_t=2$ SDP relaxation to the actual figures is essentially independent of $n$. It is worth remarking that SCS's actual memory use during the computations was just 15 GB. For system sizes beyond $n=25$, our program crashes at the pre-processing stage, from which we infer that MATLAB's poor allocation of memory resources is the main factor limiting the system size. Future implementations of our algorithm in other programming languages, such as Julia or C++, might allow reaching much larger system sizes with the same hardware.

        \begin{table}[ht]
        \begin{tabular}{c|ccc}
                $\tau$ & Lower bound &  Upper bound \\ \hline
  $0.1000$ & $-0.9950$ & $-0.9949$ \\
$0.2000$ & $-0.9800$ & $-0.9800$ \\
$0.3000$ & $-0.9557$ & $-0.9555$ \\
$0.4000$ & $-0.9225$ & $-0.9222$ \\
$0.5000$ & $-0.8809$ & $-0.8803$ \\
$0.6000$ & $-0.8320$ & $-0.8312$ \\
$0.7000$ & $-0.7768$ & $-0.7757$ \\
$0.8000$ & $-0.7165$ & $-0.7149$ \\
$0.9000$ & $-0.6523$ & $-0.6502$ \\
$1.0000$ & $-0.5857$ & $-0.5830$
        \end{tabular}
        \caption{\caphead{Local Hamiltonian evolution}. Lower and upper bounds for $\langle \sigma_3^{(25)}\rangle$ for several values of $\tau$ with SCS.}
        \label{tb:finite_size2}
        \end{table}

\subsection{Quenches in the thermodynamic limit}
We next consider a scenario where there are infinitely many spins in a regular lattice. We will assume, for simplicity, that the lattice is 1D, i.e., an infinite chain.

Systems in the thermodynamic limit of infinitely many particles are typically studied under the assumption of translation invariance. This means that both the initial state of the system and the Hamiltonian are invariant under the action of the \emph{translation operator} $\t$, an invertible homomorphism defined as:
\begin{equation}
\t (\sigma^{(j)}_a):= \sigma^{(j+1)}_a.
\end{equation}
In this context, a general translation-invariant Hamiltonian with nearest-neighbors interactions is written as:
\begin{equation}
H=\sum_{j\in \mathbb{Z}}\left(\sum_{a,b}h^{a,b}\sigma^{(j)}_a\sigma_b^{(j+1)}+\sum_{a} h^a\sigma^{(j)}_a\right).
\label{hamil_1D_TI}
\end{equation}
Note that the coefficients $h^{a,b}, h^a$ are site-independent.

The problem we want to solve is therefore:
\begin{align}
\min\; & \psi\left(f(\sigma(1))\right),\nonumber\\
\mbox{such that }& \{\sigma^{(j)}_{a}(t),\sigma^{(j)}_b(t)\}=2\delta_{ab}\id,j\in\mathbb{Z},a,b=1,2,3,\nonumber\\
&[\sigma^{(j)}_{a}(t),\sigma^{(k)}_b(t)]=2i\delta_{jk}\sum_c\varepsilon^{abc}\sigma^{(j)}_{c}(t),j,k\in\mathbb{Z},a,b=1,2,3,\\
&\frac{d\sigma^{(l)}_c(t)}{dt}=i\tau[H(\sigma(t)),\sigma_c^{(l)}(t)],\nonumber\\
&\psi\in G,\nonumber\\
&\psi\circ \t=\psi,
\label{DNPO_spins_TI}
\end{align}
Note that this time the index $j$ of the variables $\sigma^{(j)}_a(t)\}$ ranges from $-\infty$ to $\infty$. In addition, we are enforcing the initial state to be translation-invariant (TI). Notice as well that, despite the fact that $H(\tilde{\sigma})$ is not a polynomial, expressions of the form
\begin{equation}
[H(\tilde{\sigma}),p(\tilde{\sigma})]    
\end{equation}
are. Namely, for any polynomial $p$, the above equation represents a local observable.

If $\psi$ is TI, then so will be any state of the form $\psi_t(\bullet):=\psi(e^{iHt}\bullet e^{-i Ht})$. In the Heisenberg picture, this implies that
\begin{align}
&\psi(p(\t(\sigma(0))))=\psi(p(\sigma(0))), \forall p\in \tilde{\P}_0,\nonumber\\
&\psi(p(\t(\sigma(1))))=\psi(p(\sigma(1))), \forall p\in \tilde{\P}_1.
\end{align}
Following the derivation in Section \ref{sec:hier_section} of the NPO formulation of Problem (\ref{problem}), we find that the functionals $\tilde{\omega}_{[0,1]},\tilde{b}_0,\tilde{b}_1$ satisfy
\begin{align}
&\tilde{b}_0(p(\tilde{\t}(\tilde{\sigma}(0))))=\tilde{b}_0(p(\tilde{\sigma}(0))), \forall p\in \tilde{\P}_0,\nonumber\\
&\tilde{b}_1(p(\tilde{\t}(\tilde{\sigma}(1))))=\tilde{b}_1(p(\tilde{\sigma}(1))), \forall p\in \tilde{\P}_1,\nonumber\\
&\tilde{\omega}_{[0,1]}(p(\tilde{t},\tilde{\t}(\tilde{\sigma})))=\tilde{\omega}_{[0,1]}(p(\tilde{t},\tilde{\sigma})), \forall p\in \tilde{\P}_{[0,1]},
\label{TI_props}
\end{align}
where $\tilde{\t}$ is a linear map from polynomials to polynomials, defined by the relation
\begin{align}
&\tilde{\t}(p(\tilde{t},(\tilde{\sigma}^{(j)}_a)_{a,j},(\tilde{\sigma}^{(j)}_a(0))_{a,j},(\tilde{\sigma}^{(j)}_a(1))_{a,j})=p(\tilde{t},(\tilde{\sigma}^{(j+1)}_a)_{a,j},(\tilde{\sigma}^{(j+1)}_a(0))_{a,j},(\tilde{\sigma}^{(j+1)}_a(1))_{a,j}).
\end{align}

The last line of eq. (\ref{TI_props}) is proven as follows: 
\begin{align}
&\tilde{\omega}(\tilde{t}^kp(\tilde{\sigma}))=\int_{[0,1]}t^k dt\psi(e^{-iHt}p(\sigma(0))e^{iHt})=\hat{\psi}(p(\sigma(0))),
\label{interm_TI}
\end{align}
with $\hat{\psi}=\int dt t^k\psi_t$. Since $\psi_t$ is TI, so is $\hat{\psi}$. Hence,
\begin{align}
&\hat{\psi}(p(\sigma(0)))=\hat{\psi}\circ(p(\t(\sigma(0))))\nonumber\\
&=\tilde{\omega}(\tilde{t}^kp(\tilde{\t}(\tilde{\sigma})))=\tilde{\omega}\circ\tilde{\t}(\tilde{t}^kp).
\end{align}
Identifying the right-hand side of this equation with the left-hand side of eq. (\ref{interm_TI}) we arrive at the third line of eq. (\ref{TI_props}).

To model a thermodynamic state $\omega$ with finite computational resources, we consider its action on a finite set ${\cal O}$ of local operators and enforce that $\omega(o)=\omega(o')$, $\forall o,o'\in{\cal O}$ such that $o'=\t(o)$ (such `local' conditions are termed \emph{local translation invariance} (LTI) in the statistical physics literature \cite{LTI}). 

To handle the corresponding linear functionals, it will be useful to define the following sets of monomials:
\begin{align}
\mathbb{M}_{[0,1]}^{1D}(k_\sigma,k_t,n)&=\left\{\tilde{t}^{r}\prod_{j=1}^{s}\tilde{\sigma}^{(k_j)}_{a_j}:|k_j-k_i|< k_\sigma,k_i\not=k_j,\forall i\not=j,s\leq k_\sigma, \{k_j\}_j\subset\{1,\ldots,n\},r\leq k_t\right\},\nonumber\\
\mathbb{M}_{0}^{1D}(k_\sigma,n)&=\left\{\prod_{j=1}^{s}\tilde{\sigma}^{(k_j)}_{a_j}(0):|k_j-k_i|< k_\sigma,k_i\not=k_j,\forall i\not=j, s\leq k_\sigma,\{k_j\}_j\subset\{1,\ldots,n\}\right\},\nonumber\\
\mathbb{M}_{1}^{1D}(k_\sigma,n)&=\left\{\prod_{j=1}^{s}\tilde{\sigma}^{(k_j)}_{a_j}(1):|k_j-k_i|< k_\sigma,k_i\not=k_j,\forall i\not=j, s\leq k_\sigma,\{k_j\}_j\subset\{1,\ldots,n\}\right\}.
\end{align}
Intuitively, the sets above represent normal Pauli monomials generated by the first $n$ particles of the chain. The action of a translation operator over those will thus return a monomial with support on particles $\{2,\ldots,n+1\}$. Some of those will also have support in $\{2,\ldots,n\}$, which will allow us to enforce the LTI conditions on the state.

As before, we use some special symbols for the localizing matrices to lighten the notation:
\begin{equation}
\Lambda^{1D}_\alpha(p, \vec{k},n):=\{(\Lambda_\alpha)_{o,o'}:=\omega_\alpha([oo']),o,o'\in \mathbb{M}^{1D}_\alpha(\vec{k},n)\},
\end{equation}
for $\alpha\in\{[0,1],0,1\}$.

Once more, given a finite set of monomials ${\cal M}$, we denote by $G({\cal M})$ the set of convex constraints on the variables $(\omega(o):o\in {\cal M})$ that relax the property of $\omega$ being the ground or Gibbs state of some translation-invariant Hamiltonian $\texttt{H}$. See  \cite{araujo2024firstorder,fawzi2023certified} for complete sets of SDP or convex constraints characterizing the local averages of physically-relevant many-body states in the thermodynamic limit.

With the notation above, we define the SDP relaxation of order $\kappa=(k_\sigma,k_t,n)$ of problem (\ref{DNPO_spins_TI}) to be:
\begin{align}
q^{\kappa}:=\min & \; \tilde{b}_1\left(f(\tilde{\sigma}(1))\right),\nonumber\\
\mbox{such that }&\tilde{\omega}_{[0,1]}(1)=1,\Lambda^{1D}_{[0,1]}(1,k_\sigma,k_t,n)\geq 0,\nonumber\\
&\Lambda^{1D}_{[0,1]}(\tilde{t}-\tilde{t}^2,k_\sigma,k_t-1,n)\geq 0,\nonumber\\
&\tilde{\omega}_{[0,1]}(o)=\tilde{\omega}_{[0,1]}(o'),\forall o,o'\in \left[\mathbb{M}^{1D}_{[0,1]}(k_\sigma,k_t,n))^2\right], o'=\tilde{\t}^j(o),\mbox{ for some } j\in\mathbb{Z},\nonumber\\
&\tilde{b}_0(1)=1,\Lambda^{1D}_0(1,k_\sigma,n)\geq 0,\nonumber\\
&\tilde{b}_0\in G(\mathbb{M}^{1D}_0(\G(H),k_\sigma)^2),\nonumber\\
&\tilde{b}_0(o)=\tilde{b}_0(o'),\forall o,o'\in \left[\mathbb{M}^{1D}_{0}(k_\sigma,n))^2\right], o'=\tilde{\t}^j(o),\mbox{ for some } j\in\mathbb{Z},\nonumber\\
&\tilde{b}_1(1)=1,\Lambda_1(1,k_\sigma,n)\geq 0,\nonumber\\
&\tilde{b}_1(o)=\tilde{b}_1(o'),\forall o,o'\in \left[\mathbb{M}^{1D}_{1}(k_\sigma,n))^2\right], o'=\tilde{\t}^j(o),\mbox{ for some } j\in\mathbb{Z},\nonumber\\
&\tilde{\omega}_{[0,1]}\left(\left[j\tilde{t}^{j-1}o(\tilde{\sigma})+i\tau\tilde{t}^j[H(\tilde{\sigma}),o(\tilde{\sigma})]\right]\right) 
= \tilde{b}_1\left(o(\tilde{\sigma}(1))\right)-\delta_{j,0}\tilde{b}_0\left(o(\tilde{\sigma}(0))\right), \nonumber\\ 
&\forall o\in [\mathbb{M}^{1D}_0(k_\sigma,n)]^2,\left[j\tilde{t}^{j-1}o(\tilde{\sigma})+i\tau\tilde{t}^j[H(\tilde{\sigma}),o(\tilde{\sigma})]\right]\in\mbox{span}(\mathbb{M}^{1D}_{[0,1]}(k_\sigma,k_t,n)^2).
\label{Markovian_NPO_rel_TI}
\end{align}
Clearly, $q^\kappa\leq q^{\kappa'}$, for $\kappa\leq \kappa'$. This means, in particular, that the limit of the net $(q^\kappa)_{\kappa}$ exists.

Since the original DNPO problem (\ref{DNPO_spins_TI}) has infinitely many non-commuting variables, the convergence of the hierarchy (\ref{Markovian_NPO_rel_TI}) is not guaranteed by Theorem \ref{theorem:Markovian}. However, in Appendix \ref{app:thermo} we prove the following result:
\begin{theorem}
\label{theo:spins}
Define $q^\star:=\lim_{\kappa\to\infty}q^{\kappa}$. Then, there exists a Hilbert space $\H$, a representation $\pi:\p_0\to B(\H)$ and a TI state $\psi:B(\H)\to\C$, with $\psi\in G$, such that, for $\sigma_a^{(k)}:=\pi(\tilde{\sigma}_a^{(k)}(0))$, it holds that
\begin{align}
&\{\sigma^{(j)}_{a},\sigma^{(j)}_b\}=2\delta_{ab},\forall a,b,j,\nonumber\\
&[\sigma^{(j)}_{a},\sigma^{(k)}_b]=2i\delta_{jk}\sum_c\varepsilon^{abc}\sigma^{(j)}_{c},\forall a,b,j,k,\nonumber\\
\end{align}
and 
\begin{equation}
\psi(f(\sigma(1)))=q^\star,
\label{obj_spins}
\end{equation}
where $\sigma(t)$ is the solution of the differential equation
\begin{align}
&\frac{d\sigma_a^{(k)}(t)}{dt}=i\tau[H(\sigma(t)),\sigma_a^{(k)}(t)],\nonumber\\
&\sigma_a^{(k)}(0)=\sigma_a^{(k)}.
\label{diff_eq_spins}
\end{align}

\end{theorem}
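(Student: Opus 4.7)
The proof plan is to extract a feasible point of the infinite-volume analogue of \cref{problem_relax_relaxed} as a limit of the SDP hierarchy, and then to adapt the construction behind \cref{fund_theo,theorem:Markovian} to convert that feasible point into a genuine Hilbert-space trajectory on the infinite chain. Because each Pauli monomial has spectrum in $\{-1,+1\}$ and $\tilde t \in [0,1]$, the values of $\tilde\omega^\kappa_{[0,1]}, \tilde b^\kappa_0, \tilde b^\kappa_1$ on any fixed polynomial lie in a bounded interval depending only on that polynomial; Tychonoff on the countable set of monomials then yields a subsequential pointwise limit $(\tilde\omega^\star_{[0,1]}, \tilde b^\star_0, \tilde b^\star_1)$. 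This limit satisfies every SDP constraint (normalization, positivity of all moment and localizing matrices, the $G$-constraint on $\tilde b^\star_0$, local translation invariance at every window size, the Pauli identities, and the differential identity) at every polynomial degree, and attains $\tilde b^\star_1(f(\tilde\sigma(1))) = q^\star$.

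Next I would apply the GNS construction to $\tilde b^\star_0$ modulo the Pauli ideal, producing a Hilbert space $\H$, a cyclic vector $\Omega$, a $*$-representation $\pi:\p_0\to B(\H)$ with $\sigma^{(j)}_a:=\pi(\tilde\sigma^{(j)}_a(0))$ obeying the Pauli algebra as operator identities, and a state $\psi(\cdot):=\langle\Omega|\,\cdot\,|\Omega\rangle$ reproducing $\tilde b^\star_0$. The LTI identities at arbitrarily large windows promote to $\psi\circ\t=\psi$ in the limit, since every fixed local monomial eventually lies inside a window where LTI holds; similarly $\tilde b^\star_0\in G$ transfers to $\psi\in G$ because the $G$-constraints are closed convex conditions on local averages. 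Uniform boundedness of the $\sigma^{(j)}_a$ combined with locality of $H$ lets one define the Heisenberg flow $\alpha_t(o):=e^{i\tau t H}\,o\,e^{-i\tau t H}$ on the quasi-local algebra by Lieb--Robinson bounds; $\sigma^{(k)}_a(t):=\alpha_t(\sigma^{(k)}_a)$ automatically satisfies both the Pauli algebra at every $t$ and the differential equation \eqref{diff_eq_spins}.

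The final step is to identify $\psi(f(\sigma(1)))$ with $\tilde b^\star_1(f(\tilde\sigma(1)))=q^\star$. One direction is immediate: the auxiliary functionals $\tilde\omega^\sharp_{[0,1]}(p):=\int_0^1\!dt\,\psi(p(t,\sigma(t)))$ and $\tilde b^\sharp_r(o):=\psi(o(\sigma(r)))$ for $r\in\{0,1\}$ form a feasible point of the limiting $\mathbf{Q}'$-problem, by the Heisenberg-picture chain rule $\tfrac{d}{dt}p(t,\sigma(t))=(\D p)(t,\sigma(t))$ and the fundamental theorem of calculus, so $\psi(f(\sigma(1)))\ge q^\star$. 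For the reverse inequality, I would follow the Markovian adaptation of the argument for \cref{fund_theo} in Appendix \ref{conv_proof}: invoke the spectral theorem on $\tilde T$ (whose spectrum lies in $[0,1]$ by $\tilde t-\tilde t^2\ge 0$) inside the GNS representation of $\tilde\omega^\star$ to produce a measurable family of conditional states $\{\psi_t\}_{t\in[0,1]}$, use the differential identity to deduce $\psi_t=\psi_0\circ\alpha_t$ for a.e.\ $t$, and then conclude $\tilde b^\star_1=\psi_0\circ\alpha_1$ on local monomials, which forces equality on $f$.

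The main obstacle is the simultaneous limit $n,k_\sigma,k_t\to\infty$: one must show that the extracted functional genuinely encodes a TI state on the full infinite chain, and that its GNS representation supports the formally-infinite Heisenberg dynamics of $H$. Boundedness of the Pauli operators and Lieb--Robinson locality are the enabling tools; once both are in hand, the proof reduces to infinite-dimensional bookkeeping on top of the finite-volume template provided by \cref{theorem:Markovian}.
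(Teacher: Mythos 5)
Your overall architecture matches the paper's: extract a limit point of the SDP solutions by compactness on the (countable, uniformly bounded) set of monomial values, apply GNS, promote LTI to full translation invariance, and then re-run the finite-volume machinery of \cref{theorem:Markovian} to turn the limiting functionals into a genuine trajectory. Where you genuinely diverge is in how you handle the infinite-volume dynamics: you invoke Lieb--Robinson bounds to define the Heisenberg flow $\alpha_t$ on the quasi-local algebra, whereas the paper never leaves its own framework --- it observes that the only place the finite-variable proof breaks is Lemma \ref{time_part_lemma} (Cauchy--Kovalevskaya does not apply to the infinitely many components of the associated majorizing ODE (\ref{ODE_infty})), and repairs this by exhibiting an explicit translation-invariant analytic solution built from a \emph{finite}-dimensional ODE with all sites identified. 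Your route buys access to standard operator-algebraic existence results for the dynamics; the paper's buys self-containment and, more importantly, exactly the analyticity input that its Taylor-series lemmas (Lemma \ref{convergence_lemma}, eq.~(\ref{cosicas})) consume.

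The concrete gap is that you never address this fourth hurdle, yet your reverse-inequality step explicitly defers to ``the Markovian adaptation of the argument for \cref{fund_theo}.'' That adaptation cannot be followed verbatim: Lemma \ref{time_part_lemma} is the very first step of that proof, and its hypothesis (an analytic solution of the associated classical ODE around each $t_k$) is obtained there from Cauchy--Kovalevskaya, which fails for an infinite system of equations. Your Lieb--Robinson flow gives you a well-defined $\alpha_t$, but it does not by itself deliver the majorizing-ODE bounds used to control the Taylor remainders in eq.~(\ref{cosicas}) and to prove convergence of the series $s(t,\bar X,\bar Y;\tau)$, which is how the paper passes from the integrated moment identities $\omega(\D h)=\tilde b_1(h)-\tilde b_0(h)$ to the statement you summarize as ``$\psi_t=\psi_0\circ\alpha_t$ for a.e.\ $t$.'' That passage is the heart of the proof, not a corollary of the differential identity, so you either need the paper's explicit TI solution of (\ref{ODE_infty}) (a two-line fix once noticed) or you must redo the remainder estimates using locality directly. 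A second, minor omission: the limiting functionals are only defined on monomials supported on the half-chain $\{1,2,\ldots\}$, and must be extended to all of $\tilde\p_{[0,1]},\tilde\p_0,\tilde\p_1$ via $\tilde\omega_\alpha(p):=\tilde\omega_\alpha(\t^{k(p)}(p))$ before GNS; your remark that LTI ``promotes'' to TI is in the right spirit but skips this consistency check.
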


To test the performance of the SDP hierarchy, we consider our initial state to be the ``all spins up" state $\psi\in G_{\ket{0}^{\otimes \infty}}$ and investigate what happens if we switch on the Hamiltonian
\begin{equation}
H=\sum_{j\in\mathbb{Z}}\sigma_1^{(j)}\sigma_2^{(j+1)}.
\end{equation}
Note that neither this Hamiltonian is reflection-invariant nor the local terms commute with each other. $H$ has appeared earlier in the context of entanglement detection in large 1D systems \cite{Wang_2017,entanglement_marginal}.

Figure \ref{fig:thermo} shows the result of applying the SDP relaxations $\kappa\in\{(2,2,7),(2,5,7),(2,2,10),(3,3,7)\}$ for different values of $\tau$. It is worth noting that, for the datasets corresponding to $\kappa\in\{(2,2,7),(2,5,7)\}$, which were generated by a 32GB desktop, upper and lower bounds are essentially undistinguishable up to $\tau\approx 0.7$. Hierarchy (\ref{Markovian_NPO_rel_TI}) thus provides very good results for short times, given reasonable computational resources.

\begin{figure}[tbh]
	\centering
 	\begin{tikzpicture}
		\begin{axis}[%
			scale only axis,
			xmin=0,
			xmax=2,
			ymin=-0.8,
			ymax=1,
			grid=major,
			xlabel={$\tau$},
            ylabel = {$\langle \sigma_3\rangle$},
			axis background/.style={fill=white},
			    every axis plot/.append style={thick}
			]
            \addplot[green!50!black, dashed] table[col sep=space] {n1022upp.txt};
            \addplot[green!50!black, dashed] table[col sep=space] {n1022low.txt};
            \addplot[cyan] table[col sep=space] {n733upp.txt};
            \addplot[cyan] table[col sep=space] {n733low.txt};
            \addplot[black, dashed] table[col sep=space] {n722upp.txt};
            \addplot[black, dashed] table[col sep=space] {n722low.txt};            
            \addplot[red!70!black, dotted] table[col sep=space] {n725upp.txt};
            \addplot[red!70!black, dotted] table[col sep=space] {n725low.txt};            

        \end{axis}
	\end{tikzpicture}
    \caption{\label{fig:thermo}
\caphead{Thermodynamic limit.}
The plot indicates upper and lower bounds for $\langle\sigma_3\rangle$ at time $\tau$, calculated with the SDP relaxation (\ref{Markovian_NPO_rel_TI}), with $\kappa=(2,2,7)$ (dashed black), $\kappa=(2,5,7)$ (dotted red), $\kappa=(2,2,10)$ (dashed green) and $\kappa=(3,3,7)$ (solid blue). As the curves attest, increasing a bit the values of both $k_\sigma$ and $k_t$ significantly boosts predictability for larger times $\tau$.}
\end{figure} 

\subsection{Higher local and spatial dimensions}
In the last sections, we have been assuming that our many-body system is made of $n$ (or infinitely many) qubits. Our results, however, easily extend to condensed matter systems with higher local dimensions. Suppose, indeed, that we wished to deal with condensed matter systems of local dimension $D$. The first step would be finding a basis of Hermitian operators in $B(\C^D)$. One can take, e.g., the generators of $SU(D)$ in dimension $D$. A simpler alternative is using the operators
\begin{align}
&\id,\nonumber\\
&\bar{P}_{j}:=\proj{k},k=1,\ldots,n-1,\nonumber\\
&\bar{X}_{jk}:=\ket{k}\bra{j}+\ket{j}\bra{k},j,k\in\{1,\ldots,n\},j>k,\nonumber\\
&\bar{Y}_{jk}:=i(\ket{k}\bra{j}-\ket{j}\bra{k}),j,k\in\{1,\ldots,n\},j>k,
\end{align}
with $j>k$, $j,k=1,\ldots,D$. One can uniquely express any product of operators in ${\cal O}_D:=\{\bar{X}_{jk},\bar{Y}_{jk},\bar{P}_{j}\}_{j,k}$ as a linear combination of elements of ${\cal O}_D$. If we define one such set ${\cal O}^l_D:=\{\bar{X}^{(l)}_{jk},\bar{Y}^{(l)}_{jk},\bar{P}^{(l)}_{j}\}_{j,k}$ of variables for each subsystem $l$, i.e.,
\begin{align}
&\bar{P}^{(l)}_{j}:=\id^{\otimes (l-1)}\otimes\bar{P}_k\otimes \id^{\otimes (n-l)},k=1,\ldots,n-1,\nonumber\\
&\bar{X}^{(l)}_{jk}:=\id^{\otimes (l-1)}\otimes X_{jk}\otimes \id^{\otimes (n-l)},,j,k\in\{1,\ldots,n\},j>k,\nonumber\\
&\bar{Y}^{(l)}_{jk}:=\id^{\otimes (l-1)}\otimes Y_{jk}\otimes \id^{\otimes (n-l)},j,k\in\{1,\ldots,n\},j>k,
\end{align}
then again any product of those can be expressed as linear combinations of tensor products of operators in ${\cal O}_D$. Thus, like in the qubit case, any polynomial of the corresponding abstract, non-commuting variables $\{x^{(l)}_{jk},y^{(l)}_{jk},p^{(l)}_{j}\}$ admits a unique normal form. Crucially, for any irreducible representation $\pi$ satisfying the constraint
\begin{equation}
\mu(\pi(x),\pi(y),\pi(z))=[\mu](\pi(x),\pi(y),\pi(z)),\forall \mu,
\end{equation}
it holds that
\begin{align}
&\pi(\tilde{p}^{(l)}_{j})= U\bar{P}^{(l)}_{j}U^\dagger,\nonumber\\
&\pi(\tilde{x}^{(l)}_{ij})= U\bar{X}^{(l)}_{jk}U^\dagger,\nonumber\\
&\pi(\tilde{y}^{(l)}_{ij})= U\bar{Y}^{(l)}_{jk}U^\dagger,\nonumber\\
\end{align}
for some unitary $U$.

It is then trivial to tackle the $D$-dimensional analogs of (\ref{DNPO_spins}) or (\ref{DNPO_spins_TI}) through hierarchies of SDP relaxations similar to (\ref{Markovian_NPO_rel}) and (\ref{Markovian_NPO_rel_TI}). For finite systems, the higher-dimensional hierarchy will be complete by virtue of Theorem \ref{theorem:Markovian}. For infinite systems, the proof of Theorem \ref{theo:spins} trivially extends to cover higher local dimensions.

When considering thermodynamical systems, we stuck to 1D systems. Dealing with cubic lattices of higher spatial dimension is, however, easy. Take, e.g., a 2D material, in which case each site is labeled by a pair of integers $(j,k)\in\mathbb{Z}^2$, which denote its coordinates in the square lattice. The only changes we need to enforce on the hierarchy (\ref{Markovian_NPO_rel_TI}) to cover this case involve defining \emph{two} translation operators $\t_H,\t_V$, respectively for horizontal and vertical translations, demand the functionals $\tilde{\omega}_{[0,1]}$, $\tilde{b}_0$, $\tilde{b}_1$ to be invariant under the action of both, and replace the sets of monomials $\mathbb{M}_{\alpha}^{1D}$ with $\mathbb{M}_{\alpha}^{2D}(k_\sigma,k_t, N):=\mathbb{M}_\alpha({\cal G}_{2D},k_\sigma,k_t, N)$, with ${\cal G}_{2D}=\{((i,j),(i+1,j))\}\cup\{((i,j),(i,j+1))\}$. The proof of Theorem \ref{theo:spins} does not rely on the 1D geometry, so it immediately generalizes to 2D.

\section{Conclusion}
\label{sec:conclusion}
In this paper, we have introduced the notion of differential non-commutative polynomial optimization problems. We have shown that each such problem can be relaxed into an NPO problem. If the latter happens to be Archimedean, then it is equivalent to the original problem. Since all Archimedean NPO problems admit a complete hierarchy of SDP relaxations, so do all such differential non-commutative polynomial optimization problems.

We tested the speed of convergence of the SDP hierarchy with a number of examples, obtaining reasonably good results with reasonable computational resources. Notably, our hierarchy solves the problem of extrapolating or interpolating a quantum time series, a basic quantum information task for which no general algorithm existed previously. 

Most importantly, our reformulation of DNPO problems into NPOs leads to a complete hierarchy of tractable relaxations for the time-evolved local observables of condensed matter systems, be they finite or infinite. Together with the results in \cite{fawzi2023certified,araujo2024firstorder,steady_state, steady_state2}, they imply that one can model the effect of a quench on ground and Gibbs states of (some other) local Hamiltonians, as well as steady states of local Lindbladians. Our numerical tests suggest that the first levels of the hierarchy already provide very accurate approximations to the local properties of the quenched system, even in the thermodynamic limit. We therefore expect that our algorithm finds application in the study of quenched 2D models, for which the usual approximation methods, e.g., via tensor network states \cite{tns} fail.

\begin{acknowledgments}
\begin{wrapfigure}{r}[0cm]{2cm}
\begin{center}
\includegraphics[width=2cm]{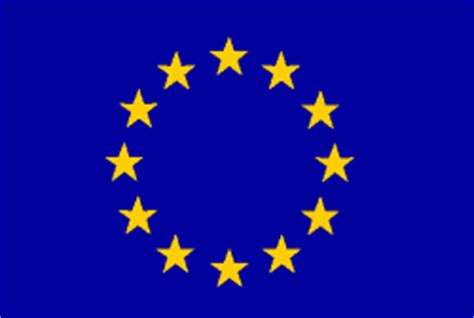}    
\end{center}
\end{wrapfigure} 
We acknowledge useful discussions with Victor Magron, Ilya Kull and Norbert Schuch. This research was funded in whole, or in part, by the Austrian Science Fund (FWF) ([10.55776/I6004] and [10.55776/COE1]) and the European Union– NextGenerationEU. For the purpose of open access, the author has applied a CC BY public copyright license to any Author Accepted Manuscript version arising from this submission. This project was also funded by the QuantERA II Programme, which has received funding from the European Union's Horizon 2020 research and innovation programme under Grant Agreement No 101017733. M.A. acknowledges support by the Spanish Agencia Estatal de Investigación, Grant No. RYC2023-044074-I, by the Q-CAYLE project, funded by the European Union-Next Generation UE/MICIU/Plan de Recuperación, Transformacióon y Resiliencia/Junta de Castilla y León (PRTRC17.11), and also by the Department of Education of the Junta de Castilla y León and FEDER Funds (Reference: CLU-2023-1-05). M.A and A.G. acknowledge funding from the FWF stand-alone project P 35509-N. 
\end{acknowledgments}

\section*{Author contribution statements}
M.N. proposed the SDP hierarchies of relaxations and proved their convergence. M. A. suggested the extra SDP constraint (\ref{constraint_mateus}). All authors participated in the coding of the numerical examples and in the writing of the manuscript.

\bibliographystyle{unsrtnat}
\bibliography{bibliography}

\clearpage

\begin{appendix}
\label{conv_proof}
\section{Proof of Theorem \ref{fund_theo}}
\label{app:proof_theo}
Problem (\ref{problem_relax}) is a relaxation of problem (\ref{problem}), so $q^\star\leq p^\star$. In the next pages we will prove that, if problem (\ref{problem_relax}) is Archimedean, then $p^\star\leq q^\star$, and thus $p^\star=q^\star$. We will prove the claim $p^\star\leq q^\star$ by showing that, for any feasible point of Problem (\ref{problem_relax}), we can construct a feasible point of Problem (\ref{problem}) with the same objective value.

Before proceeding, though, some words on notation. As in the main text, we will use upper case Latin letters to represent operators (e.g.: $\tilde{X}_1(0)$, $X(t)$), while small case letters will denote abstract, non-commuting variables (e.g.: $\tilde{x}_1(0)$, $x(t)$). The symbols $\tilde{\p}_Y,\tilde{\p}_Z$ will respectively denote the set of all polynomials on the variables $\tilde{y}$ or $\tilde{z}:=(\tilde{x},\tilde{y})$. Similarly, $\p$ ($\p_Y$) will denote the set of polynomials on the (also abstract) variables $x(t), y$ ($y$). As opposed to $\tilde{t}$, the symbol $t$ will represent a complex number.

To evaluate abstract variables with actual operators acting on some Hilbert space, we will make use of \emph{operator representations}. Given a set of abstract, non-commuting variables $r=(r_1,\ldots,r_n)$, a representation $\hat{\pi}$ of the set ${\cal R}$ of polynomials in $r_1,\ldots,r_n$ is a map from ${\cal R}$ to $B(\hat{\H})$, for some Hilbert space $\hat{\H}$, with the following properties:
\begin{enumerate}
    \item $\hat{\pi}(1)=\id_{\hat{\H}}$.
    \item For all $p,q\in {\cal R}$, $\hat{\pi}(pq)=\hat{\pi}(p)\hat{\pi}(q)$.
    \item For all $a,b\in\C$, $p,q\in {\cal R}$, $\hat{\pi}(ap+bq)=a\hat{\pi}(p)+b\hat{\pi}(q)$.
    \item For all $p\in{\cal R}$, $\hat{\pi}(p)^\dagger=\hat{\pi}(p^\dagger)$.    
\end{enumerate}
With this terminology, a feasible point of Problem (\ref{problem}) takes the form of a triple $(\H,\psi,\pi)$, respectively representing a Hilbert space, a state for the set $B(\H)$ of bounded operators acting in $\H$ and a map $\pi:\p\to B(\H)$ satisfying the properties above. A solution of Problem (\ref{problem_relax}) similarly corresponds to a triple $(\tilde{\H},\tilde{\omega},\tilde{\pi}:\tilde{\p}\to B(\tilde{H}))$.

We next define a classical system of ordinary differential equations (ODE), related to the differential constraint appearing in Problem (\ref{problem}), which will play an important role in the proof of convergence.
\begin{definition}
For $\gamma>0$, $\tau\in\R$, consider the (classical) system of ordinary differential equations on the vector of real numbers $\xi(t)$:
\begin{align}
\frac{d}{dt}\xi(t)&=g^+(t,\, \xi(t), \gamma\vec{1}^N),\nonumber\\
\xi(\tau)&=\mathbf{\gamma}\vec{1}^M,
\label{maj_ODE}
\end{align}
\noindent where the vector of polynomials $g^+$ is the result of replacing each coefficient of the polynomial $g$ in Problem (\ref{problem}) by its complex modulus and $\vec{1}^K$ denotes the all-ones $K$-dimensional vector. We will call (\ref{maj_ODE}) the \emph{associated ODE} with initial time $\tau$ and magnitude $\gamma$.
\end{definition}

For clarity, we next provide an example of the action of the operator $\bullet^+$:
\begin{example}
For $g_1(X(t))=(1-i)X_1(t)X_2(t)+(1+i)X_2(t)X_1(t)$, we have that $g^+_1(X(t))=\sqrt{2}X_1(t)X_2(t)+\sqrt{2}X_2(t)X_1(t)$.
\end{example}

To prove Theorem \ref{fund_theo}, we will need the following lemma:
\begin{lemma}
\label{time_part_lemma}
Let $\gamma\in\R^+$. Then, there exist $0=t_1<t_2<\ldots<t_{{\cal N}}=1$ such that, for each $k$, the associated ODE with initial time $\tau=t_k$ and magnitude $\gamma$ has an analytic solution $\xi^k(t)$ in a ball $\B(t_k,R_k):=\{z\in\C:|z-t_k|\leq R_k\}$, with $R_k> t_{k+1}-t_k$.
\end{lemma}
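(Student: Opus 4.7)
The plan is to prove a uniform lower bound $R_0 > 0$ on the radius of analytic existence of the associated ODE \eqref{maj_ODE}, valid for every initial time $\tau \in [0,1]$, and then pick any partition of $[0,1]$ whose mesh is strictly smaller than $R_0$, taking $R_k := R_0$ for all $k$.

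First I would note that, after substituting $\gamma \vec{1}^N$ for the last argument, $g^+(t, \xi, \gamma \vec{1}^N)$ is a polynomial in $(t, \xi)$ and therefore entire. By the classical existence theorem for ODEs with analytic right-hand side (Cauchy--Kowalevski), for each $\tau \in [0,1]$ there exist a radius $R(\tau) > 0$ and a unique analytic solution $\xi^\tau : \B(\tau, R(\tau)) \to \CC^M$ of \eqref{maj_ODE}. The content of the lemma is therefore to show that $\inf_{\tau \in [0,1]} R(\tau) > 0$.

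To obtain such a uniform bound I would use Cauchy's method of majorants. Let $d := \max_i \deg g_i^+$. Since every coefficient of $g_i^+$ is non-negative and there are only finitely many, there is a constant $C > 0$ depending only on $d$, $\gamma$ and the coefficients of the polynomials $\{g_i\}_i$ such that, uniformly for $|t| \leq 2$ and $\max_j |\xi_j| \leq \Xi$,
\begin{equation}
\max_i |g_i^+(t, \xi, \gamma \vec{1}^N)| \leq C (1 + \Xi)^d.
\end{equation}
I would then dominate the system \eqref{maj_ODE} by the explicitly integrable scalar autonomous ODE
\begin{equation}
\dot{\eta} = M C (1 + \eta)^d, \qquad \eta(\tau) = \gamma,
\end{equation}
with $M$ being the number of components of $\xi$. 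Its solution stays finite for $|t - \tau| \leq R_0$, where, for $d \geq 2$,
\begin{equation}
R_0 := \min\!\left(1, \; \frac{1}{2(d-1) M C (1 + \gamma)^{d-1}}\right) > 0
\end{equation}
(for $d \leq 1$ the majorant is linear and $R_0$ can be chosen arbitrarily large). Crucially, $R_0$ depends only on $\gamma$ and the polynomial data of \eqref{problem}, not on $\tau$. The standard majorant argument then shows, coefficient by coefficient, that the Taylor expansion of $\xi^\tau$ at $\tau$ is dominated in modulus by that of $\eta$, so that $\xi^\tau$ extends to a bounded analytic function on the closed ball $\B(\tau, R_0)$, giving $R(\tau) \geq R_0$.

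Finally I would pick $\mathcal{N}$ large enough that $1/(\mathcal{N} - 1) < R_0$, set $t_k := (k-1)/(\mathcal{N}-1)$ for $k = 1, \ldots, \mathcal{N}$, and take $R_k := R_0$; then $t_1 = 0$, $t_{\mathcal{N}} = 1$ and $R_k > t_{k+1} - t_k$ for every $k$, as required. The step requiring most care is the majorant argument itself: one must verify that the uniform bound on $|g_i^+|$ on a product of complex balls translates into coefficient-wise domination of the formal Taylor series of $\xi^\tau$ around $\tau$ by that of $\eta$, so that the radius-of-convergence estimate holds on the full complex ball $\B(\tau, R_0)$ and not only on the real interval $(\tau - R_0, \tau + R_0)$. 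This is however a classical consequence of the non-negativity of the coefficients of the Taylor expansion of $g^+$ around $(\tau, \gamma\vec{1}^M)$ and poses no genuine difficulty.
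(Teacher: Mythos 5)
Your proof is correct in substance but takes a genuinely different route from the paper. The paper argues greedily: it invokes Cauchy--Kovalevskaya at each $t_k$ to get some radius $\mu_k>0$, sets $t_{k+1}=\min(t_k+\mu_k/2,1)$, and then shows by contradiction that the sequence reaches $1$ in finitely many steps. You instead prove the stronger statement that the radius of analyticity is bounded below by a $\tau$-independent constant $R_0>0$ (via Cauchy's method of majorants, comparing against the explicitly solvable $\dot{\eta}=K(1+\eta)^d$), after which an equispaced partition of mesh $<R_0$ finishes the job. Your version buys something real: the paper's termination step (``otherwise the sequence would converge to a point $t_\infty$ for which the ODE has no analytic solution'') implicitly requires exactly the kind of uniformity in $\tau$ that you establish explicitly --- the radii $\mu_k$ could a priori shrink to zero as $t_k\to t_\infty$ without contradicting the existence of a positive radius at $t_\infty$ itself; what rules this out is that the initial datum $\gamma\vec{1}^M$ and the polynomial right-hand side give locally (hence, on $[0,1]$, globally) uniform bounds. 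So your argument is, if anything, the more airtight of the two. One small wrinkle: your scalar majorant is autonomous while $g^+$ depends on $t$, so literal coefficient-by-coefficient domination of the Taylor recursion does not hold as written; the standard fix is to adjoin $\xi_0(t)=t$ as an extra component (with $\dot{\xi}_0=1$, $\xi_0(\tau)=\tau\le 1$) so that the augmented system is autonomous with non-negative coefficients, adjusting $\eta(\tau)$ to $\max(1,\gamma)$ and the constants accordingly. This is routine and does not affect the conclusion.
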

\begin{proof}
We begin by defining the \textbf{plus differentiation} linear map $\D^+:\tilde{\p}\to\tilde{\p}$, which satisfies:
\begin{align}
\D^+(\tilde{t}) & = \id{}, \nonumber\\
\D^+(\id) &=\D^+(\tilde{y}_k)=0 \quad\forall k,\nonumber\\
\D^+(\tilde{x}_j)&= g^+_j(\tilde{t},\tilde{x},\tilde{y}) \quad\forall j,\nonumber\\
\D^+(hh')& =\D^+(h)h'+h\D^+(h') \quad \forall h,h'\in\tilde{\p}.
\end{align}
Let $\xi(t;\tau)$ be a solution of (\ref{maj_ODE}), analytic in $t$. Then, for any polynomial $h$, it holds that
\begin{equation}
\frac{d^k h(t,\xi(t;\tau),\gamma\vec{1}^N)}{dt^k}=((\D^{+})^kh)(t,\xi(t;\tau),\gamma\vec{1}^N).
\label{D_plus_t}
\end{equation}
In particular,
\begin{equation}
\frac{d^k h(t,\xi(t;\tau),\gamma\vec{1}^N)}{dt^k}\Bigr|_{t=\tau}=((\D^{+})^kh)(\tau,\gamma\vec{1}^M,\gamma\vec{1}^N),
\label{D_plus}
\end{equation}
where we invoked the boundary condition $\xi(\tau;\tau)=\gamma\vec{1}^M$.

By the Cauchy-Kovalevskaya theorem \cite{book_diff_eqs}, for any $\tau,\gamma\geq 0$, there exists a solution $\xi(t;\tau)$ of the associated ODE with initial time $\tau$ and magnitude $\gamma$, analytic within an open ball of radius $\mu(\tau)>0$ around the point $\tau$. One can thus express $\xi(t;\tau)$ as a Taylor series on $t$ from $t=\tau$, i.e.,
\begin{equation}
\xi(t;\tau)=\sum_{k=0}^\infty\frac{1}{k!}\frac{d^k\xi(t;\tau)}{dt^k}\Bigr|_{t=\tau}(t-\tau)^k=\gamma\vec{1}^M+\sum_{k=1}^\infty\frac{1}{k!}\left((\D^+)^{k-1}g^+\right)(\tau,\gamma\vec{1}^M,\gamma\vec{1}^N)(t-\tau)^k,
\label{expanse}
\end{equation}
where we invoked eq. (\ref{D_plus}). Due to the positivity of the coefficients of $g^+$, each of the Taylor coefficients is a polynomial on $\tau,\gamma$ with real, non-negative coefficients. Moreover, for $0\leq\tau\leq \tau'$ it holds that
\begin{equation}
0\leq \frac{1}{k!}\frac{d^k\xi(t;\tau)}{dt^k}\Bigr|_{t=\tau}=\left((\D^+)^{k-1}g^+\right)(\tau,\gamma\vec{1}^M,\gamma\vec{1}^N)\leq \left((\D^+)^{k-1}g^+\right)(\tau',\gamma\vec{1}^M,\gamma\vec{1}^N)=\frac{1}{k!}\frac{d^k\xi(t;\tau')}{dt^k}\Bigr|_{t=\tau'}.
\end{equation}
That is, the coefficients of the Taylor series for $\xi(t;\tau')$ at $t=\tau'$ dominate those of $\xi(t;\tau)$ at $t=\tau$.
For $\tau\geq 0$, the convergence radius $\mu(\tau)$ is thus a decreasing function. Take then $\mu:=\mu(1)$, which is greater than zero by the Cauchy-Kovalevskaya theorem. Then, for any ${\cal N}\in\N$ with $\mu>\frac{1}{{\cal N}-1}$, the times $\{t_k=\frac{k-1}{{\cal N}-1}\}_{k=1}^{{\cal N}}$ satisfy the conditions of the lemma.
\end{proof}

The rest of the proof of Theorem \ref{fund_theo} is quite complicated; to guide the reader through it, here is a roadmap:
\begin{itemize}
\item Starting from a feasible point $(\tilde{\H},\omega,\tilde{\pi})$ of Problem (\ref{problem_relax}), with $\|\tilde{\pi}(\tilde{t})\|,\|\tilde{\pi}(\tilde{x})\|,\|\tilde{\pi}(\tilde{y})\|\leq\gamma$, we find $t_1,...,t_{\cal N}$ satisfying the conditions of Lemma \ref{time_part_lemma}.

\item Next, we define states $\omega_1,...,\omega_{\cal N}$ with representations $\tilde{\pi}_j:\tilde{\p}\to B(\tilde{\H}_j)$ satisfying $\tilde{\pi}_j(\tilde{t})=t_j$, $\tilde{\pi}_j(\tilde{z}_k)\leq \gamma$, for all $j,k$ and 
\begin{equation}
\omega\left(\Pi_k\tilde{\pi}(\D h)\right)=\omega_{k+1}(\tilde{\pi}_{k+1}(h))-\omega_k(\tilde{\pi}_{k}(h)),
\end{equation}
for all $h\in\tilde{\p}$, where $\Pi_k:=\chi_{[t_{k+1},t_k]}(\tilde{\pi}(\tilde{t}))$. Intuitively, $\tilde{\pi}_j(\tilde{x})$ corresponds to $X(t_j)$ in Problem (\ref{problem}). 

\item For any $j$, the associated ODE with initial time $\tau=t_j$ and magnitude $\gamma$ is analytic in ${\cal B}(t_j, R_j)$, with $R_j>t_{j+1}-t_j$. This implies that the series (\ref{expanse}), with $\tau=t_j$, has convergence radius greater than or equal to $R_j$. Hence, for $j=1,...,{\cal N}-1$, the series
\begin{equation}
\hat{X}^{\underline{j}}(t):=\tilde{\pi}_j(\tilde{x})+\sum_{k=1}^\infty\frac{(t-t_j)^k}{k!}\D^{k-1}g(t_j,\tilde{\pi}_j(\tilde{x}),\tilde{\pi}_j(\tilde{y}))=:s(t,\tilde{\pi}_j(\tilde{x}),\tilde{\pi}_j(\tilde{y});t_j)
\end{equation}
also converges in ${\cal B}(t_j, R_j)$, since the norm of each term in the series is dominated by corresponding term in eq. (\ref{expanse}).

\item It is then proven that
\begin{align}
\omega\left(\Pi_j\tilde{\pi}(h)\right)=\int_{[t_j,t_{j+1}]}dt\omega_j\left(h(t,\hat{X}^{\underline{j}}(t),\tilde{\pi}_j(\tilde{y}))\right),
\label{road_map_ome_conn}
\end{align}
for all polynomials $h\in\tilde{\p}$. Moreover, 
\begin{equation}
\omega_{j}\left(h(\hat{X}^{\underline{j}}(t_{j+1}),\tilde{\pi}_j(\tilde{y}))\right)=\omega_{j+1}\circ\tilde{\pi}_{j+1}(h(\tilde{x},\tilde{y})),
\label{road_plan_omega_k}
\end{equation}
for all $h\in\tilde{\p}_Z$.

\item To construct a feasible point $(\H,\psi,\pi)$ of Problem (\ref{problem}), we set $\H=\tilde{\H}_1$, $\psi=\omega_1$, and define $\pi(y)=\tilde{\pi}_1(\tilde{y})$, $\pi(x(t))=\hat{X}(t)$, with
\begin{align}
&\hat{X}(t)=s(t,\tilde{\pi}_1(\tilde{x}(0)),\tilde{\pi}_1(\tilde{y});t_1),\mbox{ for }t\in [t_1,t_2],\nonumber\\
&\hat{X}(t)=s(t,\hat{X}(t_j),\tilde{\pi}_1(\tilde{y});t_j),\mbox{ for }t\in [t_j,t_{j+1}].
\end{align}
From eq. (\ref{road_plan_omega_k}) and the fact that $\tilde{\pi}_j(\tilde{z}_k)\leq\gamma,\forall j,k$, one shows that the series above converge and that $\hat{X}(1)=\pi(x(1))$. Moreover, by construction $\hat{X}(t)$ satisfies $\frac{d\hat{X}(t)}{dt}=g(t,\hat{X}(t),\pi(y))$.

\item It just remains to prove that $(\H,\psi,\pi)$ satisfies the constraints of Problem (\ref{problem}) and that it has the same objective value as $(\tilde{\H},\omega,\tilde{\pi})$ in Problem (\ref{problem_relax}). The proof relies on (\ref{road_map_ome_conn}), (\ref{road_plan_omega_k}) and the fact that, to certify that some operator $O\in B(\H)$ is positive semidefinite, it suffices to argue that $\psi(\mu(\pi(y)) O\mu(\pi(y))^\dagger)\geq 0$ for all $\mu\in \tilde{\p}_Y$.

\end{itemize}

We continue with the proof. Let $(\tilde{\H},\omega,\tilde{\pi})$ be a feasible point of Problem (\ref{problem_relax}), let $\gamma$ be the maximum norm of the operators $\tilde{\pi}(\tilde{t}),\tilde{\pi}(\tilde{x}),\tilde{\pi}(\tilde{y})$, and let $0=t_1<\ldots<t_{\cal N}=1$ satisfy the conditions of Lemma \ref{time_part_lemma}. We will further assume that $\tilde{\H}$ is separable, that $\omega(\bullet)=\bra{\phi}\bullet\ket{\phi}$, for some $\ket{\phi}\in\tilde{\H}$, and that $\tilde{\H}=\overline{\mbox{span}}\{\tilde{\pi}(a)\ket{\phi}:a\in\tilde{\p}\}$ (this condition is called cyclicity); otherwise, we apply the Gelfand-Neimark-Segal (GNS) construction \cite{takesaki1} to $\omega$ and find a new feasible representation $(\tilde{\H}',\omega',\tilde{\pi}')$ with those two properties and such that $\omega(\tilde{\pi}(p))=\omega'(\tilde{\pi}'(p))$, for all $p\in \p$. 

Notice that the condition $\tilde{\pi}(\tilde{t})-\tilde{\pi}(\tilde{t})^2\geq 0$ implies that the spectrum of $\tilde{\pi}(\tilde{t})$ lies in $[0,1]$. Furthermore, the operator $\tilde{\pi}(\tilde{t})$ has no eigenvalues. Indeed, suppose that $\tau\in [0,1]$ is an eigenvalue of $\tilde{\pi}(\tilde{t})$, and let $\ket{\psi_\tau}\in\tilde{\H}$ be one of its eigenvectors. Then, for any $p\in\tilde{\p}$, we have that 
\begin{equation}
\tilde{\pi}(\tilde{t})\tilde{\pi}(p)\ket{\psi_\tau}=\tilde{\pi}(p)\tilde{\pi}(\tilde{t})\ket{\psi_\tau}=\tau \tilde{\pi}(p)\ket{\psi_\tau}.
\end{equation}
That is, $\tilde{\pi}(p)$ maps eigenvectors of $\tilde{\pi}(\tilde{t})$ to eigenvectors of $\tilde{\pi}(\tilde{t})$ with the same eigenvalue. Now, setting $h=\tilde{t}^k$ in eq. (\ref{integral_conds}), one finds that the distribution $\nu_\omega(t)dt:=\omega(\delta(\tilde{\pi}(\tilde{t})-t))dt$ of values of $\tilde{\pi}(\tilde{t})$ in $\omega$ has the same moments as the uniform distribution in $[0,1]$. Since the spectrum of $\tilde{\pi}(\tilde{t})$ is bounded, both distributions must be the same. This means that $\braket{\phi'_\tau}{\phi}=0$, for any eigenvector $\ket{\phi'_\tau}$ of $\tilde{\pi}(\tilde{t})$ with eigenvalue $\tau$; otherwise, the distribution $\nu_\omega(t)dt$ would have a non-zero mass at $t=\tau$. We therefore have that $\bra{\phi_\tau}\tilde{\pi}(p)\ket{\phi}=0$, for all $p\in\tilde{\p}$, which, for $\ket{\phi_\tau}\not=0$, contradicts the cyclicity of $\ket{\phi}$.

From the above argument, it follows, not only that $\tilde{\pi}(\tilde{t})$ has no eigenvalues, but also that the spectrum of $\tilde{\pi}(\tilde{t})$ is exactly $[0,1]$.

For $k=2,\ldots,{\cal N}-1$, consider the family of polynomials on the real variable $t$ given by
\begin{equation}
d_n(t;t_k):=1-\int_0^t\delta_n(\tau;t_k)d\tau,
\label{def_d_n}
\end{equation}
with
\begin{equation}
\delta_n(t;t_k):=\frac{\Gamma\left(2n+\frac{3}{2}\right)}{\Gamma\left(1+2n\right)\sqrt{\pi}\max(1-t_k,t_k)}\left(1-\left[\frac{t_k-t}{\max(1-t_k,t_k)}\right]^{2}\right)^{2n}.
\label{def_delta_n}
\end{equation}
One can verify that, for $t\in[0,1]$,
\begin{equation}
\lim_{n\to\infty}\delta_n(t;t_k)=\delta(t-t_k), \lim_{n\to\infty}d_n(t;t_k)=\Theta(t_k-t),
\end{equation}
where $\delta(\bullet)$ denotes the Dirac delta distribution and $\Theta(\bullet)$ is the Heaviside function. It follows that
\begin{equation}
\lim_{n\to\infty}\omega\left(d_n(\tilde{\pi}(\tilde{t});t_k)h(\tilde{\pi}(\tilde{t}), \tilde{\pi}(\tilde{z}))\right)=\omega\left(\Theta(t_k-\tilde{\pi}(\tilde{t}))h(\tilde{\pi}(\tilde{t}), \tilde{\pi}(\tilde{z})\right).
\end{equation}
Moreover, since $\tilde{\pi}(\tilde{t})$ has no eigenvalues, $\Theta(t_k-\tilde{\pi}(\tilde{t}))$ is a projector\footnote{Note that, if $\ket{\psi_{t_k}}$ were an eigenvector of $\tilde{\pi}(\tilde{t})$ with eigenvalue $t_k$, then $\Theta(t_k-\tilde{\pi}(\tilde{t}))\ket{\psi_{t_k}}=\frac{1}{2}\ket{\psi_{t_k}}$.}.

On the other hand, for $k=2,\ldots,{\cal N}-1$, the limit of the sequence of functionals
\begin{equation}
\tilde{\omega}_k^{(n)}(h):=\omega\circ \tilde{\pi}\left(\delta_n(\tilde{t};t_k) h\right),
\end{equation}
with $h\in\tilde{\p}$, might not exist. However, for every $h\in\tilde{\p}$, $k=2,\ldots,{\cal N}-1$, the sequence $(\tilde{\omega}_k^{(n)}(h))_n$ is bounded. This follows from the fact that $\tilde{\pi}(\delta_n(\tilde{t};t_k))$ is positive semidefinite (and so $\omega(\delta_n(\tilde{\pi}(\tilde{t});t_k)\bullet)$ is an unnormalized state) and that 
\begin{equation}
\lim_{n\to\infty}\tilde{\omega}_k^{(n)}(1)=\lim_{n\to\infty}\int_{[0,1]}dt\nu_\omega(t)\delta_n(\tilde{t};t_k)=1.
\end{equation}
By the Banach-Alaoglu theorem \cite[Theorem IV.21]{reedsimon}, there exists a subsequence $(n_s)_s$ such that the limits are well-defined, for $k=2,\ldots,{\cal N}-1$ and all $h\in\tilde{\p}$. For $k=2,\ldots,{\cal N}-1$, define thus the linear functionals $\tilde{\omega}_k:\tilde{\p}\to\C$:
\begin{equation}
\tilde{\omega}_k(\bullet):=\lim_{s\to\infty}\omega\circ\tilde{\pi}(\delta_{n_s}(\tilde{t};t_k)\bullet).
\end{equation}
From the previous considerations, it follows that $\tilde{\omega}_k$ is positive and normalized, i.e., $\tilde{\omega}_k(1)=1$. Moreover, for $i=1,\ldots,N+M$ and $\mu\in\tilde{\p}$, it holds that
\begin{align}
&\tilde{\omega}_j(\mu (\gamma^2-\tilde{z}_i^2)\mu^\dagger)=\lim_{s\to\infty}\omega\left(\sqrt{\delta_{n_s}(\tilde{\pi}(\tilde{t});t_j)}\tilde{\pi}(\mu)  (\gamma^2-\tilde{\pi}(z_i)^2)\tilde{\pi}(\mu)^\dagger\sqrt{\delta_{n_s}(\tilde{\pi}(\tilde{t});t_j)}\right)\geq 0.
\label{bound_func}
\end{align}
In addition, for any $\epsilon>0$ it holds that $\lim_{n\to\infty}\delta_n(t;t_k)=0$, for $t\not\in[t_k-\epsilon, t_k+\epsilon]$. Thus, for all $\mu\in \tilde{\p}$,
\begin{align}
&\tilde{\omega}_k(\mu(t_k-\tilde{t}+\epsilon)\mu^\dagger)=\lim_{s\to\infty}\omega\left(\tilde{\pi}(\mu)(t_k-\tilde{\pi}(\tilde{t})+\epsilon)\delta_{n_s}(\tilde{\pi}(\tilde{t});t_k)\tilde{\pi}(\mu)^\dagger\right)\geq 0,\nonumber\\
&\tilde{\omega}_k(\mu(\tilde{t}-t_k+\epsilon)\mu^\dagger)=\lim_{s\to\infty}\omega\left(\tilde{\pi}(\mu)(\tilde{\pi}(\tilde{t})-t_k+\epsilon)\delta_{n_s}(\tilde{\pi}(\tilde{t});t_k)\tilde{\pi}\tilde{\pi}(\mu)^\dagger\right)\geq 0.
\label{single_t_k}
\end{align}
If we apply the GNS construction to $\tilde{\omega}_k$ we will therefore extract a representation $\tilde{\pi}_k:\tilde{\p}\to B(\tilde{\H}_k)$ and a state $\omega_k:B(\tilde{\H}_k)\to\C$ such that
\begin{equation}
\omega_k(\tilde{\pi}_k(h))=\tilde{\omega}_k(h),
\end{equation}
for all $h\in\tilde{\p}$. This representation will satisfy $\|\tilde{\pi}_j(\tilde{z}_i)\|\leq \gamma$ for all $i$  by eq. (\ref{bound_func}) and $-\epsilon\leq\tilde{\pi}_k(\tilde{t})-t_k\leq \epsilon$, for all $\epsilon>0$ by eq. (\ref{single_t_k}). Thus, $\tilde{\pi}_k(\tilde{t})=t_k$.
Altogether, for $k=2,\ldots,{\cal N}-1$, we have that
\begin{align}
&\omega_k(1)=1,\omega_k\geq 0,\omega_k(\tilde{\pi}_k(h))=\tilde{\omega}_k(h),\nonumber\\
&\tilde{\pi}_k(\tilde{t})=t_k,\|\tilde{\pi}_k(\tilde{z}_i)\|\leq \gamma,\forall i.
\label{relations_omega_k}
\end{align}

Finally, define the states $\omega_1=\omega_{\cal N}$ and the representations $\tilde{\pi}_1=\tilde{\pi}_{\cal N}$ by applying the GNS construction to $\omega$, restricted to $\tilde{\pi}(\tilde{\p}_Y)$. We extend the representations $\tilde{\pi}_1,\tilde{\pi}_{\cal N}$ to $\tilde{\p}$ through $\tilde{\pi}_1(\tilde{x}):=\tilde{\pi}_1(\tilde{x}(0))$, $\tilde{\pi}_1(\tilde{t}):=0$, $\tilde{\pi}_{\cal N}(\tilde{x}):=\tilde{\pi}_{\cal N}(\tilde{x}(1))$, $\tilde{\pi}_{\cal N}(\tilde{t})=1$. Since it comes from a GNS construction, $\omega_1$ is cyclic in $\tilde{\pi}_1(\tilde{\p}_Y)$, i.e., for all $A\in B(\tilde{H}_1)$, it holds that
\begin{align}
&\omega_1(\tilde{\pi}_1(\mu) A\tilde{\pi}_1(\nu))=0,\forall\mu,\nu\in\tilde{\p}_Y\Rightarrow A=0,\nonumber\\
&\omega_1(\tilde{\pi}_1(\mu^\dagger) A\tilde{\pi}_1(\mu))\geq 0,\forall\mu\in\tilde{\p}_Y\Rightarrow A\geq 0.
\end{align}

Next, notice that, by the definition of $\D$,
\begin{equation}
\D\left(d_n(\tilde{t};t_k)h(\tilde{t},\tilde{x},\tilde{y})\right)=-\delta_n(\tilde{t};t_k)h(\tilde{t},\tilde{x},\tilde{y})+d_n(\tilde{t};t_k)(\D h)(\tilde{t},\tilde{x},\tilde{y}).
\end{equation}
Setting $n=n_s$, averaging with $\omega$ and taking the limit $s\to\infty$, it follows, by the last line of Problem (\ref{problem_relax}) and eq. (\ref{relations_omega_k}), that
\begin{equation}
\omega\left(\Theta(t_k-\tilde{\pi}(\tilde{t}))\tilde{\pi}\left((\D h)(\tilde{t},\tilde{x},\tilde{y})\right)\right)=\omega_k(\tilde{\pi}_k(h))-\omega_1(\tilde{\pi}_1(h)),
\label{part_int}
\end{equation}
where we invoked the fact that $d_n(1;t_k)\to 0$ when $n\to\infty$. Note that $\Theta(t_k-\tilde{\pi}(\tilde{t}))=\chi_{[0,t_k]}(\tilde{\pi}(\tilde{t}))$, where $\chi_A(\bullet)$ denotes the characteristic function of set $A$. For $k=1,...,{\cal N}-1$, we next define the projectors $\Pi_k:=\chi_{[t_k,t_{k+1}]}(\tilde{\pi}(\tilde{t}))$, which can also be expressed in term of Heaviside functions:
\begin{align}
&\Pi_k=\Theta(t_{k+1}-\tilde{\pi}(\tilde{t}))-\Theta(t_{k}-\tilde{\pi}(\tilde{t})),k=1,\ldots,{\cal N}-1.
\end{align}
From eq. (\ref{part_int}), we have that
\begin{equation}
\omega\left(\Pi_k\tilde{\pi}(\D h)\right)=\omega_{k+1}(\tilde{\pi}_{k+1}(h))-\omega_k(\tilde{\pi}_{k}(h)),
\label{integral_conds_proj}
\end{equation}
for $k=1,\ldots,{\cal N}-2$. The relation above also holds for $k={\cal N}-1$ by virtue of eq. (\ref{integral_conds}). This result is intuitive, if we think the operators $\tilde{\pi}(\tilde{t}),\tilde{\pi}(\tilde{x}),\tilde{\pi}(\tilde{y})$ having the form (\ref{eq:extended_operators}). In that case, eq. (\ref{integral_conds_proj}) follows from the fundamental theorem of calculus.

We next prove that, for $j=1,\ldots,{\cal N}-1$, there exists a trajectory of operators $\{\hat{X}^{\underline{j}}(t):t\in[t_j,t_{j+1}]\}\subset B(\tilde{H}_j)$ that solves the equation
\begin{align}
&\frac{dX(t)}{dt}=g(t,X(t),\tilde{\pi}_j(\tilde{y})),\nonumber\\
&X(t_j)=\tilde{\pi}_j(\tilde{x}).
\label{ODE_step}
\end{align}
and such that
\begin{align}
\omega\left(\Pi_j\tilde{\pi}(h)\right)=\int_{[t_j,t_{j+1}]}dt\omega_j\left(h(t,\hat{X}^{\underline{j}}(t),\tilde{\pi}_j(\tilde{y}))\right),
\label{ODE_step_2}
\end{align}
for all polynomials $h\in\tilde{\p}$.

Let $h\in\tilde{\p}$ be an arbitrary polynomial. By the chain rule $\D(hh')=\D(h)h'+h\D(h')$, we have that
\begin{equation}
\frac{(t_{j+1}-\tilde{t})^n}{n!}\D^n h=-\D\left(\frac{(t_{j+1}-\tilde{t})^{n+1}}{(n+1)!}\D^nh\right)+\frac{(t_{j+1}-\tilde{t})^{n+1}}{(n+1)!}\D^{n+1} h.
\end{equation}
\noindent Applying $\tilde{\pi}$, multiplying by $\Pi_j$ and averaging with $\omega$, we have, by conditions (\ref{integral_conds_proj}), that
\begin{equation}
\omega\left(\Pi_j\frac{(t_{j+1}-\tilde{\pi}(\tilde{t}))^n}{n!}\tilde{\pi}(\D^n h)\right)= \frac{(t_{j+1}-t_j)^{n+1}}{(n+1)!}\omega_j\left(\tilde{\pi}_j(\D^n h)\right) +\omega\left(\Pi_j\frac{(t_{j+1}-\tilde{\pi}(\tilde{t}))^{n+1}}{(n+1)!}\tilde{\pi}(\D^{n+1} h)\right).
\end{equation}
The final term of the right-hand side of the equation above is the same as the left-hand side, with the replacement $n\to n+1$. Thus we can start from $n=0$ and iteratively substitute to arrive at
\begin{align}
&\omega\left(\Pi_j \tilde{\pi}(h)\right)=\sum_{k=0}^n\omega_j\left(\frac{(\D^kh)(t_j,\tilde{\pi}_j(\tilde{x}),\tilde{\pi}_j(\tilde{y}))}{(k+1)!}(t_{j+1}-t_j)^{k+1}\right)+\omega\left(\Pi_j\frac{(t_{j+1}-\tilde{\pi}(\tilde{t}))^{n+1}}{(n+1)!}\tilde{\pi}(\D^{n+1} h)\right)\nonumber\\
&=\int_{[t_j,t_{j+1}]}dt\omega_j\left(\sum_{k=0}^n\frac{(\D^kh)(t_j,\tilde{\pi}_j(\tilde{x}),\tilde{\pi}_j(\tilde{y}))}{k!}(t-t_j)^k\right)+\omega\left(\Pi_j\frac{(t_{j+1}-\tilde{\pi}(\tilde{t}))^{n+1}}{(n+1)!}\tilde{\pi}(\D^{n+1} h)\right).
\label{cosicas}
\end{align}
We next show that the expression evaluated by $\omega_j$ in the second line is the Taylor expansion of $h(t,\hat{X}^{\underline{j}}(t),\tilde{\pi}_j(\tilde{y}))$ from $t=t_j$, with 
\begin{equation}
\hat{X}^{\underline{j}}(t):=s(t,\tilde{\pi}_j(\tilde{x}),\tilde{\pi}_j(\tilde{y});t_j),
\label{piece_sol}
\end{equation}
for
\begin{equation}
s(t,X, Y;t'):=X+\sum_{k=1}^\infty\frac{(t-t')^k}{k!}\D^{k-1}g(t',X,Y).
\label{defin_s}
\end{equation}
To do so, we will need the following lemma.
\begin{lemma}
\label{convergence_lemma}
Let the associated ODE (\ref{maj_ODE}), with initial time $\tau$ and magnitude $\gamma$, admit a solution $\xi(t)$, analytic in $\B(\tau;r)\equiv\{z\in\C:|z-\tau|\leq r\}$. Then, the series $s(t,\bar{X},\bar{Y};\tau)$, as defined in (\ref{defin_s}), converges for $t\in \B(\tau;r)$, $\bar{X}\in\{X:\|X_1\|,\ldots,\|X_N\|\leq \gamma\}=:{\cal X}(\gamma)$, $\bar{Y}\in\{Y:\|\bar{Y}_1\|,\ldots,\|\bar{Y}_M\|\leq \gamma\}=:{\cal Y}(\gamma)$. Moreover, given $\bar{X}\in{\cal X}(\gamma)$, $\bar{Y}\in{\cal Y}(\gamma)$, the series $\bar{X}(t):=s(t,\bar{X},\bar{Y};\tau)$ is, for $t\in [\tau,\tau+r]$, the solution of the system of operator differential equations
\begin{align}
&\frac{d}{dt}\bar{X}(t)=g(t,\bar{X}(t),\bar{Y}),\nonumber\\
\label{diff_lemma}
\end{align}
\noindent with boundary condition $\bar{X}$ at $t=\tau$. 
\end{lemma}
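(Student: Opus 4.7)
The plan is to apply the classical majorant method of Cauchy--Kovalevskaya type: the associated ODE (\ref{maj_ODE}), whose scalar solution $\xi$ is analytic on $\B(\tau;r)$, will serve as a termwise dominant for the operator-valued series $s(t,\bar X,\bar Y;\tau)$. First I would introduce a commutative analog $\D^+$ of $\D$: a linear map on polynomials in commuting variables $(t,x,y)$ defined on generators by $\D^+(t)=1$, $\D^+(x_i)=g_i^+$, $\D^+(y_j)=0$, and extended by the Leibniz rule. A standard chain-rule induction shows that $\tfrac{d^k}{dt^k}\big|_{t=\tau} p(t,\xi(t),\gamma\vec{1}^N) = (\D^+)^k p\,(\tau, \gamma\vec{1}^M, \gamma\vec{1}^N)$ for any polynomial $p$, so the Taylor coefficients of $\xi_i$ at $\tau$ are the nonnegative numbers $(\D^+)^{k-1} g_i^+(\tau, \gamma\vec{1}, \gamma\vec{1})/k!$ (with $\xi_i(\tau)=\gamma$); by hypothesis this classical series converges absolutely for $|t-\tau|\leq r$.

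Next, I would prove by induction on $k$ the key majorization: for every $h\in\tilde{\p}$, every $\bar X\in{\cal X}(\gamma)$, $\bar Y\in{\cal Y}(\gamma)$, and every $\tau\in[0,1]$,
\[
\|(\D^k h)(\tau,\bar X,\bar Y)\|\leq (\D^+)^k h^+(\tau,\gamma\vec{1}^M,\gamma\vec{1}^N),
\]
where $h^+$ replaces each coefficient of $h$ by its modulus. The base case $k=0$ follows from the triangle inequality and operator-norm submultiplicativity $\|A_1\cdots A_n\|\leq \prod\|A_i\|$, which is order-insensitive and hence consistent with the commutative evaluation on the right. The inductive step combines the common Leibniz rule satisfied by $\D$ and $\D^+$ with the coefficient-level inequality $(\D g)^+\leq \D^+ g^+$, verified monomial by monomial. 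Setting $h=x_i$ gives $\|\D^{k-1} g_i(\tau,\bar X,\bar Y)\|\leq (\D^+)^{k-1} g_i^+(\tau,\gamma\vec{1},\gamma\vec{1})$, so the comparison test yields absolute convergence (in operator norm) of $s_i(t,\bar X,\bar Y;\tau)$ for all $t\in\B(\tau;r)$, proving the first claim.

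For the ODE statement, absolute convergence on $\B(\tau;r)$ permits term-by-term real differentiation on $[\tau,\tau+r]$, giving
\[
\tfrac{d}{dt}\,\bar X(t)=\sum_{k=0}^{\infty}\tfrac{(t-\tau)^k}{k!}\,\D^k g(\tau,\bar X,\bar Y).
\]
I would then identify this series with $g(t,\bar X(t),\bar Y)$. Matching their formal Taylor coefficients at $\tau$ reduces, by induction, to the noncommutative chain rule $\tfrac{d^k}{dt^k}\big|_{\tau} g(t,\bar X(t),\bar Y)=\D^k g(\tau,\bar X,\bar Y)$, which is built directly into the definition of $\D$ (the Leibniz rule preserves operator ordering, exactly as differentiation of a product of operator-valued curves does). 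Absolute convergence of the rearranged substitution series follows from applying the same majorant bound to $g$ in place of $x_i$, so both sides coincide as operator-valued analytic functions on $\B(\tau;r)$ and, in particular, on $[\tau,\tau+r]$. The initial condition $\bar X(\tau)=\bar X$ is immediate from the definition of $s$.

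The main obstacle is the coefficient-level majorization: because $\D$ respects operator ordering, one must verify that expanding $\D^k h$ into noncommuting monomials yields coefficients whose moduli are bounded, monomial by monomial in the commutative image, by the corresponding coefficients of $(\D^+)^k h^+$. The essential observation is that $\|A_1\cdots A_n\|\leq \prod\|A_i\|$ ignores the very ordering that $\D$ preserves, so commutativity may be imposed on the majorant side without loss; once this is secured, the rest of the argument is routine manipulation of absolutely convergent operator power series.
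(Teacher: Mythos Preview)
Your proposal is correct and follows essentially the same approach as the paper: introduce a majorizing derivation $\D^+$ built from $g^+$, identify the Taylor coefficients of the analytic scalar solution $\xi$ with $(\D^+)^{k-1}g^+(\tau,\gamma\vec{1},\gamma\vec{1})$, dominate $\|\D^{k-1}g(\tau,\bar X,\bar Y)\|$ by these via the norm inequality $\|p(Z)\|\leq p^+(\|Z\|)$, and conclude convergence and the ODE by matching Taylor coefficients. The only cosmetic difference is that you set up $\D^+$ on commuting variables whereas the paper keeps it on $\tilde{\p}$, but since the majorant is only ever evaluated at scalars this is immaterial.
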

\begin{proof}
To see that the series $s(t,\bar{X},\bar{Y};\tau)$ converges for $t\in{\cal B}(\tau,r)$, $\bar{X}\in{\cal X}(\gamma)$, $\bar{Y}\in{\cal Y}(\gamma)$, it is enough to prove the convergence of
\begin{equation}
\|\bar{X}_i\|+\sum_{k=1}^\infty\frac{1}{k!}\|(\D^{k-1}g_i)(\tau,\bar{X},\bar{Y})\||t-\tau|^k.
\label{norm_series}
\end{equation}

By assumption, for the associated ODE there exists a solution $\xi(t)$, analytic in ${\cal B}(\tau,r)$. Hence, one can express $\xi(t)$ as a Taylor series on $t$ from $t=\tau$ as in eq. (\ref{expanse}), see the proof of Lemma \ref{time_part_lemma}. Each of the Taylor coefficients is a polynomial on $\tau,\gamma$ with real, non-negative coefficients, expressed in terms of the plus differentiation map ${\cal D}^+$. This map has two convenient properties. One is that, for any polynomial $s$ with real, non-negative coefficients, it holds that $(\D^+s)^+=\D^+s$. The other one is that, for any tuple of non-negative scalars/scalar vectors $t,x,y$,
\begin{equation}
(\D^ks)^+(t,x,y)\leq ((\D^+)^ks)(t,x,y).
\label{D_plus_pos}
\end{equation}

Now, for any vector of natural numbers $\vec{i}$, define $Z^{\vec{i}}:=\prod_kZ^{i_k}$. Then, by the properties of the operator norm, the norm of any polynomial $p(Z_1,\ldots,Z_n)=\sum_ip_{\vec{i}}Z^{\vec{i}}$, evaluated on the tuple $Z$, is upper bounded by $\sum_i|p_{\vec{i}}|\prod_{k}\|Z^{i_k}\|=p^+(\|Z\|)$, with $\|Z\|=(\|Z_1\|,\|Z_2\|,\ldots)$. Hence, 
\begin{align}
\left((\D^+)^{k-1}g_i^+\right)(\tau,\gamma\vec{1}^M,\gamma\vec{1}^N)&\geq\left((\D^+)^{k-1}g_i^+\right)(\tau,\|\bar{X}^\tau\|,\|\bar{Y}\|)\nonumber\\
&=\left((\D^+)^{k}\tilde{x}_i\right)(\tau,\|\bar{X}^\tau\|,\|\bar{Y}\|)\nonumber\\
&\overset{(\ref{D_plus_pos})}{\geq} (\D^{k}\tilde{x}_i)^+(\tau,\|\bar{X}^\tau\|,\|\bar{Y}\|),\nonumber\\
&\geq \|(\D^{k}\tilde{x}_i)(\tau,\|\bar{X}^\tau\|,\|\bar{Y}\|)\|\nonumber\\
&= \|(\D^{k-1}g_i)(0,\bar{X}^\tau,\bar{Y})\|.
\end{align}
\noindent for all $k, i$. Since the left-hand side are the Taylor coefficients of $\xi(t)$ and this function is analytic in ${\cal B}(\tau,r)$, it follows that the series (\ref{norm_series}) converges in $t\in{\cal B}(\tau,r)$, and so does the series $s(t,\bar{X},\bar{Y};\tau)$. 

The so-defined function $\bar{X}(t)$ is therefore analytic in $t\in {\cal B}(\tau,r)$, and so is the polynomial vector $g(t,\bar{X}(t),\bar{\X}, \bar{Y})$. The Taylor coefficients of the functions $\frac{d\bar{X}(t)}{dt}$ and $g(t,\bar{X}(t),\bar{\X}, \bar{Y})$ are the same, so $\frac{d\bar{X}(t)}{dt}=g(t,\bar{X}(t),\bar{Y})$. Finally, evaluated at time $t=\tau$, $\bar{X}(t)$ returns the boundary condition $\bar{X}$. We have just proven that $\bar{X}(t)$ is a solution of (\ref{diff_lemma}).
\end{proof}

Now, since the associated ODE with initial time $\tau=t_j$ is analytic in $\B(t_j,t_{j+1}-t_j)$ and the norm of the entries of the operator tuples $\tilde{\pi}_j(\tilde{x}), \tilde{\pi}_j(\tilde{y})$ is bounded by $\gamma$, by Lemma \ref{convergence_lemma} $\hat{X}^{\underline{j}}(t)$ is analytic for $t\in \B(t_j,t_{j+1}-t_j)$. Also, $\hat{X}^{\underline{j}}(t)$ is a solution of the differential equation (\ref{ODE_step}) for $t\in[t_j, t_{j+1}]$. As such, for $t\in [t_j,t_{j+1}]$ it holds that
\begin{equation}
\frac{d^k}{dt^k}h(t,\hat{X}^{\underline{j}}(t),\tilde{\pi}_j(\tilde{y}))=({\cal D}^kh)(t,\hat{X}^{\underline{j}}(t),\tilde{\pi}_j(\tilde{y})).
\end{equation}
Since $\hat{X}^{\underline{j}}(t_j)=\tilde{\pi}_j(\tilde{x})$, we have that
\begin{equation}
\left.\frac{d^k}{dt^k}h(t,\hat{X}^{\underline{j}}(t),\tilde{\pi}_j(\tilde{y}))\right|_{t=t_{j}}=({\cal D}^kh)(t_j,\tilde{\pi}_j(\tilde{x}),\tilde{\pi}_j(\tilde{y})).
\end{equation}
Thus, as promised, the Taylor expansion of $h(t,\hat{X}^{\underline{j}}(t),\tilde{\pi}_j(\tilde{y}))$ from $t=t_j$ coincides with the operator evaluated by $\omega_j$ on the second line of (\ref{cosicas}). Since $h$ is a polynomial, the former function is also analytic in $t\in\B(t_j,t_{j+1}-t_j)$ and hence, for $t\in [t_j,t_{j+1}]$, the series converges to $h(t,\hat{X}^{\underline{j}}(t),\tilde{\pi}_j(\tilde{y}))$ in the limit $n\to\infty$.

In addition, the norm of the operator in the second term of the right-hand side of eq. (\ref{cosicas}) tends to zero for $n\to\infty$. This is so because the norm of the evaluated operator is upper bounded by
\begin{equation}
\max_{t\in[t_j,t_{j+1}]}\left\|\frac{(t_{j+1}-t)^{n+1}}{(n+1)!}\tilde{\pi}(\D^{n+1} h)(t,\tilde{x},\tilde{y})\right\|,
\label{normi_bound}
\end{equation}
which follows from the spectral measure of $\tilde{\pi}(\tilde{t})$, the fact that $\tilde{\pi}(\tilde{t})$ commutes with $\tilde{\pi}(\tilde{\p})$ and the definition of $\Pi_j$. Let $t^\star$ be the maximizer of the expression above. Then, the argument of the operator norm $\|\bullet\|$ in (\ref{normi_bound}) is the $n^{th}$ coefficient of the Taylor expansion of $h(t,\check{X}(t),\tilde{\pi}(\tilde{y}))$, from $t=t^\star$ to $t=t_{j+1}$, with
\begin{equation}
\check{X}(t):=s(t,\tilde{\pi}_j(\tilde{x}),\tilde{\pi}(\tilde{y});t^\star).
\end{equation}
We claim that the fact that the associated ODE with initial time $\tau=t_j$ and magnitude $\gamma$ is analytic in $\B(t_j,t_{j+1}-t_j)$ implies that the associated ODE with initial time $\tau=t^\star$ and magnitude $\gamma$ is analytic in $\B(t^\star,t_{j+1}-t^\star)$. Therefore, by Lemma \ref{convergence_lemma}, $\check{X}(t)$ is analytic in $\B(t^\star,t_{j+1}-t^\star)$, and so is $h(t,\check{X}(t),\tilde{\pi}(\tilde{y}))$. Consequently, its Taylor term tends to zero for $n\to\infty$, and so does the last term of eq. (\ref{cosicas}).

To prove the claim, first notice that the solution $\xi(t;t_j)$ of the associated ODE with initial time $\tau=t_j$ and magnitude $\gamma$ is also analytic in $\B(t^\star,t_{j+1}-t^\star)$, as this region is contained in $\B(t_j,t_{j+1}-t_j)$. Thus, one can Taylor expand $\xi(t;t_j)$ from $t^\star$ to $t_{j+1}$, and, due to the form of eq. (\ref{maj_ODE}), its $n^{th}$ Taylor coefficient can be expressed as $q_n(t^\star,\xi(t^\star;t_j))$ for some polynomial $q_n$ with positive coefficients. On the other hand, the $n^{th}$ Taylor coefficient of the solution $\xi(t;t^\star)$ of the associated ODE with initial time $\tau=t^\star$ and magnitude $\gamma$ is $q_n(t^\star,\gamma\vec{1}^M)$. Since $\xi(t;t_j)$ increases with $t$, we thus have that $\xi(t^\star;t_j)\geq \gamma\vec{1}^M$ entry-wise, and therefore $q_n(t^\star,\xi(t^\star;t_j))\geq q_n(t^\star,\gamma\vec{1}^M)$. The Taylor coefficients of $\xi(t;t_j)$ majorize those of $\xi(t;t^\star)$, and so the Taylor series of the latter must also converge for $t\in \B(t_j,t_{j+1}-t_j)$. The claim is proven.

We conclude that, for any polynomial $h\in\tilde{\p}$,
\begin{equation}
\omega\left(\Pi_j\tilde{\pi}(h)\right)=\int_{[t_j,t_{j+1}]}dt\omega_j\left(\sum_{k=0}^\infty\frac{\tilde{\pi}_j(\D^nh)}{k!}(t-t_j)^k\right)=\int_{[t_j,t_{j+1}]}dt\omega_j\left(h(t,\hat{X}^{\underline{j}}(t),\tilde{\pi}_j(Y))\right),
\label{taylor}
\end{equation}
hence proving eq. (\ref{ODE_step}).

Let $h\in\tilde{\p}_Z$. Then,
\begin{align}
&\omega_{j}\left(h(\hat{X}^{\underline{j}}(t_{j+1}),\tilde{\pi}_j(\tilde{y}))\right)-\omega_j\left(\tilde{\pi}_j(h)\right)\nonumber\\
&=\omega_{j}\left(h(\hat{X}^{\underline{j}}(t_{j+1}),\tilde{\pi}_j(\tilde{y}))\right)-\omega_{j}\left(h(\hat{X}^{\underline{j}}(t_{j}),\tilde{\pi}_j(\tilde{y}))\right)\nonumber\\
&=\int_{[t_j,t_{j+1}]}dt\omega_j\left(\frac{dh(\hat{X}^{\underline{j}}(t),\tilde{\pi}_j(Y))}{dt}\right)\nonumber\\
&=\int_{[t_j,t_{j+1}]}dt\omega_j\left(\D(h)(\hat{X}^{\underline{j}}(t),\tilde{\pi}_j(Y))\right)\nonumber\\
&\overset{(\ref{taylor})}{=}\omega\left(\Pi_j\D(h)(\tilde{x},\tilde{y})\right)\nonumber\\
&\overset{(\ref{integral_conds_proj})}{=}\omega_{j+1}\left(\tilde{\pi}_{j+1}(h)\right)-\omega_j\left(\tilde{\pi}_j(h)\right).
\end{align}
It follows that, for all $h\in\tilde{\p}_Z$,
\begin{equation}
\omega_{j}\left(h(\hat{X}^{\underline{j}}(t_{j+1}),\tilde{\pi}_j(\tilde{y}))\right)=\omega_{j+1}\circ\tilde{\pi}_{j+1}(h(\tilde{x},\tilde{y})).
\label{relation_states}
\end{equation}
For arbitrary $j=1,\ldots,{\cal N}-1$ and $k\leq j$, we define the function $s^j(\bar{Y})$ by recursion:
\begin{align}
&s^1(\bar{Y})=\bar{X}(0),\nonumber\\
&s^{j+1}(\bar{Y})=s(t_{j+1},s^{j}(\bar{Y}),\bar{Y};t_j).
\end{align}
That way, we can easily formulate the next proposition.
\begin{prop}
\label{prop_relations}
For $j=1,\ldots,{\cal N}$, the series $s^{j}(\tilde{\pi}_1(\tilde{y}))$ converges and its entries have norm bounded by $\gamma$. Moreover, for any polynomial $\mu\in\tilde{\p}_Z$,
\begin{equation}
\omega_1(\mu(s^{j}(\tilde{\pi}_1(\tilde{y})),\tilde{\pi}_1(\tilde{y})))=\omega_j(\mu(\tilde{\pi}_j(\tilde{x}),\tilde{\pi}_j(\tilde{y}))).
\label{rel_1_j}
\end{equation}
In addition, $s^{{\cal N}}(\tilde{\pi}_1(\tilde{y}))=\tilde{\pi}_1(\tilde{x}(1))$.
\end{prop}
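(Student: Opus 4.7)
I prove the three claims---convergence, the moment identity (\ref{rel_1_j}), and the norm bound---by induction on $j$, deriving the final assertion $s^{\cal N}(\tilde{\pi}_1(\tilde{y})) = \tilde{\pi}_1(\tilde{x}(1))$ as a corollary at $j={\cal N}$. The base case $j=1$ is immediate: by definition $s^1(\bar{Y}) = \tilde{\pi}_1(\tilde{x}(0))$, which coincides with $\tilde{\pi}_1(\tilde{x})$ under the extension of $\tilde{\pi}_1$ to $\tilde{\p}$; (\ref{rel_1_j}) is tautological, and the norm bound follows from $\|\tilde{\pi}_1(\tilde{z}_i)\|\leq\gamma$, noted in (\ref{relations_omega_k}).

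For the inductive step, assume all three claims hold at level $j$. Convergence of $s^{j+1}(\tilde{\pi}_1(\tilde{y})) = s(t_{j+1}, s^j(\tilde{\pi}_1(\tilde{y})), \tilde{\pi}_1(\tilde{y}); t_j)$ follows from Lemma~\ref{convergence_lemma}: the inductive hypothesis bounds the entries of $s^j(\tilde{\pi}_1(\tilde{y}))$ by $\gamma$, $\|\tilde{\pi}_1(\tilde{y}_k)\|\leq\gamma$ by (\ref{relations_omega_k}), and Lemma~\ref{time_part_lemma} guarantees $t_{j+1}\in\B(t_j,R_j)$. For the moment identity at level $j+1$, fix $\mu\in\tilde{\p}_Z$ and let $s^{(N)}(\tilde{x},\tilde{y})$ be the truncation of the series $s(t_{j+1},\tilde{x},\tilde{y};t_j)$ to its first $N$ terms, so that $\mu_N := \mu(s^{(N)}(\tilde{x},\tilde{y}),\tilde{y})\in\tilde{\p}_Z$. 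The inductive hypothesis applied to $\mu_N$ gives
\begin{equation*}
\omega_1\!\left(\mu_N(s^j(\tilde{\pi}_1(\tilde{y})), \tilde{\pi}_1(\tilde{y}))\right) = \omega_j\!\left(\mu_N(\tilde{\pi}_j(\tilde{x}), \tilde{\pi}_j(\tilde{y}))\right).
\end{equation*}
As $N\to\infty$, Lemma~\ref{convergence_lemma} (applied on either side to the $\gamma$-bounded tuples $(s^j(\tilde{\pi}_1(\tilde{y})),\tilde{\pi}_1(\tilde{y}))$ and $(\tilde{\pi}_j(\tilde{x}),\tilde{\pi}_j(\tilde{y}))$) together with continuity of polynomial evaluation on bounded operators permits us to take the limit inside $\omega_1$ and $\omega_j$. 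The left-hand side tends to $\omega_1(\mu(s^{j+1}(\tilde{\pi}_1(\tilde{y})), \tilde{\pi}_1(\tilde{y})))$, while the right-hand side tends to $\omega_j(\mu(\hat{X}^{\underline{j}}(t_{j+1}), \tilde{\pi}_j(\tilde{y})))$, which by (\ref{relation_states}) equals $\omega_{j+1}(\mu(\tilde{\pi}_{j+1}(\tilde{x}),\tilde{\pi}_{j+1}(\tilde{y})))$. This yields (\ref{rel_1_j}) at level $j+1$.

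The norm bound at level $j+1$ is extracted by substituting $\mu = \nu^\dagger(\gamma^2 - x_i^2)\nu$ with $\nu\in\tilde{\p}_Y$ into the identity just proved: the right-hand side is non-negative because $\|\tilde{\pi}_{j+1}(\tilde{x}_i)\|\leq\gamma$, so
\begin{equation*}
\omega_1\!\left(\nu(\tilde{\pi}_1(\tilde{y}))^\dagger \bigl(\gamma^2\id - s^{j+1}_i(\tilde{\pi}_1(\tilde{y}))^2\bigr) \nu(\tilde{\pi}_1(\tilde{y}))\right) \geq 0
\end{equation*}
for every $\nu\in\tilde{\p}_Y$. Cyclicity of $\omega_1$ on $\tilde{\pi}_1(\tilde{\p}_Y)$ then delivers $\gamma^2\id - s^{j+1}_i(\tilde{\pi}_1(\tilde{y}))^2 \geq 0$, i.e.\ the required norm bound. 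Finally, for the closing assertion, recall that $\omega_{\cal N}=\omega_1$ on $\tilde{\pi}_1(\tilde{\p}_Y)$ and $\tilde{\pi}_{\cal N}(\tilde{x}_i) = \tilde{\pi}_1(\tilde{x}_i(1))$ by construction; applying (\ref{rel_1_j}) at level ${\cal N}$ with $\mu = \nu\, x_i\, \eta$ for arbitrary $\nu,\eta\in\tilde{\p}_Y$ and invoking cyclicity once more gives $s^{\cal N}_i(\tilde{\pi}_1(\tilde{y})) = \tilde{\pi}_1(\tilde{x}_i(1))$.

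The main technical hurdle is a clean justification of the $N\to\infty$ interchange of limit and state in the moment identity; this rests on uniform operator-norm convergence of the truncations $s^{(N)}$, which Lemma~\ref{convergence_lemma} supplies thanks to the $\gamma$-bounded operator arguments. Once this is in hand, the remaining work is purely algebraic bookkeeping with cyclicity and positivity.
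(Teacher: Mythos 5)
Your proof is correct and follows essentially the same route as the paper's: induction on $j$, applying the inductive hypothesis to the truncated series $s(\cdot;\cdot,n)$ and passing to the limit, invoking eq.~(\ref{relation_states}) to shift from $\omega_j$ to $\omega_{j+1}$, and extracting the norm bound and the final identity $s^{\cal N}(\tilde{\pi}_1(\tilde{y}))=\tilde{\pi}_1(\tilde{x}(1))$ via the substitutions $\mu=\nu^\dagger(\gamma^2-\tilde{x}_i^2)\nu$ and $\mu=\nu\tilde{x}_i\eta$ together with cyclicity of $\omega_1$. The only cosmetic difference is your explicit remark on justifying the interchange of limit and state, which the paper handles by simply noting that both series converge.
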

\begin{proof}
We will prove the first part of the proposition (namely, eq. (\ref{rel_1_j}) and $\|s^{j}(\tilde{\pi}_1(\tilde{y}))_i\|\leq\gamma$, $i=1,\ldots,M$) by induction on $j$. For $j=1$, the claim is obvious, so let us assume that the claim holds for $j-1$. First, since $s^{j-1}(\tilde{\pi}_1(\tilde{y}))$ is bounded by $\gamma$, by Lemma \ref{convergence_lemma} it follows that the series 
\begin{equation}
s^{j}(\tilde{\pi}_1(\tilde{y}))=s(t_{j},s^{j-1}(\tilde{\pi}_1(\tilde{y})),\tilde{\pi}_1(\tilde{y});t_{j-1})
\end{equation}
converges. Call $s(t,\bar{X},\bar{Y};t',n)$ the polynomial in $\bar{X},\bar{Y}$ that results when we sum the first $n$ terms of the series (\ref{defin_s}). By the induction hypothesis, we have that, for any polynomial $\mu\in\tilde{\p}_Z$ and any $n\in\N$,
\begin{equation}
\omega_1\left(\mu(s(t_j,s^{j-1}(\tilde{\pi}_1(\tilde{y})),\tilde{\pi}_1(\tilde{y});t_{j-1},n),\tilde{\pi}_1(\tilde{y})\right)=\omega_{j-1}\left(\mu(s(t_j,\tilde{\pi}_{j-1}(\tilde{x}),\tilde{\pi}_{j-1}(\tilde{y});t_{j-1},n),\tilde{\pi}_{j-1}(\tilde{y})\right).
\end{equation}
Since the series on both sides converge, we can take the limit $n\to\infty$. This results in the identity
\begin{equation}
\omega_1\left(\mu(s^j(\tilde{\pi}_1(\tilde{y})),\tilde{\pi}_1(\tilde{y}))\right)=\omega_{j-1}\left(\mu(\hat{X}^{\underline{j-1}}(t_{j}),\tilde{\pi}_{j-1}(\tilde{y}))\right).
\end{equation}
By eq. (\ref{relation_states}), the right-hand side equals the right-hand side of eq. (\ref{rel_1_j}). Thus, eq. (\ref{rel_1_j}) is proven to hold for $j$ as well.

To prove that the entries of $s^j(\tilde{\pi}_1(\tilde{y}))$ are bounded by $\gamma$, we set
\begin{equation}
\mu(\tilde{x},\tilde{y}):= \nu(\tilde{y})\left(\gamma^2-\tilde{x}^2_i\right)\nu(\tilde{y})^\dagger
\end{equation}
in eq. (\ref{rel_1_j}). This results in the relation
\begin{equation}
\omega_1\left(\nu(\tilde{\pi}_1(\tilde{y}))\left(\gamma^2-s^j_i(\tilde{\pi}_1(\tilde{y}))^2\right)\nu(\tilde{\pi}_1(\tilde{y}))^\dagger\right)=\omega_j\left(\nu(\tilde{\pi}_j(\tilde{y}))\left(\gamma^2-(\tilde{\pi}_j(\tilde{x}_i))^2\right)\nu(\tilde{\pi}_j(\tilde{y}))^\dagger\right)\geq 0,
\end{equation}
where the inequality follows from the fact that $\tilde{\pi}_j(\tilde{x}_i)$ is bounded by $\gamma$. Since the relation holds for arbitrary $\nu$ and $\omega_1$ is cyclic, we have that the operator between the $\nu$'s on the left-hand side of the equation above is positive semidefinite. Thus, $\|s^j(\tilde{\pi}_1(\tilde{y}))\|\leq\gamma$. This completes the induction step and so the first part of the lemma holds for all $j$.

To finish the proof, we need to show that $s^{{\cal N}}(\tilde{\pi}_1(\tilde{y}))=\tilde{\pi}_1(\tilde{x}(1))$. This follows from eq. (\ref{rel_1_j}) with $j={\cal N}$. Indeed, set $\mu=\nu(\tilde{y})(\tilde{x}_i- \tilde{x}(1))\nu'(\tilde{y})$. Then we have that
\begin{equation}
\omega_1\left(\nu(\tilde{\pi}_1(\tilde{y}))(s^{\cal N}(\tilde{\pi}_1(\tilde{y}))- \tilde{\pi}_1(\tilde{x}(1)))\nu'(\tilde{\pi}_1(\tilde{y}))\right)=\omega_{\cal N}\left(\nu(\tilde{\pi}_{\cal N}(\tilde{y}))(\tilde{\pi}_{\cal N}(\tilde{x}_i)- \tilde{\pi}_{\cal N}(\tilde{x}(1)))\nu'(\tilde{\pi}_{\cal N}(\tilde{y}))\right)=0,
\end{equation}
where the last equality follows from the identity $\tilde{\pi}_{\cal N}(\tilde{x}_i)=\tilde{\pi}_{\cal N}(\tilde{x}(1))$. As before, since the relation holds for all polynomials $\nu,\nu'$ and $\omega_1$ is cyclic, it follows that $s^{{\cal N}}(\tilde{\pi}_1(\tilde{y}))=\tilde{\pi}_1(\tilde{x}(1))$.

\end{proof}
Proposition \ref{prop_relations} allows us to replace expressions containing of $\omega_j$, $j=1,\ldots,{\cal N}$ by the corresponding expressions with $\omega_1$. For ease of notation, from now on we rename $\tilde{\H}_1\to\H$ and $\omega_1\to\psi$, and define the representation $\pi:\p_Y\to B(\H)$ through $\pi(p(y)):=\tilde{\pi}_1(p(\tilde{y}))$. Of course, $\psi$ inherits from $\omega_1$ its cyclicity with respect to $\pi(\p_Y)$.

Now, consider the trajectory of operators in $B(\H)$ given by:
\begin{equation}
\hat{X}(t):=s(t,s^{j}(\pi(y)),\pi(y)),\mbox{ for } t_j\leq t\leq t_{j+1}.
\end{equation}
This is by construction a solution of the differential equation
\begin{equation}
\frac{d\hat{X}(t)}{dt}=g(t,\hat{X}(t),\pi(y)),
\end{equation}
for $t\in[0,1]$, with boundary conditions $\hat{X}(0)=\pi(x(0))$, $\hat{X}(1)=\pi(x(1))$. From now on, we thus extend the representation $\pi$ from $\p_Y$ to $\p$, through $\pi(x(t)):=\hat{X}(t)$.

In the next lines, we prove that $(\H,\psi,\pi)$ is a feasible point of Problem (\ref{problem}), with the same objective value as $(\tilde{\H},\omega,\tilde{\pi})$.

We first tackle the objective value. First, note that, in the $\psi,\pi$ notation, Proposition \ref{prop_relations} implies, for $t\in [t_j,t_{j+1}]$, that
\begin{equation}
\psi(h(t,s(t,s^j(\pi(y)),\pi(y);t_j,n),\pi(y))=\omega_j(h(t,s(t,\tilde{\pi}_j(\tilde{x}),\tilde{\pi}_j(\tilde{y});t_j,n),\tilde{\pi}_j(\tilde{y}))).
\label{ident_omegas_n}
\end{equation}
Taking the limit $n\to\infty$, we have
\begin{equation}
\psi(h(t,\pi(x(t)),\pi(y)))=\omega_j(h(t,X^{\underline{j}}(t),\tilde{\pi}_j(\tilde{y}))), \mbox{ for }t\in [t_j,t_{j+1}].
\label{ident_omegas}
\end{equation}

Now, let $h\in\tilde{\p}$. We have that
\begin{align}
&\int_{[0,1]}dt\psi(h(t,\pi(x(t)),\pi(y)))\nonumber\\
&=\sum_{j=1}^{{\cal N}-1}\int_{[t_j,t_{j+1}]}dt\psi(h(t,\pi(x(t)),\pi(y)))\nonumber\\
&\overset{(\ref{ident_omegas})}=\sum_{j=1}^{{\cal N}-1}\int_{[t_j,t_{j+1}]}dt\omega_j(h(t,X^{\underline{j}}(t),\tilde{\pi}_j(\tilde{y})))\nonumber\\
&\overset{(\ref{taylor})}{=}\sum_{j=1}^{{\cal N}-1}\omega(\Pi_j\tilde{\pi}\circ h(\tilde{t},\tilde{x},\tilde{y}))=\omega(\tilde{\pi}\circ h(\tilde{t},\tilde{x},\tilde{y})),
\label{connection_psi_omega}
\end{align}
where we made use of the identity $\sum_j\Pi_j=1$.

Setting $h=f$ in eq. (\ref{connection_psi_omega}), we have that
\begin{equation}
\psi\left(\int_{[0,1]} dt f(t,\pi(x(t)),\pi(y))\right)=\omega\left( \tilde{\pi}\circ f(\tilde{t},\tilde{x},\tilde{y})\right),
\end{equation}
\noindent i.e., the objective values of $(\H,\psi,\pi)$, $(\tilde{\H},\omega,\tilde{\pi})$ are the same.

We still need to prove that $(\H,\psi,\pi)$ satisfies the constraints of Problem (\ref{problem}). In this regard, the state-dependent constraints also follow straightforwardly from eq. (\ref{connection_psi_omega}):
\begin{equation}
\int_{[0,1]}dt\psi(\pi(q_j(t,x(t),y)))=\omega\circ \tilde{\pi}(q_j(\tilde{t},\tilde{x},\tilde{y}))\geq 0.
\end{equation}

It remains to prove that $\pi\circ p_k(t,\pi(x(t)),\pi(y))$ is a positive semidefinite operator for $t\in[0,1]$ and $k=1,\ldots,K$. Since $\psi$ is cyclic in $\pi(\p_Y)$, this is equivalent to showing that
\begin{equation}
\psi\left(h^\dagger(\pi(y))\left(\int_{[0,1]} dt \nu(t) p_k(t,\pi(x(t)), \pi(y))\right)h(\pi(y))\right)\geq 0,
\end{equation}
\noindent for all non-negative polynomials $\nu(t)$ in $[0,1]$ and all polynomials $h$. By eq. (\ref{connection_psi_omega}), the left-hand side equals
\begin{equation}
\omega\left( h^\dagger(\tilde{\pi}(\tilde{y}))\nu(\tilde{\pi}(\tilde{t})) p_k(\tilde{\pi}(\tilde{t}),\tilde{\pi}(\tilde{x}),\tilde{\pi}(\tilde{y}))h(\tilde{\pi}(\tilde{y}))\right).
\end{equation}
\noindent The averaged operator above is positive semidefinite\footnote{This follows from the fact that $\tilde{\pi}\circ \nu(\tilde{t}), \tilde{\pi}\circ p_k(\tilde{t},\tilde{x},\tilde{y})$ are positive semidefinite and $\tilde{\pi}\circ \nu(\tilde{t})$ commutes with $\tilde{\pi}\circ p_k(\tilde{t},\tilde{x},\tilde{y})$.}, and so its average over state $\omega$ is non-negative. 

In sum, $(\H,\psi,\pi)$ is a feasible point of Problem (\ref{problem}), with the same objective value as $(\tilde{\H},\omega,\tilde{\pi})$. Theorem \ref{fund_theo} has been proven.

\section{Extensions of the hierarchy}
\label{app:extensions}
In this appendix, we show how to deal with DNPO problems with more exotic constraints.

\subsection{Time-dependent average constraints}
Suppose that we are interested in solving a variant of problem $\mathbf{P}$ (\cref{problem}), with state constraints of the form (\ref{eq:time_const}). Namely, we wish to enforce time-dependent restrictions on the quantum state. 

As it turns out, such constraints admit an easy representation in terms of the auxiliary operators $\tilde{X},\tilde{Y}, \tilde{T}$ and the state $\omega$. 

First of all, under eq. (\ref{aver_corresp}), the relation \eqref{eq:time_const} is equivalent to the condition:
\begin{equation}
\omega\left(P(\tilde{T})Q_j(\tilde{T},\tilde{X}, \tilde{Y})\right)\geq 0\quad\mbox{for }j=1,\ldots,J,\mbox{ for all polynomials } P \mbox{ with }P(t)\geq 0,\forall t\in [0,1].
\label{time_const2}
\end{equation}
Indeed, suppose that eq. (\ref{eq:time_const}) holds and let $P(t)$ be a polynomial with $P(t)\geq 0$, for $t\in [0,1]$. By eq. (\ref{aver_corresp}) we have that
\begin{equation}
\omega\left(P(\tilde{T})Q_j(\tilde{T},\tilde{X}, \tilde{Y})\right)=\int_{[0,1]} P(t)dt\psi\left(Q_j(t,X(t), Y)\right)\geq 0,
\end{equation}
Let us next assume that (\ref{time_const2}) holds. For any $\hat{t}\in [0,1]$, there exists a sequence of polynomials $(P_n(t))_n$, positive in $t\in [0,1]$, such that
\begin{equation}
\lim_{n\to\infty} P_n(t)=\delta(t-\hat{t}).
\end{equation}
Invoking eq. (\ref{aver_corresp}), we have that
\begin{equation}
0\leq \lim_{n\to\infty}\omega\left(P_n(\tilde{T})Q_j(\tilde{T},\tilde{X}, \tilde{Y})\right)=\psi\left(Q_j(\hat{t},X(\hat{t}), Y)\right),
\end{equation}
and so eq. (\ref{eq:time_const}) follows.

Now, by \cite{Fekete}, a polynomial $P(t)$ on a variable $t\in[0,1]$ is positive if and only if it admits a decomposition into polynomials $\{f_j\}_j$, $\{f_j^+\}_j$ and $\{f_j^-\}_j$ of the form
\begin{equation}
P(t)=\sum_j f_j(t)^2+\sum_j tf^+_j\!(t)^2+\sum_j (1-t)f^{-}_j(t)^2.    
\end{equation}
\noindent Hence, condition \eqref{time_const2} can be enforced in the SDP relaxations of problem $\mathbf{Q}$ by imposing the positive semi-definiteness of the matrices $\{\Gamma^j, \Gamma^{+,j},\Gamma^{-,j}\}_{j=1}^J$, defined through the relations:
\begin{align}
\Gamma^j_{kl}&=\tilde{\omega}(\tilde{t}^{k+l}Q_j(\tilde{t},\tilde{x}, \tilde{y})) \nonumber\\
\Gamma^{j,+}_{kl}&=\tilde{\omega}(\tilde{t}^{k+l+1}Q_j(\tilde{t},\tilde{x}, \tilde{y})), \nonumber\\
\Gamma^{j,-}_{kl}&=\tilde{\omega}(\tilde{t}^{k+l}(1-\tilde{t})Q_j(\tilde{t},\tilde{x},\tilde{y})).
\end{align}
The proof of Theorem \ref{fund_theo} generalizes when constraints of the form (\ref{eq:time_const}) are dealt with in this way. Thus, under the Archimedean condition, the corresponding hierarchy of SDPs converges to the solution of the differential problem.

\subsection{More than two boundary conditions}
\label{sec:multiple_boundaries}
Another interesting extension of problem $\mathbf{P}$ is a scenario where the vector variable $Y$ includes, not just the boundary values $X(0)$, $X(1)$, but $(X(t_1), \ldots,X(t_N))$, for $\{t_k\}_k\subset [0,1]$, with $t_1=0, t_N=1$. 
In order to relax this problem to a non-commutative polynomial optimization problem, we define the extended operators $\tilde{X}, \tilde{Y}, \tilde{T}$ as before, and add projector-type variables $\Pi_k\equiv\Theta(t_{k+1}-\tilde{T})-\Theta(t_{k}-\tilde{T})$, where $\Theta$ denotes the Heaviside function, to the lot. These projectors not only commute with the auxiliary variables $\tilde{X},\tilde{Y}, \tilde{T}$, but also satisfy $\sum_k\Pi_k=\id$ and, most importantly, the relation
\begin{equation}
\omega\left(\D(h)(\tilde{T},\tilde{X},\tilde{Y})\Pi_k\right)=\omega\left(h(t_{k+1},\tilde{X}(t_{k+1}),\tilde{Y})\right)-\omega\left(h(t_{k},\tilde{X}(t_k), \tilde{Y})\right).
\end{equation}
The proof of Theorem \ref{fund_theo} similarly extends to this scenario.

\section{Proof of Theorem \ref{theorem:Markovian}}
\label{app:Markovian}
The proof of Theorem \ref{theorem:Markovian} is very similar to that or Theorem \ref{fund_theo}. There is, however, an extra difficulty. This time, the state $\omega_1$ is cyclic with respect to the algebra generated by the operators $\tilde{\pi}_1(\tilde{y}_{\text{rest}}), \tilde{\pi}_1(\tilde{x})$. Namely, to prove that some operator $O\in B(\tilde{H}_1)$ is positive semidefinite, one must show that 
\begin{equation}
\omega_1(\mu(\tilde{\pi}_1(\tilde{x}),\tilde{\pi}_1(\tilde{y}_{\text{rest}})) O \mu(\tilde{\pi}_1(\tilde{x}),\tilde{\pi}_1(\tilde{y}_{\text{rest}}))^\dagger)\geq 0, \forall \mu\in \tilde{\p}_0.
\end{equation}
Nonetheless, due to the structure of Problem (\ref{problem_relax_relaxed}), very often we will rather derive something like this:
\begin{equation}
\omega_1(\mu(\hat{X}(t),\tilde{\pi}_1(\tilde{y}_{\text{rest}})) O \mu(\hat{X}(t),\tilde{\pi}_1(\tilde{y}_{\text{rest}}))^\dagger)\geq 0,\forall \mu\in \tilde{\p}_0.
\end{equation}

Our first step is to ensure that the two conditions are equivalent. This is tackled by the next lemma.


\begin{lemma}
\label{lemma:C_algebra_eq}
Let $X(t)=(X_1(t),\ldots,X_M(t))$ be a solution of the differential equation:
\begin{align}
&\frac{dX}{dt}=g(t,X(t),Y),
\end{align}
for $t\in[t_0,t_f]$ and, for each $i$ and some $R>0$, let $X_i(t)$ be analytic within the region ${\cal R}:=\{t:|\mbox{Im}(t)|\leq R,\mbox{Re}(t)\in [-R+t_0,t_f+R]\}$. Then, for any $t\in [t_0,t_f]$, the $C^*$-algebra $\mathbb{A}(t)$ generated by $X(t),Y$ does not depend on $t$.
\end{lemma}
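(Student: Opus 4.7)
The plan is to use an operator-valued Taylor expansion. At each $t \in [t_0, t_f]$ I would expand $X(t+\epsilon)$ as a norm-convergent power series in $\epsilon$ whose coefficients are, by virtue of the ODE and the differentiation map $\D$, polynomials in $X(t), Y$; this will force $X(t+\epsilon)$ to lie in the $C^*$-algebra $\mathbb{A}(t)$. A symmetric argument together with a compactness/covering of $[t_0,t_f]$ then yields $\mathbb{A}(t) = \mathbb{A}(t_0)$ for every $t$.

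First I would observe that, iterating the ODE $\tfrac{dX_i}{dt} = g_i(t, X(t), Y)$ and applying the Leibniz rule encoded in Definition \ref{def:differentiation}, one obtains
\[
\frac{d^k X_i}{dt^k}(t) = (\D^{k-1} g_i)(t, X(t), Y) \qquad (k \ge 1),
\]
exactly as already used in the main text, cf.\ eq.~(\ref{eq:Dh}). For each fixed $t$ the right-hand side is a polynomial in the operators $X(t), Y$ with scalar ($t$-dependent) coefficients; hence it lies in the $*$-algebra generated by $X(t)$ and $Y$, and a fortiori in its norm-closure $\mathbb{A}(t)$.

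Next, since $X_i$ is operator-norm analytic on the strip $\mathcal{R}$, the largest disk around any $t \in [t_0, t_f]$ contained in $\mathcal{R}$ has radius at least $R$, and the operator-valued Taylor series
\[
X_i(t+\epsilon) = X_i(t) + \sum_{k=1}^\infty \frac{\epsilon^k}{k!}\,(\D^{k-1} g_i)(t, X(t), Y)
\]
converges in operator norm for $|\epsilon| < R$. Each partial sum belongs to $\mathbb{A}(t)$, and $\mathbb{A}(t)$ is norm-closed, so $X_i(t+\epsilon) \in \mathbb{A}(t)$. Combined with the fact that $Y_j \in \mathbb{A}(t)$ trivially, this yields $\mathbb{A}(t+\epsilon) \subseteq \mathbb{A}(t)$. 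The reverse inclusion follows by Taylor-expanding $X_i(t) = X_i((t+\epsilon) + (-\epsilon))$ around $t+\epsilon$, which is legitimate because $\mathcal{R}$ extends a distance $R$ on \emph{both} sides of $[t_0, t_f]$, so $X_i$ is also analytic in a disk of radius at least $R$ around $t+\epsilon$. Hence $\mathbb{A}(t) = \mathbb{A}(t+\epsilon)$ whenever $|\epsilon| < R$, and by covering the compact interval $[t_0, t_f]$ with finitely many open intervals of length smaller than $R$ and chaining these equalities, $\mathbb{A}(t)$ is independent of $t$.

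The hard part will be purely analytic rather than algebraic: to make the Taylor-expansion step rigorous one needs that ``analytic on $\mathcal{R}$'' means operator-norm-analytic, so that the operator-valued Cauchy integral formula yields the Cauchy-type estimates
\[
\left\| \frac{1}{k!}\frac{d^k X_i}{dt^k}(t) \right\| \le \frac{\max_{|z-t|=r} \|X_i(z)\|}{r^k}
\]
for any $r < R$, which in turn guarantees that the Taylor series above converges in operator norm inside the radius of analyticity. Once this standard fact about holomorphic Banach-space-valued functions is granted, the rest of the argument is algebraic and follows directly from the same manipulations of $\D$ used throughout Section~\ref{sec:hier_section} and Appendix~\ref{app:proof_theo}.
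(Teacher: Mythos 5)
Your proposal is correct and follows essentially the same route as the paper's proof: Taylor-expand $X_i$ around each point using the identity $\frac{d^k X_i}{dt^k}=(\D^{k-1}g_i)(t,X(t),Y)$, note the partial sums lie in the norm-closed algebra $\mathbb{A}(t)$, and chain the resulting local equalities across $[t_0,t_f]$ by compactness. If anything, your explicit appeal to Cauchy estimates for operator-norm analyticity makes the convergence step slightly more careful than the paper's, which simply asserts the expansion is valid within the ball of analyticity.
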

\begin{proof}
Let $\tau,\tau'\in[t_0,t_f)$, with $r:=|\tau-\tau'|< R$. Then $X_i(t)$ is analytic within the ball $\{|t-\tau|\leq r\}\subset {\cal R}$, which implies that one can expand $X_i(t)$ as a series of powers in $t-\tau$ all the way up to $t=\tau'$. Thus,
\begin{equation}
X_i(\tau')=X_i(\tau)+\sum_{k=1}^\infty \frac{\D^{k-1}(g)(t,X(t),Y)}{k!}\Bigr|_{t=\tau}(\tau'-\tau)^k.
\end{equation}
This implies that $\{X_i(\tau')\}_i\subset\mathbb{A}(\tau)$. Thus, $\mathbb{A}(\tau')\subset \mathbb{A}(\tau)$. The argument is symmetric and so it also holds that $\mathbb{A}(\tau)\subset \mathbb{A}(\tau')$, and therefore $\mathbb{A}(\tau)=\mathbb{A}(\tau')$. Now, for any $\tau,\tau'\in [t_0,t_f]$, with $\tau<\tau'$, one can find $R>\Delta>0, n\in\N$ such that $\tau+\Delta n=\tau'$ and the ball $\{|t-\tau+\Delta k|\leq \Delta\}$ is contained in $\subset {\cal R}$, for $k=0,\ldots,n$. By the previous argument, we therefore have that $\mathbb{A}(\tau)=\mathbb{A}\left(\tau+\Delta\right)=\ldots=\mathbb{A}(\tau')$.
\end{proof}

\begin{proof}[Proof of Theorem \ref{theorem:Markovian}]
The last statement of the theorem follows from the proof of \cite[Theorem $1$]{NPO}: in a nutshell, if the Archimedean condition holds independently for the three sets of variables, then the solution of each SDP relaxation can be taken bounded, and the Banach-Alaoglu theorem \cite[Theorem IV.21]{reedsimon} guarantees that there is a converging subsequence of the resulting sequence of triples of partially defined functionals. The limiting triple will satisfy all constraints of problem (\ref{problem_relax_relaxed}), and thus the sequence of SDP lower bounds converges to the solution of (\ref{problem_relax_relaxed}).

For the equivalence between (\ref{problem_relax2}) and (\ref{problem_relax_relaxed}), the second one is clearly a relaxation of the first one. Thus, to prove the theorem it suffices to show that any tuple of feasible functionals $\tilde{\omega}_{[0,1]},\tilde{b}_{0}, \tilde{b}_{1}$ satisfying the constraints of Problem (\ref{problem_relax_relaxed}) can be used to construct feasible operators $X(t),Y_{\text{rest}}$ and a state $\psi$ such that
\begin{align}
&p^{0}_k(X(0),Y_{\text{rest}})\geq 0,\mbox{ for }k=1,\ldots,K_{0},\nonumber\\
&p^{1}_k(X(0),Y_{\text{rest}})\geq 0,\mbox{ for }k=1,\ldots,K_{1},\nonumber\\
&p^{[0,1]}_k(t,X(t),Y_{\text{rest}})\geq 0,\mbox{ for }k=1,\ldots,K_{[0,1]},\nonumber\\
&\frac{dX(t)}{dt}=g(t,X(t), Y_{\text{rest}}),\nonumber\\
&\int_{[0,1]}dt\psi(h(t,X(t),Y_{\text{rest}})=\tilde{\omega}_{[0,1]}(h(\tilde{t},\tilde{x},\tilde{y}_{\text{rest}})),\forall h\in \p_{[0,1]},\nonumber\\
&\psi(h(X(0),Y_{\text{rest}})=\tilde{b}_{0}(h(\tilde{x}(0),\tilde{y}_{\text{rest}})),\forall h\in \p_0,\nonumber\\
&\psi(h(X(1),Y_{\text{rest}})=\tilde{b}_{1}(h(\tilde{x}(1),\tilde{y}_{\text{rest}})),\forall h\in \p_0.
\label{conds_theor_Markovian}
\end{align}
The remaining state constraints of problem (\ref{problem2}), as well as the preservation of the value of the objective function, follow from the last three conditions.

As already anticipated, to demonstrate the above we will closely follow the proof of Theorem \ref{fund_theo}. Given the functionals $\tilde{\omega}_{[0,1]},\tilde{b}_{0}, \tilde{b}_{1}$, let $(\tilde{\H}_{[0,1]},\omega,\tilde{\pi}_{[0,1]})$, $(\tilde{\H}_{0},b_0,\bar{\pi}_{0})$, $(\tilde{\H}_{1},b_1,\bar{\pi}_{1})$ respectively be the result of applying the GNS construction to each of them. We can assume (see the proof of Theorem \ref{fund_theo} for details) that the three Hilbert spaces are separable and that $\tilde{\pi}_{[0,1]}(\tilde{t})$ has spectrum $[0,1]$ and no eigenvalues. Define $\gamma$ to be the greatest norm of $\{\tilde{\pi}_{\alpha}(\bar{z}_i):i\}_{\alpha=0,1}\cup\{\tilde{\pi}_{[0,1]}(\tilde{z}_i)\}$ and let $0=t_1<\ldots<t_{\cal N}=1$ be such that, for each $k$, the associated ODE (\ref{maj_ODE}) at $t_k$ and initial value $\gamma$ is analytic in a ball $|t-t_k|\leq R_k$, with $R_k>|t_{k+1}-t_k|$ (the existence of such a finite set $\{t_k\}_k$ is guaranteed by Lemma \ref{time_part_lemma}).

To adapt to the notation of the proof of Theorem \ref{fund_theo}, we define the states $\omega_1:=b_0,\omega_{\cal N}:=b_1$; and the representations $\tilde{\pi}_1=\bar{\pi}_0$, $\tilde{\pi}_{\cal N}=\bar{\pi}_1$. We also define the states $\{\omega_k\}_{k=2}^{{\cal N}-1}$ and representations $\{\tilde{\pi}_k\}_{k=2}^{{\cal N}-1}$, as the result of applying the GNS construction to any limiting subsequence of the functionals
\begin{equation}
\tilde{\omega}^{(n)}_k(\bullet):=\omega(\delta_n(\tilde{\pi}_{[0,1]}(\tilde{t});t_k)\tilde{\pi}_{[0,1]}(\bullet)),
\end{equation}
with $\delta_n(t;t_k)$ defined as in (\ref{def_delta_n}). As shown in the proof of Theorem \ref{fund_theo}, $\{\omega_k\}_{k=2}^{{\cal N}-1}$ are positive, normalized states, $\|\tilde{\pi}_k(\tilde{z}_i)\|\leq \gamma$ and $\tilde{\pi}_k(\tilde{t})=t_k$, for $k=1,\ldots,{\cal N}-1$. The same holds for $\omega_1,\omega_{\cal N}$, if we define $\tilde{\pi}_1(\tilde{x}):=\tilde{\pi}_1(\tilde{x}(0))$, $\tilde{\pi}_1(\tilde{t}):=0$, $\tilde{\pi}_{\cal N}(\tilde{x}):=\tilde{\pi}_{\cal N}(\tilde{x}(1))$, $\tilde{\pi}_{\cal N}(\tilde{t})=1$.

The same argument that led to eq. (\ref{taylor}) also holds, but this time the formula only applies to polynomials $h\in\p_{[0,1]}$. Namely, we have that:
\begin{equation}
\omega\left(\Pi_j\tilde{\pi}_{[0,1]}(h)\right)=\int_{[t_j,t_{j+1}]}dt\omega_j\left(\sum_{k=0}^\infty\frac{\tilde{\pi}_j(\D^nh)}{k!}(t-t_j)^k\right)=\int_{[t_j,t_{j+1}]}dt\omega_j\left(h(t,\hat{X}^{\underline{j}}(t),\tilde{\pi}_j(y_{\text{rest}}))\right),
\label{taylor_Markovian}
\end{equation}
for all $h\in\p_{[0,1]}$, where $\{\Pi_k\}_k$ are the projectors:
\begin{equation}
\Pi_k:=\Theta(t_{k+1}\tilde{\pi}_{[0,1]}(\tilde{t}))-\Theta(t_{k}\tilde{\pi}_{[0,1]}(\tilde{t})),
\end{equation}
and $\hat{X}^{\underline{j}}$ is defined through eq. (\ref{piece_sol}), but replacing $\tilde{y}$ with $\tilde{y}_{\text{rest}}$.

Next, one would have to prove an analog of Proposition \ref{prop_relations}. First, for arbitrary $j=1,\ldots,{\cal N}-1$ and $k\leq j$, we define the functions:
\begin{align}
&s^1(\bar{X},\bar{Y}_{\text{rest}})=\bar{X}(0),\nonumber\\
&s^{j+1}(\bar{X},\bar{Y}_{\text{rest}})=s(t_{j+1},s^{j}((\bar{X},\bar{Y}_{\text{rest}})),\bar{Y}_{\text{rest}};t_j).
\end{align}
We can now formulate the modified proposition:
\begin{prop}
\label{prop_relations_Markovian}
For $j=1,\ldots,{\cal N}$, the series $s^{j}(\tilde{\pi}_1(\tilde{x}),\tilde{\pi}_1(\tilde{y}_{\text{rest}}))$ converges and its entries have norm bounded by $\gamma$. Moreover, for any polynomial $\mu\in\tilde{\p}_{[0,1]}$,
\begin{equation}
\omega_1(\mu(s^{j}(\tilde{\pi}_1(\tilde{x}),\tilde{\pi}_1(\tilde{y}_{\text{rest}})),\tilde{\pi}_1(\tilde{y}_{\text{rest}})))=\omega_j\circ\tilde{\pi}_j(\mu).
\label{rel_1_j_Markovian}
\end{equation}
\end{prop}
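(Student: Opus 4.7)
The plan is to mirror the proof of Proposition~\ref{prop_relations} almost verbatim, working by induction on $j$, with all modifications concentrated in how the three functionals $\tilde{b}_0,\tilde{\omega}_{[0,1]},\tilde{b}_1$ are glued together along the partition $0=t_1<\cdots<t_{\cal N}=1$ supplied by Lemma~\ref{time_part_lemma}.

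The first step is to establish the Markovian analog of eq.~(\ref{integral_conds_proj}), namely
\begin{equation*}
\omega\bigl(\Pi_k\,\tilde{\pi}_{[0,1]}(\D h)\bigr)=\omega_{k+1}\bigl(\tilde{\pi}_{k+1}(h)\bigr)-\omega_k\bigl(\tilde{\pi}_k(h)\bigr),\qquad k=1,\ldots,{\cal N}-1,
\end{equation*}
valid for all $h\in\tilde{\p}_{[0,1]}$. For the interior indices $k\in\{2,\ldots,{\cal N}-2\}$ the delta-function argument from the proof of Theorem~\ref{fund_theo} carries over without change, since both $\omega_k$ and $\omega_{k+1}$ arise as weak limits of $\tilde{\omega}_{[0,1]}(\delta_{n_s}(\tilde{t};t_\bullet)\,\cdot)$. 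For the boundary index $k=1$ (connecting $\omega_1=\tilde{b}_0$ to the weak-limit $\omega_2$) and, symmetrically, $k={\cal N}-1$ (connecting $\omega_{{\cal N}-1}$ to $\omega_{\cal N}=\tilde{b}_1$), I apply the differential constraint of problem~(\ref{problem_relax_relaxed}) to the test polynomial $d_n(\tilde{t};t_k)\,h$, expand $\D(d_n h)=-\delta_n h+d_n\,\D h$, and let $n\to\infty$. Since $d_n(0;t_k)\to 1$ and $d_n(1;t_k)\to 0$ for $2\le k\le{\cal N}-1$, the boundary term in the differential constraint collapses to exactly $\tilde{b}_0(h(0,\tilde{x}(0),\tilde{y}_{rest}))=\omega_1(\tilde{\pi}_1(h))$ under the conventions $\tilde{\pi}_1(\tilde{x}):=\tilde{\pi}_1(\tilde{x}(0))$ and $\tilde{\pi}_1(\tilde{t}):=0$; the $\tilde{b}_1$ endpoint is then closed by subtracting this from the unmodulated differential constraint. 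Subtracting consecutive $k$-identities yields the telescoping formula across the whole chain, thereby bridging the three a priori disconnected functionals via $\omega$.

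With this identity in hand, the derivation of eq.~(\ref{taylor_Markovian}) proceeds exactly as in the proof of Theorem~\ref{fund_theo}: iterate the chain-rule identity $\frac{(t_{j+1}-\tilde{t})^n}{n!}\D^n h=-\D\bigl(\frac{(t_{j+1}-\tilde{t})^{n+1}}{(n+1)!}\D^n h\bigr)+\frac{(t_{j+1}-\tilde{t})^{n+1}}{(n+1)!}\D^{n+1}h$, apply $\omega(\Pi_j\,\tilde{\pi}_{[0,1]}(\,\cdot\,))$, and pass to the Taylor limit using Lemma~\ref{convergence_lemma}. This yields the bridge identity $\omega_j\bigl(h(\hat{X}^{\underline{j}}(t_{j+1}),\tilde{\pi}_j(\tilde{y}_{rest}))\bigr)=\omega_{j+1}\circ\tilde{\pi}_{j+1}(h)$ for every $j=1,\ldots,{\cal N}-1$, including the boundaries. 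The induction in Proposition~\ref{prop_relations_Markovian} then runs exactly as before: the base case $j=1$ is immediate from $s^1=\bar{X}(0)$ together with $\tilde{\pi}_1(\tilde{x}):=\tilde{\pi}_1(\tilde{x}(0))$; the inductive step uses the truncated series $s(t_j,\cdot,\cdot;t_{j-1},n)$, applies the induction hypothesis at finite $n$, and passes to the limit $n\to\infty$ by Lemma~\ref{convergence_lemma}; the uniform $\gamma$-bound on the entries of $s^j(\tilde{\pi}_1(\tilde{x}),\tilde{\pi}_1(\tilde{y}_{rest}))$ is obtained by specializing $\mu=\nu(\gamma^2-\tilde{x}_i^2)\nu^\dagger$ with $\nu\in\tilde{\p}_0$ and invoking the cyclicity of $\omega_1$ on $\tilde{\pi}_1(\tilde{\p}_0)$.

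The main obstacle I anticipate lies in the boundary gluing at $k=1$ and $k={\cal N}-1$, since in the non-Markovian proof all intermediate states descended from a single $\omega$, whereas here $\omega_1,\omega_{\cal N}$ arise from independent functionals $\tilde{b}_0,\tilde{b}_1$ defined on a priori distinct algebras. In particular, the consistency constraint $\tilde{\omega}_{[0,1]}(\mu(\tilde{y}_{rest}))=\tilde{b}_0(\mu(\tilde{y}_{rest}))=\tilde{b}_1(\mu(\tilde{y}_{rest}))$ of problem~(\ref{problem_relax_relaxed}) is essential: without it polynomials in $\tilde{y}_{rest}$ alone would be evaluated incoherently across the three functionals, and the telescoping identity would not even be well-posed. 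Verifying that this consistency propagates through the three GNS constructions and survives all weak-limit passages requires care but introduces no genuinely new ingredient beyond what is used for Theorem~\ref{fund_theo}.
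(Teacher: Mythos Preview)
Your outline tracks the paper's proof closely up to the $\gamma$-bound step, but at that point there is a genuine gap. You propose to set $\mu=\nu(\gamma^2-\tilde{x}_i^2)\nu^\dagger$ with $\nu\in\tilde{\p}_0$ and then invoke cyclicity of $\omega_1$ on $\tilde{\pi}_1(\tilde{\p}_0)$. But this cannot work as stated: the identity~(\ref{rel_1_j_Markovian}) only accepts $\mu\in\tilde{\p}_{[0,1]}$, i.e.\ polynomials in $\tilde{t},\tilde{x},\tilde{y}_{rest}$, and the variable $\tilde{x}(0)$ that generates $\tilde{\p}_0$ is simply not available there. If instead you take $\nu=\nu(\tilde{x},\tilde{y}_{rest})\in\tilde{\p}_{[0,1]}$, then on the left-hand side of~(\ref{rel_1_j_Markovian}) every occurrence of $\tilde{x}$ is replaced by $s^{j}(\tilde{\pi}_1(\tilde{x}),\tilde{\pi}_1(\tilde{y}_{rest}))$, so the sandwich operators become $\nu\bigl(s^{j}(\ldots),\tilde{\pi}_1(\tilde{y}_{rest})\bigr)$ rather than $\nu\bigl(\tilde{\pi}_1(\tilde{x}(0)),\tilde{\pi}_1(\tilde{y}_{rest})\bigr)$. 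Cyclicity of $\omega_1$ is with respect to the latter algebra, not the former, and a priori these differ.

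This is precisely the obstruction that distinguishes the Markovian case from Proposition~\ref{prop_relations}, and the paper introduces Lemma~\ref{lemma:C_algebra_eq} specifically to overcome it. That lemma shows that for a solution $X(t)$ of the ODE which is analytic in a suitable strip, the $C^*$-algebra generated by $\{X(t),Y_{rest}\}$ is independent of $t$. Applied to the trajectory $\check{X}(t)=s(t,s^{j-1}(\ldots),\tilde{\pi}_1(\tilde{y}_{rest});t_{j-1})$ extended back to $t=0$ (using the induction hypothesis and the analyticity furnished by Lemma~\ref{convergence_lemma}), it gives that the algebra generated by $\{\check{X}(t_j),\tilde{\pi}_1(\tilde{y}_{rest})\}$ coincides with that generated by $\{\tilde{\pi}_1(\tilde{x}(0)),\tilde{\pi}_1(\tilde{y}_{rest})\}$. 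Only then does the positivity $\omega_1\bigl(\nu(\check{X}(t_j),\ldots)(\gamma^2-\check{X}_i(t_j)^2)\nu(\ldots)^\dagger\bigr)\ge 0$ for all $\nu$ combine with cyclicity to yield the operator inequality $\gamma^2-s^j_i(\ldots)^2\ge 0$. Your outline omits this step entirely; without it the induction cannot close.
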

\begin{proof}
The proposition holds for $j=1$ trivially. Assuming that the proposition holds for $j-1$, we prove that it also holds for $j$. Following mutatis mutandis the proof of Proposition \ref{prop_relations}, one shows that eq. (\ref{rel_1_j_Markovian}) holds for $j$. To complete the induction step, we must prove that $s^j(\tilde{\pi}_1(\tilde{x}),\tilde{\pi}_1(\tilde{y}_{\text{rest}}))$ is norm-bounded by $\gamma$. To do that, we set 
\begin{equation}
\mu(\tilde{x},\tilde{y}):= \nu(\tilde{x},\tilde{y}_{\text{rest}})\left(\gamma^2-\tilde{x}^2_i\right)\nu(\tilde{x},\tilde{y}_{\text{rest}})^\dagger
\end{equation}
in eq. (\ref{rel_1_j_Markovian}). The result is:
\begin{align}
&\omega_1\left(\nu(\check{X}(t_{j}),\tilde{\pi}_1(\tilde{y}_{\text{rest}}))\left(\gamma^2-\check{X}_i(t_{j})\right)\nu(\check{X}(t_{j}),\tilde{\pi}_1(\tilde{y}_{\text{rest}}))^\dagger\right)\nonumber\\
&=\omega_j\circ\tilde{\pi}_j\left(\nu(\tilde{x},\tilde{y}_{\text{rest}})\left(\gamma^2-\tilde{x}^2_i\right)\nu(\tilde{x},\tilde{y}_{\text{rest}})^\dagger\right)\geq 0,
\label{posi_Markovian}
\end{align}
where, for $t\in[t_{j-1},t_j]$, $\check{X}(t)$ is defined as:
\begin{equation}
\check{X}(t)=s(t,s^{j-1}(\tilde{\pi}_1(\tilde{x}), \tilde{\pi}_1(\tilde{y}_{\text{rest}})),\tilde{\pi}_1(\tilde{y})).
\end{equation}
Note that, by Lemma \ref{convergence_lemma}, the series above converges and $\check{X}(t)$ is the solution of the equation
\begin{align}
&\frac{dX(t)}{dt}=g(t,X(t),\tilde{\pi}_1(\tilde{y}_{\text{rest}})),\nonumber\\
&X(t_{j-1})=s^{j-1}(\tilde{\pi}_1(\tilde{x}), \tilde{\pi}_1(\tilde{y}_{\text{rest}})).
\end{align}
By the induction hypothesis and Lemma \ref{convergence_lemma}, $\check{X}(t)$ can be extended to $t\in[0,t_j]$ so that $\check{X}(t)$ satisfies the equation above for $t\in[0,t_j]$ with $\check{X}(0)=\tilde{\pi}_1(\tilde{x})$. Moreover, $\check{X}(t)$ is analytic in ${\cal Z}=\bigcup_{k=0}^j\{|t-t_k|\leq R_k\}$, which implies that there exists $R>0$ such that $\{|\mbox{Im}(t)|\leq R,-R\leq\mbox{Re}(t)\leq t_j+R\}\subset {\cal Z}$. The conditions for Lemma \ref{lemma:C_algebra_eq} are thus fulfilled, and so the $C^*$-algebras generated by $\{\check{X}(t),\tilde{\pi}_1(\tilde{y}_{\text{rest}})\}$ and $\{\tilde{\pi}_1(\tilde{x}),\tilde{\pi}_1(\tilde{y}_{\text{rest}})\}$ coincide. Since eq. (\ref{posi_Markovian}) holds for all $\nu$, we have that
\begin{align}
&\omega_1\left(\nu(\tilde{\pi}_1(\tilde{x}),\tilde{\pi}_1(\tilde{y}_{\text{rest}}))\left(\gamma^2-\check{X}_i(t_{j})\right)\nu(\tilde{\pi}_1(\tilde{x}),\tilde{\pi}_1(\tilde{y}_{\text{rest}}))^\dagger\right)\geq 0.
\end{align}
Now, $\omega_1$, the result of a GNS construction, is a cyclic state, so the above implies that
\begin{equation}
\gamma^2-s_i^j(\tilde{\pi}_1(\tilde{x}), \tilde{\pi}_1(\tilde{y}_{\text{rest}}))=\gamma^2-\check{X}_i(t_{j})\geq 0,
\end{equation}
that is, $\|s_i^j(\tilde{\pi}_1(\tilde{x}),\tilde{\pi}_1(\tilde{y}_{\text{rest}}))\|\leq\gamma$.

\end{proof}

Next, we rename $\tilde{\H}_1\to\H$ and $\omega_1\to\psi$, and define the representation $\pi:\p\to B(\H)$ through 
\begin{align}
&\pi(y_{\text{rest}}):=\tilde{\pi}_1(\tilde{y}_{\text{rest}}),\nonumber\\
&\pi(x(t)):=\hat{X}(t),\nonumber\\
\end{align}
where
\begin{equation}
\hat{X}(t):=s(t,s^{j}(\pi(x(0)),\pi(y_{\text{rest}})),\pi(y)),\mbox{ for } t_j\leq t\leq t_{j+1}.
\end{equation}
By Proposition \ref{prop_relations_Markovian}, this operator is well-defined for $t\in [0,1]$ (because all involved series converge). Moreover, by Lemma \ref{convergence_lemma}, it is a solution of the differential equation
\begin{equation}
\frac{d\hat{X}(t)}{dt}=g(t,\hat{X}(t),\pi(y_{\text{rest}})),
\end{equation}
which admits an analytic extension to ${\cal Z}=\bigcup_{k=0}^j\{|t-t_k|\leq R_k\}\supset \{|\mbox{Im}(t)|\leq R,-R\leq\mbox{Re}(t)\leq t_j+R\}$, for some $R>0$. By Lemma \ref{lemma:C_algebra_eq}, we thus have that the $C^*$-algebras $\mathbb{A}(t)$ generated by $X(t),Y{\mbox{rest}}$ satisfy $\mathbb{A}(t)=\mathbb{A}(0)$, for all $t\in[0,1]$.

Following the proof of Theorem \ref{fund_theo}, we derive the analog of eq. (\ref{connection_psi_omega}):
\begin{align}
&\int_{[0,1]}dt\psi(h(t,\pi(x(t)),\pi(y_{\text{rest}})))=\omega(\tilde{\pi}_{[0,1]}\circ h(\tilde{t},\tilde{x},\tilde{y}_{\text{rest}})).
\label{connection_psi_omega_Markovian}
\end{align}
Similarly,
\begin{align}
&\psi(h(\pi(x(0)),\pi(y_{\text{rest}})))=\omega_1(h(\tilde{\pi}_1(\tilde{x}(0)),\tilde{\pi}_1(\tilde{y}_{\text{rest}}))),\nonumber\\
&\psi(h(\pi(x(1)),\pi(y_{\text{rest}})))=\omega_N(h(\tilde{\pi}_N(\tilde{x}(1)),\tilde{\pi}_N(\tilde{y}_{\text{rest}}))),
\end{align}
where the two identities follow from the definition of $(\H,\psi,\pi)$ and Proposition \ref{prop_relations_Markovian}. This allows us to prove that $(\H,\psi,\pi)$ satisfies the last three lines of eq. (\ref{conds_theor_Markovian}).

For the operator constraints, the first three lines of eq. (\ref{conds_theor_Markovian}), let $\nu(t)$ be a polynomial, non-negative in $t\in [0,1]$. From eq. (\ref{connection_psi_omega_Markovian}), we have that, for any polynomial $h$,
\begin{align}
&\int_{[0,1]}dt\psi(\nu(t)h(\pi(x(t)),\pi(y_{\text{rest}}))p^{[0,1]}_k(t,\pi(x(t)),\pi(y_{\text{rest}}))h(\pi(x(t)),\pi(y_{\text{rest}}))^\dagger)\nonumber\\
&=\omega\circ\tilde{\pi}_{[0,1]}\left(\nu(\tilde{t})h(\tilde{x},\tilde{y}_{\text{rest}}) p^{[0,1]}_k(\tilde{t},\tilde{x},\tilde{y})h(\tilde{x},\tilde{y}_{\text{rest}})^\dagger\right)\geq 0.
\end{align}
Since this holds for all positive polynomials $\nu(t)$, it follows that, for all $t\in[0,1]$,
\begin{equation}
\psi(h(\pi(x(t),\pi(y_{\text{rest}}))p^{[0,1]}_k(t,\pi(x(t)),\pi(y_{\text{rest}}))h(\pi(x(t),\pi(y_{\text{rest}}))^\dagger)\geq 0.
\end{equation}
As this relation holds for all $h$ and $\mathbb{A}(t)=\mathbb{A}(0)$, we have that
\begin{equation}
\psi\circ\pi(h(t,x(0),y_{\text{rest}})p^{[0,1]}_k(t,x(t),y_{\text{rest}})h(t,x(0),y_{\text{rest}})^\dagger)\geq 0,
\end{equation}
for all polynomials $h$. By cyclicity of $\psi$, this implies that
\begin{equation}
\pi(p^{[0,1]}_k)\geq 0.
\end{equation}
The operator constraints
\begin{equation}
\pi(p^{\alpha}_k)\geq 0,
\end{equation}
for $\alpha\in\{0,1\}$ are proven similarly.
It follows that $(\H,\psi,\pi)$ satisfies eq. (\ref{conds_theor_Markovian}). The theorem is proven.

\end{proof}

\section{Spin systems in the thermodynamic limit}
\label{app:thermo}
In the following lines, we explain why Theorem \ref{theo:spins} holds. Our starting point is the net $(\tilde{\omega}^\kappa)_{\kappa}$, where $\tilde{\omega}^\kappa:=(\tilde{\omega}^\kappa_{[0,1]}, \tilde{b}_0^\kappa, \tilde{b}_1^\kappa)$ denotes a solution of the SDP (\ref{Markovian_NPO_rel_TI}), with $\kappa=(k_\sigma,k_t,n)$. We need to show that this net implies the existence of a Hilbert space $\H$, operators $\{\sigma_a^{(k)}:a=1,2,3\}_{k\in\Z}$ and a TI state $\psi$ such that conditions (\ref{obj_spins}) and (\ref{diff_eq_spins}) hold.

The proof is essentially identical to that of Theorem \ref{theorem:Markovian}, so we just focus on solving the theoretical hurdles that appear when one tries to adapt the methods of Appendix \ref{app:Markovian} to systems with an infinite number of operator variables.

There are four such hurdles:

\begin{itemize}
    \item The Archimedean condition is not independently satisfied, because $\tilde{\sigma}$, $\tilde{\sigma}(0)$, $\tilde{\sigma}(1)$ represent infinite sets of variables. This is not a real problem: it suffices when, for each non-commuting variable $\tilde{x}_i$, we have that the polynomial
    \begin{equation}
    K_i-\tilde{x}_i^2
    \label{almost_arch}
    \end{equation}
    is SOS (with respect to the set of operator constraints of the set of variables to which $\tilde{x}_i$ belongs) for some $K_i\in\R^+$. This is the case in our problem, since
    \begin{equation}
    1-(\tilde{\sigma}_a^{(k)})^2, 1-\tilde{\sigma}_a^{(k)}(0)^2, 1-\tilde{\sigma}_a^{(k)}(1)^2
    \end{equation}
    are operator identities and
    \begin{equation}
    1-\tilde{t}^2=\sqrt{2}(\tilde{t}-\tilde{t}^2)\sqrt{2}+(1-\tilde{t})^2.
    \end{equation}
    Condition (\ref{almost_arch}) is, indeed, enough to bound the values of $\tilde{\omega}^{\kappa}$, see the proof of \cite[Theorem 1]{NPO}. Like their finite-variable counterparts, these functionals are defined by their (bounded) values on a countable set of monomials. Thus, one can invoke the Banach-Alaoglu theorem as usual and deduce the existence of a subnet $(h(\kappa'))_{\kappa'}$ such that the limit $\lim_{\kappa'\to\infty}\tilde{\omega}^{h(\kappa')}$ exists.

    \item  The limiting functionals $\tilde{\omega}_{[0,1]},\tilde{\omega}_{0},\tilde{\omega}_1$ can only evaluate polynomials spanned by products of operator variables corresponding to the qubits ($1,2,\ldots$) (and also products of $\tilde{t}$, when we deal with $\tilde{\omega}_{[0,1]}$). To arrive at translation-invariant functionals, respectively defined for all polynomials in $\tilde{\p}_{[0,1]},\tilde{\p}_{0}, \tilde{\p}_{1}$, we thus extend the original functionals in this form:
    \begin{equation}
    \tilde{\omega}_\alpha(p):=\tilde{\omega}_\alpha(\t^{k(p)}(p)),    
    \end{equation}
    where $k(p)$ is any natural number $k$ such that $\t^{k}(p)$ only contains operators pertaining qubits $1,2,\ldots$. This definition is consistent due to LTI.

    \item Following the proof of Theorem \ref{theorem:Markovian} would require applying the GNS construction to the functionals $\tilde{\omega}_{[0,1]},\tilde{\omega}_{0}, \tilde{\omega}_{1}$ on polynomials with infinitely many variables. This is not problematic. Moreover, since the three sets of variables are countable, the constructed Hilbert space will be separable in each case.

    \item Define the polynomial 
    \begin{equation}
    g_a(\tilde{\sigma}^{(k-1)},\tilde{\sigma}^{(k)},\tilde{\sigma}^{(k+1)}):=i\tau[H(\tilde{\sigma}),\tilde{\sigma}_a^{(k)}].
    \end{equation}
    The proof of Theorem \ref{theorem:Markovian} relies on Lemma \ref{time_part_lemma}, which in turn requires that the (classical) ODE
    \begin{align}
    &\frac{d\xi_a^{(k)}(t)}{dt}= g^+_a(\xi^{(k-1)}(t),\xi^{(k)}(t),\xi^{(k+1)}(t)),\nonumber\\
    &\xi_a^{(k)}(\tau)=\gamma>0,
    \label{ODE_infty}
    \end{align}
    admits an analytic solution for some open ball around $\tau=0$ in the complex plane. In Appendix \ref{app:proof_theo}, this was dealt with by invoking the Cauchy-Kovalevskaya theorem \cite{book_diff_eqs}. However, this theorem does not apply here, since the vector of variables $(\xi_a^{(k)})_{a,k}$ is infinite.

    Nonetheless, we next show that eq. (\ref{ODE_infty}) does admit an analytic solution around $t=0$. Consider the simpler ODE:
    \begin{align}
    &\frac{d\xi_a(t)}{dt}= g^+_a(\xi(t),\xi(t),\xi(t)),a=1,2,3,\nonumber\\
    &\xi_a(0)=\gamma>0,a=1,2,3.
    \label{ODE_infty_mod}
    \end{align}
    This system of ODEs is finite dimensional, so Cauchy-Kovalevskaya applies: for some $R>0$, there exists a solution $\hat{\xi}(t)$ of eq. (\ref{ODE_infty_mod}), analytic in $|t|\leq R$. Define $\hat{\xi}_a^{(k)}(t):=\hat{\xi}_a(t)$. Then, $(\hat{\xi}_a^{(k)}(t))_{a,k}$ is an analytic solution of (\ref{ODE_infty}); note, indeed, that $\hat{\xi}_a^{(k)}(0)=\hat{\xi}_a(0)=\gamma$ and
    \begin{equation}
    \frac{d\hat{\xi}^{(k)}_a(t)}{dt}=\frac{d\hat{\xi}_a(t)}{dt} =g^+_a(\hat{\xi}(t),\hat{\xi}(t),\hat{\xi}(t))=g^+_a(\hat{\xi}^{(k-1)}(t),\hat{\xi}^{(k)}(t),\hat{\xi}^{(k+1}(t)).
    \end{equation}
    From this point on, one can follow the proof of Theorem \ref{theorem:Markovian} without further obstructions.
    
\end{itemize}

\end{appendix}

\end{document}